%
%


\documentclass[journal]{IEEEtran}

\usepackage{xcolor,soul,framed} 
\usepackage{bm}
\usepackage[font=small,labelfont=bf]{caption}
\colorlet{shadecolor}{yellow}
\usepackage[pdftex]{graphicx}
\graphicspath{{../pdf/}{../jpeg/}}
\DeclareGraphicsExtensions{.pdf,.jpeg,.png}
\usepackage{import}
\usepackage{amsmath}
\usepackage{amssymb}
\usepackage[linktocpage=true]{hyperref}
\usepackage{hyperref}
\usepackage{multicol,lipsum}
\usepackage{array}
\usepackage{xcolor}
\usepackage{makecell}
\usepackage{tikz}
\usepackage{blkarray}
\usepackage{amsmath,amsfonts,amsthm,amssymb}
\usepackage{mathtools}
\usepackage{cuted}
\usepackage{cite}
\usepackage{cleveref}
\usepackage{amssymb}
\usepackage{cases}
\usepackage{empheq}
\usepackage{mdframed}

\newcommand{\tr}{\textcolor{red}}

\newcommand{\mb}{\mathbf}
\newcommand{\mbb}{\mathbb}
\newcommand{\N}{\mathsf{N}}
\newcommand{\x}{\bm{x}}
\newcommand{\1}{\mathtt{A}}
\newcommand{\2}{\mathtt{B}}

\newcommand{\p}{\mathsf{p}}
\renewcommand{\pi}{\eta}
\newcommand{\y}{\bm{y}}

\newcommand{\V}{\mathcal{V}}
\newcommand{\mi}{\mathit}
\newcommand{\X}{\mathcal{X}}
\def\mathbi#1{\textbf{\em #1}}
\newtheorem{theorem}{Theorem}
\newtheorem{lemma}{Lemma}
\newtheorem{assumption}{Assumption}
\newtheorem{remark}{Remark}
\newtheorem{cor}{Corollary}
\newtheorem{definition}{Definition}
\newtheorem{prop}{Proposition}
\newtheorem{example}{Example}

\newtheorem{revisit}{Revisiting Example}

\newtheorem{lem}{Lemma}

\newtheorem{appendixCorollary}{Corollary}

\newtheorem{thm}{Theorem}



\begin{document}
\bstctlcite{IEEEexample:BSTcontrol}
       \title{From Discrete to Continuous {Binary} Best-Response Dynamics: Discrete Fluctuations {Almost Surely Vanish with} Population Size}
   \author{
       Azadeh Aghaeeyan 
       and~Pouria~Ramazi

  \thanks{
  This  paper {was presented in part at} the American Control Conference, Toronto, Canada,  July  8-12,  2024. 
  This work was completed while A. Aghaeeyan and P. Ramazi were with the Department of Mathematics and Statistics, Brock University, St. Catharines, ON L2S 3A1, Canada (e-mail: $\{$a.aghaeeyan, p.ramazi$\}$@gmail.com).
   This work has been submitted to the IEEE for possible publication. Copyright may be
transferred without notice.
}%
 }


\maketitle

\begin{abstract}
In binary decision-making, individuals often go for a common or a rare action.
In the framework of evolutionary game theory, the best-response update rule can be used to model this dichotomy. 
Those who prefer the common action are called \emph{coordinators} or \emph{conformists}, and those who prefer the rare one are called \emph{anticoordinators} or \emph{nonconformists}.
A finite mixed population of the two types may undergo perpetual fluctuations, the characterization of which appears to be challenging.
It is particularly unknown whether the fluctuations persist as the population size grows. 
To fill this gap, we approximate the discrete population dynamics of coordinators and anticoordinators with the associated mean dynamics in the form of differential inclusions. 
We show that the family of state sequences of the discrete dynamics with increasing population sizes forms a generalized stochastic approximation process for the differential inclusion. 
On the other hand, we show that the differential inclusion always converges to an equilibrium. 
This implies that the reported perpetual fluctuations in the discrete dynamics of coordinators and anticoordinators 
almost surely vanish with population size. 
The results motivate analyzing the often simpler mean dynamics, which partly reveal the asymptotic behavior of the discrete dynamics. 

\end{abstract}

\begin{IEEEkeywords}
decision-making dynamics, best-response, mean-dynamics, evolutionary game theory
\end{IEEEkeywords}

%
\IEEEpeerreviewmaketitle


\section{Introduction}
Humans face a variety of repetitive decision-making problems, such as whether to follow the fashion trends, get a flu shot, or sign a petition \cite{cao2013fashion,blanton2001introduction}.
In such two-option decision-making problems, individuals are often either \emph{coordinators}, also known as \emph{conformists}, those who adopt a decision whenever it has already been adopted by a particular population proportion of decision-makers,
or \emph{anticoordinators}, also known as \emph{nonconformists}, those who go for a decision whenever it has been rejected by a particular population proportion \cite{coordinationandanticoordination, arditt2}.

In the framework of evolutionary game theory, the so-called  \emph{best-response} update rule can capture the decision-making processes of coordinators and anticoordinators by assuming affine payoff functions.
 An increasing payoff function with respect to the commonality of the decisions makes the best-response update rule equivalent to the \emph{linear threshold model}, where individuals are coordinators with possibly unique thresholds \cite{granovetter1978threshold}.
 A decreasing payoff function, on the other hand, models the anticoordinators.

Analyzing the outcome of these decision-making processes requires accounting for each individual's actions, and consequently,
 a large body of research studied the steady state behavior of finite populations of best-responders.
The problem has been investigated in various settings, ranging from well-mixed \cite{zeroDeterminantS} to structured populations \cite{ZHU2023110707, arefizadeh2023robustness,arditt2},
 and from time-invariant to time-varying thresholds \cite{arditti2024robust}.
Semi-tensor products were also utilized to {analyze networked evolutionary games}
 \cite{guo2013algebraic, cheng2014finite} { and  logical dynamic games \cite{boolean1,10505780} }.

 A finite population of heterogeneous coordinators, who differ in perceiving the commonality, admits equilibria \cite{vanelli2020games} and equilibrates under asynchronous activation sequence  \cite{ramazi2020convergence}. 
A finite population of heterogeneous anticoordinators either equilibrates or converges to a two-state cycle \cite{ramazi2017asynchronous, grabisch2020anti}.
As for finite populations of coordinating and anticoordinating agents, it was shown that the dynamics may undergo perpetual fluctuations  \cite{grabisch2019model,grabisch2020anti, roohi}.
The exact form and the conditions for the existence of the fluctuations appear to be an open problem. 

Researchers get around the challenges arising from the analysis of finite population dynamics by exploiting the associated deterministic \emph{mean dynamics} \cite{sandholm2010population, nowak2019homogeneous}, where the population is assumed to be infinitely large.
In this regard, most studies investigated the replicator dynamics which are the mean dynamics of the other popular update rule, \emph{imitation} \cite{comoImitation, replicator, replicator2, ramazireplicator}.
Fewer studies considered the best-response mean dynamics, which are differential inclusions \cite{hofbauer1995stability, golman2010basins,bestresponsePotential,berger2007two,theodorakopoulos2012selfish}.
Some studies approximated the evolution of decisions in structured populations with ordinary differential equations \cite{comoApproximation,ravazzi2023asynchronous,peidynamic}.

Although analysis of the mean dynamics is generally more straightforward compared to that of finite populations, the potential discrepancy between these two may question the validity of the approximation.
Hence, much effort has been devoted to connecting the behavior of finite populations with the associated mean dynamics as the population size grows \cite{benaim2005stochastic,benaim1998recursive, benaim2003deterministic, benaim2009mean}. 
The results on infinite horizon behavior connect the {Birkhoff center} of the mean dynamics with the support of the stationary measures of the Markov chain corresponding to the finite population's dynamics \cite{benaim1998recursive, roth2013stochastic}.

How can these results, which link the mean dynamics to finite population dynamics, be applied to the heterogeneous populations of coordinators and anticoordinators?
To the best of our knowledge, no study has investigated the asymptotic behavior of infinite heterogeneous populations of all coordinators, all anticoordinators, or a mixture of both.
Even if there were such results, could they shed light on the discrete population dynamics?  

We provide the answer in this paper,
 namely by using the deterministic differential inclusions as an approximation tool to reveal the asymptotic behavior of the discrete population dynamics as the population size approaches infinity.
First, through an intuitive example, we show that the perpetual fluctuations do not scale as the population size grows.
Second, we write the population dynamics as a Markov chain and obtain the associated continuous-time population dynamics--an upper semicontinuous differential inclusion.
Third, we show that the collection of population dynamics Markov chains defines \emph{generalized stochastic approximation processes} for the continuous-time population dynamics \cite{roth2013stochastic}.
This allows us to apply the results in \cite{roth2013stochastic} to approximate the asymptotic behavior of the discrete population dynamics by finding the Birkhoff center of the continuous-time population dynamics, which we do in the last part of the paper.

Our contribution is fourfold:
\emph{i)}
We show that in the continuous-time population dynamics, the heterogeneous population of all coordinators may admit two types of equilibrium points, clean-cut and ruffled coordinator-driven, where clean-cut equilibria are asymptotically stable and ruffled are unstable--\textbf{\Cref{lem:eqMixedAbstract}} and \textbf{\Cref{thm:ROA}}.
On the contrary, the heterogeneous population of all anticoordinators admits only one globally asymptotically stable equilibrium point that could be either clean-cut or ruffled anti-coordinator driven--{\textbf{\Cref{lem:eqMixedAbstract}} and} \textbf{\Cref{lem: stabilityOfType01}}.
These results are in line with the existing results on the steady-state behavior of finite heterogeneous populations of all coordinators \cite{ramazi2020convergence} and all anticoordinators \cite{ramazi2017asynchronous}.
\emph{ii)}
We show that in the continuous-time population dynamics, the mixed heterogeneous population of coordinators and anticoordinators may admit all three types of equilibria, clean-cut, ruffled anticoordinator-driven, and ruffled coordinator-driven, where clean-cut and ruffled anticoordinator-driven equilibria are asymptotically stable, the other type is unstable, and the dynamics always converge to one of the equilibria--\textbf{\Cref{lem:eqMixedAbstract}} and \textbf{\Cref{thm:ROA}}.
\emph{iii)}
Consequently, we show that the reported perpetual fluctuations in the proportion of $\1$-players \cite{roohi} almost surely vanish with the population size --\textbf{\Cref{thm:2}} and \textbf{\Cref{cor_fluctuationsDoNotScaleWithN_discretePopulationDynamics_2}}.
\emph{iv)}
Out of the analysis of the continuous-time population dynamics, insightful hints on the asymptotic behavior of the finite populations were obtained, such as having the same equilibria and similar long-term behavior.
 Hence, in general, for other population dynamics, it may prove useful to study the behavior of the continuous-time population dynamics, which is usually more straightforward, prior to the precise analysis of the finite population dynamics.

\subsection*{Notations}
The sets of real, nonnegative real, integer, nonnegative integer, and positive integer numbers are respectively shown by $\mathbb{R}$, $\mathbb{R}_{\geq 0}$, $\mathbb{Z}$, $\mathbb{Z}_{\geq 0}$, and $\mathbb{N}$.
{With an abuse of notation, for positive integers  $n$ and $m$, define $\frac{1}{m}\mathbb{Z}^n = \{\frac{1}{m}\y\mid\y \in \mathbb{Z}^n\}$.}
For each $i \in \mathbb{N}$, $[i]$ denotes $\{1,2,\ldots,i\}$.
Boldface letters refer to vectors.
The notation $\bm 1$ refers to a vector of appropriate size with all elements equal to $1$.
The $i^{\text{th}}$ element of the vector $\mb{q}$ is indicated by ${q}_i$.
The calligraphic font $\mathcal{X}$ represents a set.
A sequence of variables $x_0, x_1, x_2,\ldots$ is represented by $\langle x_k\rangle{_{k=0}^\infty}$.
The Euclidean norm of a vector $\x$ is denoted by $\vert \x \vert$.
{By the notations $[a_1,b_1] - [a_2,b_2]$ and $[a_1,b_1] - {c}$ we mean $[a_1 - a_2, b_1 - b_2]$ and $[a_1-c,b_1-c]$, respectively. }
The notation $\text{cl}(\mathcal{X})$ denotes the closure of the set $\mathcal{X}$.
A set-valued map ${\bm {\mathcal{V}}}(\x)$ from $\mathbb{R}^n$ to $\mathbb{R}^n$ is denoted by notation ${\bm {\mathcal{V}}}: \mathbb{R}^n\rightrightarrows \mathbb{R}^n$.
\label{ch:intro}

\section{Problem Formulation} \label{sec:problemFormulation}
We consider a well-mixed population of $\N$ agents, labeled by $1,2,\ldots,\N$, choosing repeatedly between two strategies (or actions) $\1$ and $\2$ over a discrete time sequence $ t \in \frac{1}{\N} \mathbb{Z}_{\geq 0}$.
The time sequence is indexed by $k$ where $k =  \N t $ (i.e., the index of time $\frac{k}{\N}$ is $k$).
Each agent $i \in [\N]$ has a \emph{threshold} $\tau(i)\in (0,1)$ and is either a \emph{coordinator} (conformist) or an \emph{anticoordinator} (a nonconformist).
A coordinator $i$ (resp. an anticoordinator $i$) tends to choose strategy $\1$ whenever the population proportion of strategy-$\1$ players, excluding herself,  denoted $x^{\N}_{-i}\in[0,1)$, does not fall short of (resp. does not exceed) her threshold, i.e., $x^{\N}_{-i} \geq \tau(i)$ (resp. $\leq \tau(i)$). The superscript $\N$ in $x^\N_{-i}$ denotes the population size.

At each time index $k$, exactly one agent chosen uniformly at random becomes active and receives the chance to switch to her {{preferred}} strategy.
At time index $k+1$, the strategy of agent $i$ active at time index $k$ will be
\begin{align}
    &\mathtt{s}_{i}(k+1)  
    = \begin{cases}
    \1& \text{if } x^{\N}_{-i}(k) \geq {\tau(i)},   \\
    \2
    & \text{otherwise},
    \end{cases}
    \label{eq: scor}
\end{align}
if she is a coordinator and otherwise, if she is an anticoordinator,
\begin{align}
    &\mathtt{s}_{i}(k+1)  
    = \begin{cases}
    \1& \text{if } x^{\N}_{-i}(k) \leq {\tau(i)},   \\
    \2
    & \text{otherwise},
    \end{cases}
    \label{eq: santi}
\end{align}
where $\mathtt{s}_i$ is the strategy of agent $i$.

Coordinators (resp. anticoordinators) who have the same threshold build up a subpopulation, and there are altogether 
$\p'$ (resp. $\p$) subpopulations of coordinators (resp. anticoordinators).
Coordinating (resp. anticoordinating) subpopulations are labeled in the ascending (resp. descending) order of their thresholds by $1,\ldots,\p'$ (resp. $\p$), that is, $\tau'_1 < \tau'_2 < \ldots < \tau'_{\p'}$ (resp. $\tau_1 > \tau_2 > \ldots > \tau_{\p}$), where $\tau'_p$ (resp. $\tau_p$) is the threshold of the $p^{\text{th}}$ coordinating (resp. anticoordinating) subpopulation.  
So the heterogeneity of the population is captured by the distribution of population proportions over the total $\p + \p'$ subpopulations as
$
\bm{\rho} = ({\rho}_1, \ldots, {\rho}_{\p}, {\rho}'_{\p'}, \ldots, {\rho}'_1)^\top
$
where ${\rho}'_j$ (resp. ${\rho}_j$) represents the ratio of the number of agents in
coordinating (resp. anticoordinating)
subpopulation $j$  to the  population size $\N$.
For convenience of vector indexing, for $j\geq \p+1$, we define
 ${\rho}_j = {\rho}'_{\p + \p' +1-j}$. 
 Note that there is no subpopulation of size zero, meaning that $\min_j \{ \rho_j\} \geq \frac{1}{\N}$.
\begin{remark}  \label{remark_equivalenceofBestresponseAndThresholdDynamics}
    According to the best-response update rule, the active agent chooses the strategy which maximizes her current payoff, i.e, $s_i(k+1) = \1$ if $u_i^{\1}(k) \geq u_i^{\2}(k)$ and $\mathtt{s}_i(k+1) = \2$, otherwise where $u_i^{\mathtt{j}}(k)$ is the payoff to agent $i$ from choosing strategy $\mathtt{j} \in \{\1,\2\}$ at time index $k$.
    If the agents' payoffs are affine functions of the population proportion of strategy-$\1$ players,
    update rules \eqref{eq: scor} and \eqref{eq: santi} become equivalent to the best-response update rule. See \cite{ramazi2016networks} for more details.
\end{remark}

At the population level, the collective behavior of the agents in each subpopulation is of interest rather than that of each single agent \textit{per se}.
As a result, we define the \emph{population state} as the distribution of $\1$-players over the $(\p + \p')$ subpopulations at each index $k$ and denote it by
\begin{equation*}
    \x^{\N} 
    = (\mi{x}^{\N}_1,
    \ldots, 
    \mi{x}^{\N}_{{\p}},
    \mi{x}'^{\N}_{\p'},
    \ldots,
    \mi{x}'^{\N}_{{1}})^\top,
\end{equation*}
where  $\mi{x}'^{\N}_p$ (resp. $\mi{x}^{\N}_p$) represents the proportion of $\1$-players who belong to coordinating (resp. anticoordinating) subpopulation $p$.
We again use the notation
$x^{\N}_{p} = x'^{\N}_{\p + \p' + 1-p}$ for $p > \p$.
By defining the vector set
$$\bm{\mathcal{X}}_{ss} = \prod_{j=1}^{{\p}}[0,{\rho}_j] \times\prod_{j=1}^{{\p'}}[0,\rho'_{\p'-j+1}],$$
the resulting state space then equals 
$\bm{\mathcal{X}}_{ss} \cap {\frac{1}{\N}}\mathbb{Z}^{{\p + \p'}}$.

The evolution of the state defines the \emph{population dynamics} which are governed by update rules \eqref{eq: scor}, \eqref{eq: santi}, and the activation sequence of the agents. 
More specifically, the activation sequence is generated by the sequence of  mutually independent random variables $\langle {A}_k\rangle{_{k=0}^\infty}$ where ${A}_k$ is the agent active at time index $k$ and takes values in $[\N]$ with the distribution
$\mathbb{P}[{A}_k = i] =
    \frac{1}{\N}$. 
A realization of  an activation sequence $\langle {A}_k \rangle_{k=0}^{\infty}$ is \emph{persistent} \cite{ramazi2016networks}, if for every agent $i$ and every $K>0$ there exists a time index $T>K$ such that $A_T=i$.

As shown in \Cref{lem:discretedynamics}, the above definition of population dynamics is equivalent to the following compact version.
\begin{definition} \label{def:discreteDynamics}
    The \textbf{\emph{discrete {best-response} population dynamics}} are defined by the following discrete-time stochastic equation for $k=0,1,\ldots$
    \begin{align}\label{populationDynamicsDiscrete}
        \x^{\N}(k+1)  \!&=\! \x^{\N}(k) \!+\! \frac{1}{\N} \bigg(\!{S}_k-{s^*}\big({{P}_k},\x^{\N}{(k)}, S_k \big)\!\bigg)\mb{e}_{{P}_k},
    \end{align}
       where ${P}_k$ and ${S}_k$ are scalar random variables with distributions $\mathbb{P}[{P}_k  = p] = \rho_p$, $\mathbb{P} [{S}_k  = 1 \vert {P}_k  = p] = x^{\N}_p/\rho_p$,  {$\mathbb{P} [{S}_k  = 2 \vert {P}_k  = p] = 1 -x^{\N}_p/\rho_p$}, for $p \in [\p + \p']$, 
       $\mb{e}_{{P}_k}$ is the 
       ${P}_k$th standard basis vector in $\mathbb{R}^{\p + \p'}$, 
        and
     ${s^*}(p,\x^{\N},s)$ is a function where  ${s^*}(p,\x^{\N},1)$ returns $1$ (resp. $2$) if $\1$ (resp. $\2$) is the preferred strategy of an  $\1$-playing agent and ${s^*}(p,\x^{\N},2)$ returns $1$ (resp. $2$) if $\1$ (resp. $\2$) is the preferred strategy of a $\2$-playing agent in subpopulation $p$ at population state $\x^{\N}$ :
    \begin{alignat*}{2}
    {s^*}(&p,\x^{\N}, S_k) \\
    =& 
    \begin{cases}
    {1}& \hspace{-5pt}\text{if } (x^{\N} \leq {\tau_p}  \text{ and } p \leq \p \text{ and } S_k=2 )  \\
    & \text{ or }
       (x^{\N} \geq \tau_{\p + \p' +1 -p}'  \text{ and } p > \p \text{ and } S_k=2)    \\
        & \text{ or } (x^{\N}  \leq {\tau_p} + \frac{1}{\N} \text{ and } p \leq \p \text{ and } S_k=1 )  \\
    & \text{ or }
       (x^{\N}  \geq \tau_{\p + \p' +1 -p}'+ \frac{1}{\N} \text{ and } p > \p \text{ and } S_k=1) ,    \\
    {2}&\hspace{-5pt}\text{if } (x^{\N} > {\tau_p} \ \text{ and }  p \leq \p \text{ and } S_k=2)
     \\
     & \text{ or }
       (x^{\N}  < \tau_{\p + \p' +1 -p}'   \text{ and }  p > \p \text{ and } S_k=2)\\
    &\text{ or } (x^{\N} > {\tau_p}+ \frac{1}{\N} \ \text{ and }  p \leq \p \text{ and } S_k=1)
     \\
     & \text{ or }
       (x^{\N}  < \tau_{\p + \p' +1 -p}'+ \frac{1}{\N}  \text{ and }  p > \p \text{ and } S_k=1),
    \end{cases}
\end{alignat*}
where $x^\N = \sum_{p=1}^{\p + \p'}x_p^\N$.
\end{definition}
    In \Cref{def:discreteDynamics}, 
    the  random variable ${P}_k$ is the subpopulation of the agent active at time index $k$, and the  random variable  ${S}_k$ equals $1$ (resp. $2$) if the strategy of the active agent at index $k$ equals $\1$ (resp. $\2$). 
    \emph{Persistent} realizations of the sequences of random variables $\langle P_k \rangle_{k=0}^\infty$ and $\langle S_k\rangle_{k=0}^\infty$ refer to their realizations 
    under a persistent activation sequence.
    Function $s^*(p,\x^\N,\mathtt{j})$ returns  $1$ if  $\1$ is the preferred strategy of an agent playing $\mathtt{j}$ in the subpopulation $p$ when the population is at state $\x^{\N}$,  and it returns $2$ otherwise.
\begin{example} \label{exampleVaccine}
    In the context of public health, when a newly developed vaccine is introduced, some individuals choose to get vaccinated (i.e., strategy $\1$) as long as the vaccination coverage is below a certain threshold \cite{ibuka2014free}.
    For them, this signals that not enough individuals are vaccinated, and they want to be immunized.
    Conversely, some individuals go for vaccination when a specific proportion of the population has already been vaccinated \cite{SEANEHIA20172676} as then they perceive the vaccine as safe or a societal norm.
    The first group can be thought of as anticoordinators, while the latter are coordinators.
    The perception of what proportion of the population needs to be vaccinated to be considered rare or common varies among individuals, resulting in a heterogeneous population.
\end{example}
What is the asymptotic behavior of the population dynamics as the population size approaches infinity?
Does the population state reach an \emph{equilibrium point} where no agent tends to switch her current strategy? 
Or do the population dynamics undergo {\emph{perpetual fluctuations}?}

We define a fluctuation set as a non-empty subset of the state space that is (i) positively invariant, meaning that once the state enters the set it never leaves it under any activation sequence, (ii) it is minimal, meaning that it does not admit a positively invariant proper subset, implying that once the state enters it, every state in the set will be visited under a persistent activation sequence, and (iii) it is non-singleton, implying that it is not an equilibrium. 
In what follows, we define it in terms of the dynamics in \Cref{def:discreteDynamics}.
\begin{definition} 
A \textbf{\emph{fluctuation set}} for the dynamics  \eqref{populationDynamicsDiscrete} is a non-singleton set $\bm{\mathcal{Y}}^{\N}\subseteq \bm{\X}_{ss} \cap \frac{1}{\N}\mathbb{Z}^{\p + \p'}$ such that if $\x^{\N}(k)\in\bm{\mathcal{Y}}^{\N}$, for some $k$, then 
\emph{(i)} $\x^{\N}(k')\in\bm{\mathcal{Y}}^{\N}$ for all $k'>k$ and 
\emph{(ii)}
for every state $\bm{y}^{\N}\in\bm{\mathcal{Y}}^{\N}$ and every $K>0$ and under every persistent realizations of  $\langle P_k \rangle_{k=0}^\infty$ and $\langle S_k\rangle_{k=0}^\infty$ , 
 there exists some $T>K$ such that $\x^\N(k+T)=\bm{y}^{\N}$.
  The \emph{amplitude} of the perpetual fluctuation set $\bm{\mathcal{Y}}^{\N}$ is defined as 
    $
     \max_{\bm{y}^{\N},\bm{z}^{\N}\in  \bm{\mathcal{Y}}^{\N}} \vert \bm 1^\top\bm{y}^{\N} -  \bm 1^\top\bm{z}^{\N}\vert
 .$
 Starting from a given initial condition and under a specified realization of $\langle P_k \rangle_{k=0}^\infty$ and $\langle S_k\rangle_{k=0}^\infty$, we say that the dynamics \eqref{populationDynamicsDiscrete} undergo \textbf{\emph{perpetual fluctuations}} if $\x^\N(k) \in \bm{\mathcal{Y}}^{\N}$ for some fluctuation set $\bm{\mathcal{Y}}^{\N}$ and finite $k\geq 0$.
 \end{definition}

 {The dynamics of} a finite \emph{mixed} population consisting of both coordinating and anticoordinating subpopulations may {undergo perpetual fluctuations}
\cite{roohi}{,} the characterization of which appears to be challenging and remains unsolved. 
Could the asymptotic behavior of {discrete} populations be revealed by investigating the associated mean dynamics, as the population size approaches infinity?
\begin{figure}
    \centering
    \includegraphics[width=0.8\linewidth]{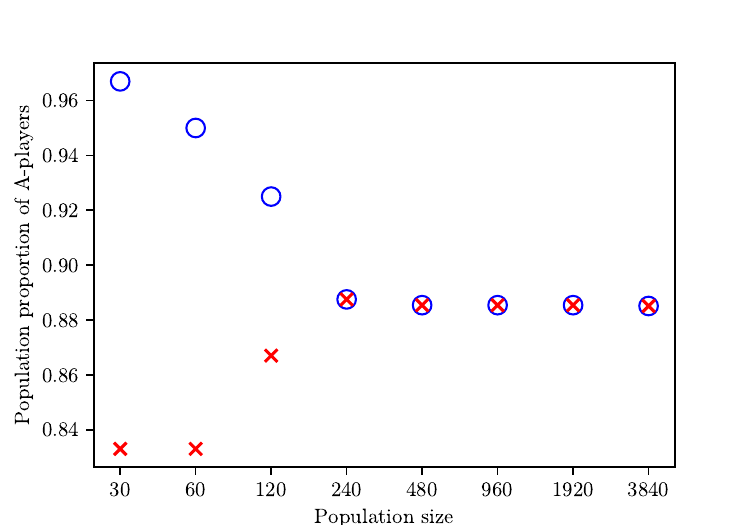}
    \caption{\textbf{The long-term fluctuations of the population proportion of strategy-$\1$ players for varying population sizes.} 
    The circles and crosses represent the maximum and minimum of the population proportion of $\1$-players, respectively.
    As the population size increases, these values get closer.}
    \label{fig:finitepop1}
\end{figure}
\begin{example} \label{exampleFinite}
    Consider a population of $\N$ agents stratified into one anticoordinating and five coordinating subpopulations.
    The thresholds and population proportions of the subpopulations are, respectively, as follows:
    $\smash{(\tr{\tau_1},{\tau'_5}, {\tau'_4}, {\tau'_3}, {\tau'_2}, {\tau'_1}) = (\tr{0.885},{0.89},{0.604}, ,{0.481},{0.444}, {0.21})}$ $\smash{(\tr{\rho_1}, {\rho'_5}, {\rho'_4}, {\rho'_3}, {\rho'_2}, {\rho'_1})=(\tr{12/30}, {3/30},{3/30},{3/30},{1/30},{8/30}})$.
    For each value of $\N=30,60,120,\ldots,3840$, the population state was simulated 100 times with different random activation sequences and initial conditions. 
    {For each value of $\N$, we considered $30\times\N$ steps.}
    For each run, the maximum and minimum of the population proportion of strategy $\1$ players over the last $2\N$ steps were recorded.
    These values for each population size $\N$ are depicted in Figure \ref{fig:finitepop1}.
    As the population size grows,  the maximum and minimum of the proportion of $\1-$players get closer. 
\end{example}
Example \ref{exampleFinite} gives us an intuition that the amplitude of fluctuations in the population proportion of $\1$-players decreases as the population size grows.
Does this observation extend to every mixed population of coordinators and anticoordinators as the population size approaches infinity i.e., $\N \rightarrow \infty$?
 (Fig \ref{fig:diag}).
The available results in stochastic approximation theory provide the basis for investigating this question.

\begin{figure}
    \centering

  
    \tikzset {_94pyk7jdl/.code = {\pgfsetadditionalshadetransform{ \pgftransformshift{\pgfpoint{0 bp } { 0 bp }  }  \pgftransformrotate{0 }  \pgftransformscale{2 }  }}}
    \pgfdeclarehorizontalshading{_98kebzmkh}{150bp}{rgb(0bp)=(1,1,1);
    rgb(37.5bp)=(1,1,1);
    rgb(58.73214176722935bp)=(0.87,0.95,0.98);
    rgb(62.5bp)=(0.63,0.85,0.94);
    rgb(100bp)=(0.63,0.85,0.94)}
    
      
    \tikzset {_rno3up0od/.code = {\pgfsetadditionalshadetransform{ \pgftransformshift{\pgfpoint{0 bp } { 0 bp }  }  \pgftransformrotate{0 }  \pgftransformscale{2 }  }}}
    \pgfdeclarehorizontalshading{_txet9lhh3}{150bp}{rgb(0bp)=(0.88,1,1);
    rgb(37.5bp)=(0.88,1,1);
    rgb(39.25bp)=(0.88,1,1);
    rgb(40.5bp)=(0.99,1,1);
    rgb(45bp)=(0.9,0.97,0.99);
    rgb(51bp)=(0.78,0.93,0.98);
    rgb(56.25bp)=(0.75,0.89,0.97);
    rgb(62.5bp)=(0.69,0.85,0.96);
    rgb(100bp)=(0.69,0.85,0.96)}
    \tikzset{every picture/.style={line width=0.75pt}} 
    
    \begin{tikzpicture}[x=0.75pt,y=0.75pt,yscale=-1,xscale=1]
    
    \path  [shading=_98kebzmkh,_94pyk7jdl] (114.5,2) .. controls (175.25,2) and (224.5,25.06) .. (224.5,53.5) .. controls (224.5,81.94) and (175.25,105) .. (114.5,105) .. controls (53.75,105) and (4.5,81.94) .. (4.5,53.5) .. controls (4.5,25.06) and (53.75,2) .. (114.5,2) -- cycle ; 
     \draw  [color={rgb, 255:red, 74; green, 144; blue, 226 }  ,draw opacity=1 ] (114.5,2) .. controls (175.25,2) and (224.5,25.06) .. (224.5,53.5) .. controls (224.5,81.94) and (175.25,105) .. (114.5,105) .. controls (53.75,105) and (4.5,81.94) .. (4.5,53.5) .. controls (4.5,25.06) and (53.75,2) .. (114.5,2) -- cycle ; 
    
    \path  [shading=_txet9lhh3,_rno3up0od] (114.5,125) .. controls (175.25,125) and (224.5,148.06) .. (224.5,176.5) .. controls (224.5,204.94) and (175.25,228) .. (114.5,228) .. controls (53.75,228) and (4.5,204.94) .. (4.5,176.5) .. controls (4.5,148.06) and (53.75,125) .. (114.5,125) -- cycle ; 
     \draw  [color={rgb, 255:red, 74; green, 144; blue, 226 }  ,draw opacity=1 ] (114.5,125) .. controls (175.25,125) and (224.5,148.06) .. (224.5,176.5) .. controls (224.5,204.94) and (175.25,228) .. (114.5,228) .. controls (53.75,228) and (4.5,204.94) .. (4.5,176.5) .. controls (4.5,148.06) and (53.75,125) .. (114.5,125) -- cycle ; 
    
    \draw  [color={rgb, 255:red, 74; green, 144; blue, 226 }  ,draw opacity=1 ][fill={rgb, 255:red, 74; green, 144; blue, 226 }  ,fill opacity=1 ] (79.5,117) -- (97,117) -- (97,105) -- (132,105) -- (132,117) -- (149.5,117) -- (114.5,125) -- cycle ;
    
    \draw (121,45.48) node [anchor=north west][inner sep=0.75pt]  [font=\footnotesize] [align=center] {{\fontfamily{ptm}\selectfont Equilibrates \cite{ramazi2020convergence}}\\{\fontfamily{ptm}\selectfont Equilibrates \cite{ramazi2017asynchronous}}\\{\fontfamily{ptm}\selectfont May Fluctuate \cite{roohi}}};
    \draw (240.88,23.46) node [anchor=north west][inner sep=0.75pt]  [font=\footnotesize,rotate=0] [align=center] {\begin{minipage}[lt]{44.79pt}\setlength\topsep{0pt}
     \begin{center}
    \centering
    {\fontfamily{ptm}\selectfont Discrete}
    {\fontfamily{ptm}\selectfont population}
    {\fontfamily{ptm}\selectfont of finite size $\N$}
    \end{center}
    \end{minipage}};
    \draw (39,45.58) node [anchor=north west][inner sep=0.75pt]  [font=\footnotesize] [align=center] {{\fontfamily{ptm}\selectfont Coordinating }\\{\fontfamily{ptm}\selectfont Anticoordinating}\\{\fontfamily{ptm}\selectfont Mixture of both} };
    \draw (142,170) node [anchor=north west][inner sep=0.75pt]  [font=\huge] [align=center] {{\huge {\fontfamily{ptm}\selectfont ?}}};
    \draw (240.88,149.07) node [anchor=north west][inner sep=0.75pt]  [font=\footnotesize,rotate=0]  {\begin{minipage}[lt]{44.79pt}\setlength\topsep{0pt}
    \begin{center}
    \centering
    {\fontfamily{ptm}\selectfont Discrete}
    {\fontfamily{ptm}\selectfont population}
    {\fontfamily{ptm}\selectfont $\N \rightarrow \infty$}
    \end{center}
    \end{minipage}};
    \draw (39,166.03) node [anchor=north west][inner sep=0.75pt]  [font=\footnotesize] [align=center] {{\fontfamily{ptm}\selectfont Coordinating }\\{\fontfamily{ptm}\selectfont Anticoordinating}\\{\fontfamily{ptm}\selectfont Mixture of both} };
    \draw (130.07,25.58) node [anchor=north west][inner sep=0.75pt]  [rotate=-0.53]  {$t\ \rightarrow \ \infty $};
    \draw (130.07,146.03) node [anchor=north west][inner sep=0.75pt]  [rotate=-0.53]  {$t\ \rightarrow \ \infty $};
    \draw (43.91,25.58) node [anchor=north west][inner sep=0.75pt]  [font=\footnotesize,rotate=-0.76] [align=center] {\begin{minipage}[lt]{40.13pt}\setlength\topsep{0pt}
    \begin{center}
    {\fontfamily{ptm}\selectfont  \textbf{Population}}
    \end{center}
    
    \end{minipage}};
    \draw (43.91,146.03) node [anchor=north west][inner sep=0.75pt]  [font=\footnotesize,rotate=-0.76] [align=center] {\begin{minipage}[lt]{40.13pt}\setlength\topsep{0pt}
    \begin{center}
    {\fontfamily{ptm}\selectfont  \textbf{Population}}
    \end{center}
    
    \end{minipage}};

    \end{tikzpicture}

    \caption{\textbf{How does the asymptotic behavior of the finite population of interacting agents \eqref{populationDynamicsDiscrete} evolve as the population size approaches infinity?}
    A finite population of all coordinators and a finite population of all anticoordinators each equilibrates in the long run,  \cite{ramazi2020convergence, ramazi2017asynchronous}. In the latter case, for some specific configurations, the population may admit a two-state cycle.
    The proportion of $\1$-players in a mixed finite population may undergo perpetual fluctuations
    \cite{roohi}. 
    }
    \label{fig:diag}
\end{figure}
\section{Background}

We present the following definitions laying the basis for introducing Theorem \ref{thm:implicationOFSandholm}, which connects the asymptotic behavior of the finite populations and their associated mean dynamics. 
Please see the appendix for more details.

A \emph{differential inclusion} is defined by 
\begin{equation} \label{eq_differentialInclusion}
    \dot{\x}(t) \in \bm{\mathcal{V}}\big(\x(t)\big),
\end{equation}
where $\bm{\mathcal{V}}$ is a set-valued map specifying a set of evolutions in $\mathbb{R}^n$ at each point $\x \in \mathbb{R}^n$. 
A \emph{Caratheodory solution} of \eqref{eq_differentialInclusion} on $[0,t_1]\subset \mbb{R}_{\geq 0}$ is an absolutely continuous map $\x:[0,t_1]\rightarrow \mathbb{R}^n $ such that for almost all $t \in [0,t_1]$,  $\dot{\x}(t) \in \bm{\mathcal{V}}(\x(t))$ \cite{cortes2008discontinuous}. 
{Robotic manipulators interacting with objects, high-speed supercavitation vehicles, and power systems can be well modeled and analyzed within the framework of differential inclusions \cite{cortes2008discontinuous, kani, powerSystem}.}
A point $\x^* \in \mathbb{R}^n$ that satisfies ${\mb{0}} \in \bm{\mathcal{V}}(\x^*)$ is an \emph{equilibrium} of \eqref{eq_differentialInclusion}  \cite{cortes2008discontinuous}.
The \emph{set-valued dynamical system} induced by the differential inclusion \eqref{eq_differentialInclusion} is denoted by $\bm{\Phi}(\x_0)$ and defined as the set of all solutions of \eqref{eq_differentialInclusion} with initial condition $\x_0$ \cite{roth2013stochastic}.
A set-valued map $\bm{\mathcal{V}}$ is called \emph{upper semicontinuous} if its graph, $\{(\x,\y) \vert \y \in \bm{\mathcal{V}}(\x) \}$, is closed.
 When an upper semicontinuous map is equipped with three additional properties of nonemptiness, convexity, and boundedness, the map is \emph{good upper semicontinuous}.
\begin{definition}[\cite{roth2013stochastic}]  \label{def_Brikhoff}
    Let $\bm{\Phi}$ be the dynamical system induced by \eqref{eq_differentialInclusion}.
     The recurrent points of $\bm{\Phi}$ are defined as
        $$  \bm{\mathcal{R}}_{\bm{\Phi}} 
        = \left \{\x_0 \Big \lvert 
        \x_0 \in \bm{\mathcal{L}}(\x_0) \right \},
    $$
    where $\bm{\mathcal{L}}(\x_0)$ is the \textbf{\emph{limit set}} of point  $\x_0$ defined by $\bigcup_{\y \in \bm{\mathcal{S}}_{\x_0}}  \bigcap_{t \geq 0} \text{cl}\left(\y[t, \infty\right))$ and $\bm{\mathcal{S}}_{\x_0}$ is the set of solutions of differential inclusion \eqref{eq_differentialInclusion} with the initial condition $\x_0$.
    The \textbf{\emph{Birkhoff center}} of $\bm{\Phi}$, $\mathrm{BC}(\bm{\Phi})$, is defined as the closure of $\bm{\mathcal{R}}_{\bm{\Phi}}$, i.e., $\mathrm{BC}(\bm{\Phi}) = \text{cl}(\bm{\mathcal{R}}_{\bm{\Phi}})$.
\end{definition}
Each $\y$ in \Cref{def_Brikhoff} is a solution of the differential inclusion, and the term $\bigcap_{t \geq 0} \text{cl}(\y[t, \infty))$ is the limit set of $\y$.
In words, $\x_0$ is a recurrent point of $\bm{\Phi}$ if it is included in the limit set of some solution that starts from $\x_0$.
Namely, a recurrent point generalizes the notion of an equilibrium point, as, for example, the points on a limit cycle are recurrent.

How to approximate the realizations of a collection of Markov chains $\langle \mb{X}^{\epsilon}_k\rangle{_{k=0}^\infty}$ for a sequence of $\epsilon>0$ with solutions of a differential inclusion $\dot{\x} \in \bm{\mathcal{V}}(\x)$? 
The following definition establishes the basis for doing so.
\begin{definition} \label{defGSAP} \cite{roth2013stochastic}
    Consider the good upper semicontinuous differential inclusion $\dot{\x} \in \bm{\mathcal{V}}(\x)$ over the convex and compact state space $\bm{\mathcal{X}}_0$.
    For a sequence of positive values of scalar $\epsilon$ approaching $0$, let $\mathbi{U}^{\epsilon} = \langle \mathbi{U}^{\epsilon}_k\rangle{_{k=0}^\infty}$ be a sequence of $\mathbb{R}^n$-valued random variables and $\langle\bm{\mathcal{V}}^{\epsilon}\rangle$ be  a family of set-valued maps on $\mathbb{R}^n$. 
    We say that $\langle \langle \mb{X}^{\epsilon}_k\rangle{_{k=0}^\infty}\rangle_{\epsilon >0}$ is a family of \emph{generalized stochastic approximation processes} (or a \emph{GSAP}) for the differential inclusion $\dot{\x} \in \bm{\mathcal{V}}(\x)$ if the following conditions are satisfied:
    \begin{enumerate}
        \item $\mb{X}^{\epsilon}_k \in {\bm{\mathcal{X}}_0}$ for all $k \geq 0$, 
        \item $\mb{X}^{\epsilon}_{k+1} - \mb{X}^{\epsilon}_k - \epsilon \mathbi{U}^{\epsilon}_{k+1} \in \epsilon \bm{\mathcal{V}}^{\epsilon}(\mb{X}^{\epsilon}_k)$,
        \item for any $\delta >0$, there exists an $\epsilon_0 > 0$ such that for all $\epsilon \leq \epsilon_0$ and $\x \in \bm{\mathcal{X}}_0$, 
        \begin{equation} \label{eq_thirdCondition}
        \begin{aligned}
        \bm{\mathcal{V}}^{\epsilon}(\x) \subset  \{\bm{z} &\in \mathbb{R}^n \mid \exists \y:\vert\x-\y\vert< \delta, 
        \\& \inf_{\bm{v}\in \bm{\mathcal{V}}(\y)} \!\vert \bm{z} - \bm{v}\vert< \delta \},
        \end{aligned}
        \end{equation}
        \item for all $T>0$ and for all $\alpha >0$, 
        $$
        \lim_{\epsilon \rightarrow 0} \mathbb{P} 
        \left[ \max_{k \leq \frac{T}{\epsilon}} \left \vert 
        \textstyle\sum_{i = 1}^{k} \epsilon \mathbi{U}^{\epsilon}_{i} \right \vert > \alpha \mid \mb{X}^{\epsilon}_0 = \x \right] = 0
        $$
        uniformly in $\x \in \bm{\mathcal{X}}_0$.
    \end{enumerate}
\end{definition}
The GSAPs can be regarded as perturbed solutions of the differential inclusions.
The first condition ensures that the perturbed solutions remain in the same set as the unperturbed ones.
The second {condition defines $\langle \mb{X}^{\epsilon}_{k} \rangle{_{k=0}^\infty}$ as a  sequence of elements which are recursively updated.
At each index $k$, the value of $\mb{X}^{\epsilon}_{k+1}$ is characterized based on those of  $\mb{X}^{\epsilon}_{k}$, $\bm{\mathcal{V}}^{\epsilon}(\mb{X}^{\epsilon}_k)$,  and the random variable  $\mathbi{U}^{\epsilon}_{k+1}$.}
The third condition ensures that the collection $\langle\bm{\mathcal{V}}^{\epsilon}\rangle$ approaches to the vicinity of $\bm{\mathcal{V}}$ as $\epsilon$ approaches zero.
{More specifically, the third condition is indeed a limit condition, which requires that for an arbitrarily small $\delta>0$  there exists $\epsilon_0>0$  such that for each $\epsilon < \epsilon_0$ in the sequence $\langle \epsilon \rangle$, the values belonging to the set valued map $\bm{\mathcal{V}}^\epsilon(\x)$ are within a $\delta$-neighborhood of the image of the $\delta$-neighborhood of $\x$.}
Well-behaved random variables, such as sub-Gaussian random variables and some classes of random variables with finite higher moments, satisfy the last condition. 

In \cite{roth2013stochastic}, Roth and Sandholm
studied the limiting stationary distributions of these GSAPs and showed that if for each $\epsilon >0$,  
 $ \langle \mb{X}^{\epsilon}_{k}\rangle$ is a Markov chain,
 their stationary measures as $\epsilon$ approaches zero will be concentrated on $\mathrm{BC} (\bm{\Phi})$.
{The following theorem is a result of \Cref{thm:sandholm} and was implicitly used in the proof of \cite[Theorem 4.11]{roth2013stochastic}.}

\begin{theorem} \label{thm:implicationOFSandholm}
For a vanishing sequence  $\langle \epsilon_n \rangle_{n=0}^{\infty}$ let
    $ \langle \langle \mb{X}^{\epsilon_n}_k\rangle{_{k=0}^\infty}\rangle_{n}$
    be GSAPs for a good upper semicontinuous differential inclusion $\dot{\x} \in \bm{\mathcal{V}}(\x)$.
   Assume that for each $\epsilon_n$, $\langle \mb{X}^{\epsilon_n}_k\rangle{_{k=0}^\infty}$
    is a Markov chain and ${\mu}^{\epsilon_n}$ is an invariant probability measure of $\langle \mb{X}^{\epsilon_n}_k\rangle{_{k=0}^\infty}$.
    We then have  $\lim_{n\to \infty} {\mu}^{\epsilon_n}(\bm{\mathcal{O}}) = 1$, where $\bm{\mathcal{O}}$ is any open set containing the Birkhoff center of the dynamical system $\bm{\Phi}$  induced by $\dot{\x} \in \bm{\mathcal{V}}(\x)$.
\end{theorem}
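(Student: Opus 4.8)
The plan is to argue by contradiction, combining the compactness of the state space with the characterization of limiting invariant measures supplied by \Cref{thm:sandholm}. Since each $\mu^{\epsilon_n}$ is a Borel probability measure on the convex compact set $\bm{\mathcal{X}}_0$, the family $\langle \mu^{\epsilon_n}\rangle_n$ is automatically tight, so by Prokhorov's theorem it is relatively compact in the topology of weak convergence. Hence every subsequence of $\langle \mu^{\epsilon_n}\rangle_n$ admits a further subsequence converging weakly to some Borel probability measure supported on $\bm{\mathcal{X}}_0$.

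First I would suppose, for contradiction, that $\mu^{\epsilon_n}(\bm{\mathcal{O}})\not\to 1$. Then there exist $\gamma>0$ and a subsequence $\langle \epsilon_{n_j}\rangle_j$ with $\mu^{\epsilon_{n_j}}(\bm{\mathcal{O}})\leq 1-\gamma$ for all $j$. Using the relative compactness noted above, I would pass to a further subsequence (not relabeled) along which $\mu^{\epsilon_{n_j}}\to\mu^\star$ weakly for some probability measure $\mu^\star$ on $\bm{\mathcal{X}}_0$. The crux of the argument is then to identify $\mu^\star$: here I would invoke \Cref{thm:sandholm}. Because $\langle\langle \mb{X}^{\epsilon_n}_k\rangle_{k=0}^\infty\rangle_n$ is a family of GSAPs for the good upper semicontinuous differential inclusion $\dot{\x}\in\bm{\mathcal{V}}(\x)$ and each $\mu^{\epsilon_n}$ is invariant for the corresponding Markov chain, any weak limit of these invariant measures is an invariant measure of the induced set-valued dynamical system $\bm{\Phi}$, and every such invariant measure is supported on the Birkhoff center $\mathrm{BC}(\bm{\Phi})$. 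Consequently $\mu^\star\big(\mathrm{BC}(\bm{\Phi})\big)=1$, and since $\bm{\mathcal{O}}\supseteq\mathrm{BC}(\bm{\Phi})$ we obtain $\mu^\star(\bm{\mathcal{O}})=1$.

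Finally I would close the argument with the portmanteau theorem. As $\bm{\mathcal{O}}$ is open and $\mu^{\epsilon_{n_j}}\to\mu^\star$ weakly,
\[
\liminf_{j\to\infty}\mu^{\epsilon_{n_j}}(\bm{\mathcal{O}}) \;\geq\; \mu^\star(\bm{\mathcal{O}}) \;=\; 1,
\]
which contradicts $\mu^{\epsilon_{n_j}}(\bm{\mathcal{O}})\leq 1-\gamma$ for all $j$. Therefore the assumption fails and $\lim_{n\to\infty}\mu^{\epsilon_n}(\bm{\mathcal{O}})=1$, as claimed.

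I expect the main obstacle to be the correct application of \Cref{thm:sandholm}, namely verifying that the hypotheses in force here---the ``good'' upper semicontinuity of $\bm{\mathcal{V}}$, the GSAP structure of $\langle\langle \mb{X}^{\epsilon_n}_k\rangle_{k=0}^\infty\rangle_n$, and the invariance of each $\mu^{\epsilon_n}$---are exactly those under which limiting invariant measures are guaranteed to be invariant for $\bm{\Phi}$ and concentrated on its Birkhoff center. By contrast, the tightness step and the portmanteau step are entirely routine, so the whole difficulty is packaged into the cited result of Roth and Sandholm; the contribution of this theorem is to extract from it the clean statement that the limiting mass escapes every open neighborhood of $\mathrm{BC}(\bm{\Phi})$.
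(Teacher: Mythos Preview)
Your proposal is correct and follows essentially the same route as the paper's own proof: argue by contradiction, extract a subsequence bounded away from $1$, use compactness of $\bm{\mathcal{X}}_0$ and Prokhorov's theorem to pass to a weakly convergent sub-subsequence, invoke \Cref{thm:sandholm} to conclude the limit measure is supported on $\mathrm{BC}(\bm{\Phi})$, and finish via the portmanteau inequality for open sets. The paper's argument is structurally identical, differing only in minor presentational details (it does not name the portmanteau theorem explicitly but uses the same $\liminf$ inequality).
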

\section{The link to the continuous-time dynamics}
Our aim is to determine the asymptotic behavior of the population dynamics \eqref{populationDynamicsDiscrete} as the population size $\N$ approaches infinity.
Theorem \ref{thm:implicationOFSandholm} connects the asymptotic behavior of the GSAPs to the steady-state behavior of their associated differential inclusion.
Hence, if one shows that, firstly, dynamics \eqref{populationDynamicsDiscrete} define a Markov chain, and
secondly, the collection of the Markov chains indexed by the population size is a GSAP for a good upper semicontinuous differential inclusion,
then the results of Theorem \ref{thm:implicationOFSandholm} can be applied, meaning that the asymptotic behavior of the population dynamics can be revealed by analyzing the differential inclusion. 

In view of \eqref{populationDynamicsDiscrete}, the next state of the population dynamics only depends on the current population state, the strategy of and the subpopulation associated with the current active agent.
As a result, the sequence of $\langle\x^{\N}(k)\rangle{_{k=0}^\infty}$ is a realization of a Markov chain.
\begin{prop} \label{propMarkov}
    The sequence $\langle \x^{\N}(k)\rangle{{_{k=0}^\infty}}$ is a realization of {the Markov chain} $\langle\mathbf{X}^{\frac{1}{\N}}_k\rangle{_{k=0}^\infty}$
    with the state space $\bm{\mathcal{X}}_{ss} \cap \frac{1}{\N}\mathbb{Z}^{{\p + \p'}}$, initial state $\mathbf{X}^{\frac{1}{\N}}_0 = \x^{\N}(0)$
    and transition probabilities
   \begin{alignat}{2}
         &\text{Pr}_{\x^{\N},\y^{\N}} = \nonumber\\  
    &\begin{cases}
        ({\rho}_p - x^{\N}_p)(2-{s^*}(p,\x^{\N}, 2)) &\hspace{-35pt}\text{if }\y^{\N} = \frac{1}{\N}\mb{e}_p + \x^{\N},  \\
        x^{\N}_p ({s^*}(p,\x^{\N}, 1)-1) &\hspace{-35pt}\text{if } \y^{\N} = -\frac{1}{\N}\mb{e}_p + \x^{\N},
    \\
        1-\displaystyle\sum_{p=1}^{\p + \p'} \big(  x^{\N}_p({s^*}(p,\x^{\N},1)-1) \hspace{-2210pt}&\hspace{-35pt}\text{if } \y^{\N} =  \x^{\N},\\ \hspace{27pt} + 
        (\rho_p-x^{\N}_p) (2-{s^*}(p,\x^{\N},2)) 
        \big) 
         \\
        0 &\hspace{-35pt}\text{otherwise}, 
    \end{cases} \label{eq:markov}
\end{alignat} 
where $\mb{e}_p$ is the $p^{\text{th}}$ standard basis vector in $\mathbb{R}^{(\p + \p')}$.
\end{prop}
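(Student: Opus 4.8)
The plan is to verify directly that $\langle\x^{\N}(k)\rangle_{k=0}^{\infty}$ satisfies the three defining properties of a time-homogeneous Markov chain with kernel \eqref{eq:markov}: (i) the Markov property, (ii) the stated one-step transition probabilities, and (iii) invariance of the state space $\bm{\mathcal{X}}_{ss}\cap\frac{1}{\N}\mathbb{Z}^{\p+\p'}$. For (i), I would read off from \eqref{populationDynamicsDiscrete} that $\x^{\N}(k+1)$ is a deterministic function of the triple $\big(\x^{\N}(k),P_k,S_k\big)$. Since $\langle P_k\rangle_{k=0}^{\infty}$ is i.i.d. with $\mathbb{P}[P_k=p]=\rho_p$ independently of the past, and, by \Cref{lem:discretedynamics}, the conditional law of $S_k$ given $P_k=p$ depends on the history only through the current counts, namely $\mathbb{P}[S_k=1\mid P_k=p]=x^{\N}_p/\rho_p$, the conditional distribution of $\x^{\N}(k+1)$ given $\x^{\N}(0),\ldots,\x^{\N}(k)$ coincides with its conditional distribution given $\x^{\N}(k)$ alone. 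Time-homogeneity follows because neither distribution depends on $k$ explicitly.

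For (ii), which is the main (though routine) computation, I would first note that since $S_k,s^*\in\{1,2\}$ the increment $\frac{1}{\N}\big(S_k-s^*(P_k,\x^{\N},S_k)\big)\mb{e}_{P_k}$ takes values only in $\{-\tfrac{1}{\N}\mb{e}_p,\,\bm 0,\,\tfrac{1}{\N}\mb{e}_p\}$, so $\text{Pr}_{\x^{\N},\y^{\N}}=0$ unless $\y^{\N}\in\{\x^{\N},\,\x^{\N}\pm\tfrac{1}{\N}\mb{e}_p\}$, which gives the last line of \eqref{eq:markov}. An upward move $\y^{\N}=\x^{\N}+\tfrac{1}{\N}\mb{e}_p$ occurs exactly when $P_k=p$, the active agent is a $\2$-player ($S_k=2$) who prefers $\1$ ($s^*(p,\x^{\N},2)=1$); using $\mathbb{P}[P_k=p,S_k=2]=\rho_p-x^{\N}_p$ together with the observation that $2-s^*(p,\x^{\N},2)$ equals $1$ when the preferred strategy is $\1$ and $0$ otherwise, I obtain the first case. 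Symmetrically, a downward move requires $P_k=p$, $S_k=1$, and $s^*(p,\x^{\N},1)=2$, and with $\mathbb{P}[P_k=p,S_k=1]=x^{\N}_p$ and the fact that $s^*(p,\x^{\N},1)-1$ equals $1$ precisely when the preferred strategy is $\2$, this yields the second case. The self-loop probability is then $1$ minus the total mass of the moving transitions, which is exactly the third case.

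For (iii), I would observe that every increment changes a single coordinate by $\pm\tfrac{1}{\N}$ or leaves the state fixed, so membership in $\frac{1}{\N}\mathbb{Z}^{\p+\p'}$ is preserved; moreover an upward move in coordinate $p$ carries positive probability only when $\rho_p-x^{\N}_p>0$ (there is a $\2$-player available to convert) and a downward move only when $x^{\N}_p>0$, so the bounds $0\le x^{\N}_p\le\rho_p$ are maintained and $\y^{\N}\in\bm{\mathcal{X}}_{ss}$. Combining (i)--(iii) establishes the claim.

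The only genuinely delicate point is the ingredient used in (i): that the conditional law of $S_k$ depends on the past solely through the current population state $\x^{\N}(k)$. This rests on the exchangeability of agents within each subpopulation, all of whom share a single threshold, so that the per-subpopulation counts of $\1$-players form a sufficient statistic and a uniformly chosen active agent in subpopulation $p$ plays $\1$ with probability $x^{\N}_p/\rho_p$ regardless of which specific agents hold which strategy. I would invoke \Cref{lem:discretedynamics} for this equivalence between the per-agent description and \Cref{def:discreteDynamics}; everything else reduces to the elementary bookkeeping above.
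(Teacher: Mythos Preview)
Your proposal is correct and follows essentially the same approach as the paper: identify the possible one-step moves from \eqref{populationDynamicsDiscrete}, compute their probabilities via $\mathbb{P}[P_k=p,S_k=s]$, and encode the indicator of a strategy switch through the algebraic expressions $2-s^*(p,\x^{\N},2)$ and $s^*(p,\x^{\N},1)-1$. The paper's proof is terser—it works out only the upward-move case explicitly and leaves the Markov property and state-space invariance implicit—whereas you spell out all three ingredients and correctly flag exchangeability within subpopulations (via \Cref{lem:discretedynamics}) as the reason the conditional law of $S_k$ depends only on the current counts.
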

See Appendix for a proof.
\begin{definition}
    The \textbf{\emph{population dynamics Markov chain}} is the Markov chain $\langle \mathbf{X}^{\frac{1}{\N}}_k\rangle{_{k=0}^\infty}$  with the state space $\bm{\mathcal{X}}_{ss} \cap \frac{1}{\N}\mathbb{Z}^{{\p + \p'}}$, initial state $\mathbf{X}^{\frac{1}{\N}}_0 = \x^{\N}(0),$
    and transition probabilities {\eqref{eq:markov}.}
\end{definition}
   We increase the population size $\N$ in a way that the population structure, i.e.,  the population proportions of the subpopulations (vector $\bm{\rho}$) remains unchanged.
    Hence, the sequence according to which the population size approaches infinity should satisfy $\N  \bm{\rho} \in \mathbb{Z}^{\p + \p'}$.   
    We denote the set of such valid population sizes by $\mathcal{N}$. 
 
Now, to apply Theorem \ref{thm:implicationOFSandholm}, we need to show that the collection of $\langle \langle \mb{X}^{\frac{1}
{\N}}_k\rangle{_{k=0}^\infty} \rangle_{\N \in \mathcal{N}}$
is a GSAP for a good upper semicontinuous differential inclusion, and hence the support of the limit point of its invariant probability measures is determined by the steady-state behavior of that differential inclusion.
We claim that $\langle \langle \mb{X}^{\frac{1}
{\N}}_k\rangle{_{k=0}^\infty} \rangle_{\N \in \mathcal{N}}$
is a GSAP for the following differential inclusion:
\begin{definition}  \label{def_semicontinuousDynamics}
    The \textbf{\emph{continuous-time} population dynamics} are defined by 
       $ \dot{\x}\in \bm{\mathcal{V}}(\x)$
    where $\bm{\mathcal{V}} :\bm{\mathcal{X}}_{ss} \rightrightarrows \mathbb{R}^{\p+\p'}$, 
    for $p \leq \p $
\begin{subequations} 
\begin{align}
    \hspace{-55pt} {\V}_p(\x) = \begin{cases} \label{222}
    \{{\rho}_p - x_p\} & \text{if }x < \tau_p,   \\
    [-x_p, {\rho}_p-x_p] & \text{if }  x = \tau_p, \\
    \{- x_p\} & \text{if } x > \tau_p,   
    \end{cases}
    \end{align} \label{eq: type_mixed}
\end{subequations}
and for  $p > \p$ 
\begin{align*}
    &\V_p(\x) = \begin{cases}
   \{ - x_p \} & \text{if } x < \tau'_{\p + \p'+1-p},  \\
    [-x_p, {\rho}_{p}-x_p] & \text{if } x = \tau'_{\p + \p'+1-p},\\
     \{ {\rho}_{p} - x_p \} & \text{if } x > \tau'_{\p + \p'+1-p}, 
    \end{cases}  \tag{\ref{eq: type_mixed}b}
    \end{align*}
where  $x=\sum_{p=1}^{\p + \p'}x_p$.
\end{definition}
The above dynamics are indeed the \emph{mean dynamics} of the discrete population dynamics \eqref{populationDynamicsDiscrete} \cite{sandholm2010population}.
The mean dynamic considers the decision-making process as a continuous process with an infinitely large population, and $\dot{x}_p$ describes the rate of change in the population proportion of strategy-$\1$ players in subpopulation $p$.
This equals the expected inflow, $F_{p,\1}$, that is, the rate of change to strategy $\1$ minus the expected outflow, $F_{p,\2}$, that is, the rate of change to strategy $\2$:
\begin{equation} \label{eq_meanDynamics}
    \dot{x}_p \in \{a -b \vert a \in F_{p,\1}(\x), b \in F_{p,\2}(\x)\},
\end{equation}
where $F_{p,\1},F_{p,\2}:\bm{\X}_{ss} \rightrightarrows {[0,\rho_p]}$.
If the population proportion of strategy-$\1$ players is greater (resp. less) than the threshold of anticoordinating (resp. coordinating) subpopulation $p$, then the inflow will be zero, and the outflow equals the current proportion of strategy-$\1$ players of subpopulation $p$, i.e., $\dot{x}_p = -x_p$.
On the contrary, if the population proportion of strategy-$\1$ players is less (resp. greater) than the threshold of anticoordinating (resp. coordinating) subpopulation $p$, then the outflow will be zero and the inflow equals the current proportion of strategy $\2$ players of subpopulation $p$, i.e., $\dot{x}_p = \rho_p-x_p$.
When the population proportion of strategy-$\1$ players equals the threshold of (anticoordinating or coordinating) subpopulation $p$, both strategies $\1$ and $\2$ are a legitimate choice should there be no tie-breaker, and some may choose strategy $\1$ while others may choose strategy $\2$.
Therefore, the rate will be bounded by the above two cases, i.e., $-x_p$ and $\rho_p-x_p$, resulting in the interval $[-x_p, \rho_p-x_p]$.
If there is a tie-breaker, such as the set-up in this paper, then one of the strategies is preferred, resulting in one of the above two cases, which results in a \emph{selection} from the defined differential inclusion.
A \emph{selection} from $\bm{\mathcal{V}}(\x)$ is a singleton map $\bm{\nu}(\x) \in \mathbb{R}^n$ which satisfies $\bm{\nu}(\x) \in \bm{\mathcal{V}}(\x)$ for all $\x$ in $\mathbb{R}^n$ \cite{smirnov2002introduction}.
That is why the mean dynamics of the discrete population dynamics \eqref{populationDynamicsDiscrete} is sometimes considered as a selection from \eqref{eq: type_mixed} \cite{roth2013stochastic}.
By extending the domain of the continuous-time population dynamics to $\mathbb{R}^{\p+\p'}$ using projection and in view of \cite[Theorem 6.A.2]{sandholm2010population}, the good upper semicontinuity of
the differential inclusion \eqref{eq: type_mixed} can be verified.

\begin{remark} \label{remarkBR}
    The equivalent equation to \eqref{eq_meanDynamics} in the context of best-response would be 
    \begin{equation*}
        \dot{\x} \in  \{ \y -\x \vert \y \in \bm{\mathrm{Br}}(\x)\}
    \end{equation*}
    where $\bm{\mathrm{Br}}:\bm{\X}_{ss}\rightrightarrows{\bm{\X}_{ss}}$ is the best-response correspondence where $\mathrm{Br}_p(\x)$ returns that population proportion of strategy-$\1$ players of subpopulation $p$ maximizing their utility defined in the game-theoretic context of the linear threshold models (see \Cref{remark_equivalenceofBestresponseAndThresholdDynamics}).
\end{remark}

\begin{lemma}    \label{lemGSAPS}
    Assume that the thresholds of the subpopulations are distinct.
    Then the collection of $\langle \langle \mb{X}^{\frac{1}{\N}}_k\rangle{_{k=0}^\infty}\rangle_{\N\in\mathcal{N}}$ is a family of GSAPs for \eqref{eq: type_mixed}.
\end{lemma}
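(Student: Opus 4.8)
The plan is to verify the four conditions of \Cref{defGSAP} with $\epsilon=\tfrac1\N$, the convex compact state space $\bm{\mathcal{X}}_0=\bm{\mathcal{X}}_{ss}$, and explicit choices of the drift maps and noise variables obtained from a drift/martingale decomposition. Writing $\mathcal{F}_k$ for the natural filtration of the chain, I would define the one-step conditional mean drift
\[
\bm{g}^{\N}(\x):=\mathbb{E}\!\left[\N\big(\mb{X}^{\frac{1}{\N}}_{k+1}-\mb{X}^{\frac{1}{\N}}_{k}\big)\ \middle|\ \mb{X}^{\frac{1}{\N}}_k=\x\right],
\]
whose $p$th component, read off from \eqref{eq:markov}, is $\bm{g}^{\N}_p(\x)=(\rho_p-x_p)\big(2-s^*(p,\x,2)\big)-x_p\big(s^*(p,\x,1)-1\big)$. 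I then set $\bm{\mathcal{V}}^{1/\N}(\x):=\{\bm{g}^{\N}(\x)\}$ and $\mathbi{U}^{1/\N}_{k+1}:=\N(\mb{X}^{\frac{1}{\N}}_{k+1}-\mb{X}^{\frac{1}{\N}}_k)-\bm{g}^{\N}(\mb{X}^{\frac{1}{\N}}_k)$. Condition~(1) holds because the lattice state space lies inside the convex compact box $\bm{\mathcal{X}}_{ss}$; condition~(2) holds by construction, since $\mb{X}^{\frac{1}{\N}}_{k+1}-\mb{X}^{\frac{1}{\N}}_k-\tfrac1\N\mathbi{U}^{1/\N}_{k+1}=\tfrac1\N\bm{g}^{\N}(\mb{X}^{\frac{1}{\N}}_k)\in\tfrac1\N\bm{\mathcal{V}}^{1/\N}(\mb{X}^{\frac{1}{\N}}_k)$; and by the Markov property $\mathbb{E}[\mathbi{U}^{1/\N}_{k+1}\mid\mathcal{F}_k]=\bm{g}^{\N}(\mb{X}^{\frac{1}{\N}}_k)-\bm{g}^{\N}(\mb{X}^{\frac{1}{\N}}_k)=\bm{0}$, so $\langle\mathbi{U}^{1/\N}_k\rangle$ is a martingale difference sequence with $\vert\mathbi{U}^{1/\N}_k\vert\le C$ for a constant $C$ independent of $\N$ and $k$ (the scaled increment is one of $\pm\mb{e}_p,\bm{0}$, and $\bm{g}^{\N}$ is bounded since each $\vert\bm{g}^{\N}_p\vert\le\rho_p$).

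For condition~(4) I would observe that $\sum_{i=1}^{k}\tfrac1\N\mathbi{U}^{1/\N}_i=\tfrac1\N M^{1/\N}_k$, where $M^{1/\N}_k=\sum_{i\le k}\mathbi{U}^{1/\N}_i$ is a martingale with increments bounded by $C$. Applying the Azuma--Hoeffding inequality coordinatewise together with a union bound over the $\p+\p'$ coordinates yields $\mathbb{P}[\max_{k\le\N T}\vert M^{1/\N}_k\vert>\alpha\N]\le 2(\p+\p')\exp\!\big(-\tfrac{\alpha^2\N}{2(\p+\p')TC^2}\big)$, which tends to $0$ as $\N\to\infty$; since the bound does not involve the initial state, the convergence is uniform in $\x\in\bm{\mathcal{X}}_{ss}$. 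This is the routine probabilistic ingredient.

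The crux is condition~(3), i.e.\ matching $\bm{g}^{\N}$ to $\bm{\mathcal{V}}$. Comparing $\bm{g}^{\N}_p(\x)$ with $\V_p(\x)$ through the aggregate $x=\sum_q x_q$, reading off $s^*$ shows that away from the thresholds $\bm{g}^{\N}_p(\x)$ equals exactly the singleton value of $\V_p(\x)$, and the only discrepancy sits inside the half-open band $(\tau,\tau+\tfrac1\N]$ just above each threshold $\tau$, where the $\tfrac1\N$-offsets in $s^*$ pin one coordinate's drift to a boundary value still contained in the corresponding interval of $\bm{\mathcal{V}}$. The key geometric fact is that the finitely many thresholds are distinct, hence separated by a fixed gap $g_0>0$, so for $\tfrac1\N<g_0$ at most one coordinate can lie in its band at any given $\x$. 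Given $\delta>0$ and $\x$, I would take the nearby on-threshold point $\y$ obtained by lowering the total mass of $\x$ by $x-\tau\le\tfrac1\N$ (always feasible inside $\bm{\mathcal{X}}_{ss}$ since $x\ge x-\tau$), so that $\vert\x-\y\vert\le\tfrac1\N$ and $\sum_q y_q=\tau$; for the single boundary coordinate $p$ one checks directly that $\bm{g}^{\N}_p(\x)\in\V_p(\y)=[-y_p,\rho_p-y_p]$, while every other coordinate is on the same side of its own threshold at $\x$ and $\y$, so its drift differs from the matching element of $\V_q(\y)$ by at most $\vert x_q-y_q\vert\le\tfrac1\N$. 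Hence some $\bm{v}\in\bm{\mathcal{V}}(\y)$ satisfies $\vert\bm{g}^{\N}(\x)-\bm{v}\vert\le\sqrt{\p+\p'}\,\tfrac1\N$, and taking $\N$ large enough that $\tfrac1\N<\delta$ and $\sqrt{\p+\p'}\,\tfrac1\N<\delta$ establishes \eqref{eq_thirdCondition}, uniformly in $\x$. I expect this boundary-band bookkeeping to be the main obstacle: one must verify that the discretization offsets $+\tfrac1\N$ never drive the expected drift outside a $\delta$-neighborhood of the inclusion evaluated at a genuinely nearby state, for every $\x\in\bm{\mathcal{X}}_{ss}$ simultaneously. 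Together with the good upper semicontinuity of \eqref{eq: type_mixed} already noted, the four conditions give the claim.
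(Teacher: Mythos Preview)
Your proposal is correct and follows essentially the same drift/martingale decomposition and threshold-band argument as the paper's proof, which likewise invokes the minimum threshold gap (its auxiliary \Cref{lem_minimumDistanceBtwThresholds}) to ensure at most one band is active, constructs a nearby on-threshold point $\y$, and handles condition~(4) via the uniform boundedness of the martingale differences (citing \cite[Proposition~2.3]{roth2013stochastic} where you invoke Azuma--Hoeffding directly). One small point: your intermediate claim that $\bm{g}^{\N}_p(\x)\in\V_p(\y)=[-y_p,\rho_p-y_p]$ for the boundary coordinate can fail in the coordinator case when $y_p<x_p$ (there $\bm{g}^{\N}_p(\x)=\rho_p-2x_p$), but the paper's choice $v_p=\rho_p-2y_p\in\V_p(\y)$ gives $|z_p-v_p|=2|x_p-y_p|\le 2/\N$, so your final $O(1/\N)$ bound and hence condition~(3) still go through.
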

For the sake of readability, the proofs of the results are provided in the appendix.
Based on Lemma \ref{lemGSAPS}, the population dynamics \eqref{populationDynamicsDiscrete} define GSAPs for the differential inclusion \eqref{eq: type_mixed}. 
Hence, in view of Theorem \ref{thm:implicationOFSandholm}, the support of the limit point of the stationary distributions of \eqref{eq:markov}, as $\N \rightarrow \infty$, is contained in the Birkhoff center of $\bm{\Phi}$ induced by \eqref{eq: type_mixed}.
In the next section, we determine the Birkhoff center of $\bm{\Phi}$.
\begin{remark} \label{setupRemark}
    The set-up in \eqref{eq: scor}, \eqref{eq: santi}, and \eqref{populationDynamicsDiscrete} is based on a tie-breaking rule \cite{sakhaei2022equilibration} where the agents prefer strategy $\1$ when the population proportion of $\1$-players, excluding themselves, equals the agent's threshold, i.e., $x^{\N}_{-i}=\tau(i)$ for every agent $i$.
    However, for other tie-breaking rules, such as when strategy $\2$  is preferred, or the agents uniformly randomly choose either of the strategies, Lemma \ref{lemGSAPS} remains valid (see Corollary A1).
    The reason is that a different tie-breaker will only change $s^*(\cdot, \cdot, \cdot)$ in \Cref{def:discreteDynamics} and consequently  
    the selection from \eqref{eq: type_mixed} when $x = \tau_p$ where $p$ is the type of agent $i$.
        Accordingly, the dynamical behavior of the resulting selection is already captured by \eqref{eq: type_mixed}.
    Therefore, investigating the asymptotic behavior of \eqref{eq: type_mixed} does reveal that of the discrete population dynamics \eqref{populationDynamicsDiscrete} with such tie-breaking rules as population size approaches infinity.
\end{remark}
\section{The analysis of the continuous-time dynamics} 
\label{sec:mixed}
To analyze the continuous-time dynamics, the equilibrium points of \eqref{eq: type_mixed} should be characterized.
The following subsection provides an intuition for how analyzing the evolution of $x(t)$ in \Cref{def_semicontinuousDynamics}, i.e., the population proportion of $\1$-players, helps that of continuous-time population dynamics.
\subsection{Intuition and example} \label{subsec: intuition and example}
\begin{example} \label{ex:abstract}
Consider a mixed population of one anticoordinating and two coordinating subpopulations.
        The thresholds and the population proportions of the subpopulations are respectively as follows: $(\tr{\tau_1}, {\tau_2'}, {\tau_1'}) = (\tr{0.85},{0.75},{0.35})$ and $(\tr{\rho_1}, {\rho_2'}, {\rho_1'}) = (\tr{0.6},{0.3},{0.1})$. 
    The continuous-time population dynamics are $\dot{\x} \in \bm{\mathcal{V}}(\x)$, where $\bm{\mathcal{V}} = [\V_1,\V_2,\V_3]^\top$ and
    \begin{subequations}
    \begin{equation}
     {\V}_1(\x) = \begin{cases}
    \{0.6 - x_1\} & \text{if } x < 0.85,   \\
    [-x_1, 0.6-x_1] & \text{if }  x = 0.85, \\
    \{- x_1\} & \text{if } x > 0.85, 
    \end{cases}
    \end{equation}\label{eq:example2}
    \end{subequations}
    \begin{equation}
    \V_2(\x) = \begin{cases} 
     \{ - x_2 \} & \text{if } x < 0.75,  \\
     [-x_2, 0.3-x_2] & \text{if}\quad  x = 0.75,\\
   \{ 0.3 - x_2 \} & \text{if }  x > 0.75, 
    \end{cases}\tag{\ref{eq:example2}b}
    \end{equation}
    \begin{equation}
     \V_3(\x) = \begin{cases} \nonumber
       \{ - x_3 \} & \text{if } x < 0.35, \nonumber \\
      [-x_3, 0.1-x_3] & \text{if}\quad  x = 0.35,\\
   \{ 0.1 - x_3 \} & \text{if } x > 0.35.\nonumber 
    \end{cases}\tag{\ref{eq:example2}c}
\end{equation}

\begin{itemize}
\item {Motivation.}
To find the equilibrium points, we should find the state populations $\x^*$ at which  $\bm{0} \in \bm{\mathcal{V}}(\x^*)$.
The combinations of all possible cases for $\V_1, \V_2$, and $\V_3$ should be considered based on the value of $x$.
So if we already knew the value of $x$ at equilibrium, we would only need to investigate the active case in $\V_1$, $\V_2$, and $\V_3$ to find $x_1,x_2,x_3$, and in turn $\bm{x}$ at equilibrium.
Hence, we obtain the dynamics $x(t)$ as follows.
\item {The abstract dynamics.}
The thresholds divide the unit interval into four disjoint open intervals, i.e., $(0,0.35)$, $(0.35, 0.75)$, $(0.75,0.85)$, and $(0.85,1)$. 
When $x \in (0,0.35)$, $\bm{\mathcal{V}}(\x)$ is a singleton and $\dot{\x}$ will be equal to $(0.6-x_1,-x_2,-x_3)$.
The derivative of $x$ will then be $\dot{x} =  0.6 - x$.
A similar procedure can be applied to the other intervals, resulting in $\dot{x} \in \mathcal{X}(x)$, where 
         \begin{gather} 
        \scalebox{0.94}{$
       \mathcal{X}(x) = \begin{cases}
       \{0.6-x\} \quad &\text{if }  x \in [0,0.35),\\
        [0.6-x, 0.7-x]  \quad&\text{if } x = 0.35,\\
        \{0.7-x\} \quad &\text{if }  x \in (0.35, 0.75),\\
        [0.7-x,1-x] \quad &\text{if }  x = 0.75,\\
        \{1-x\} \quad &\text{if }  x \in (0.75, 0.85),\\
        [0.4-x, 1-x] \quad&\text{if }  x = 0.85,\\
        \{0.4-x\} \quad &\text{if }  x \in (0.85, 1].
        \end{cases}
    $ } \label{eq: example}
    \end{gather}   
    \item {Abstract dynamics' equilibria.}
        At equilibrium, we have $0 \in \mathcal{X}(x^*)$.
        So we can find the equilibria by investigating all intervals in \eqref{eq: example}.
        The first two intervals are equilibrium-free.
        However, the image of the interval $(0.35,0.75)$ under $\X$ is $(-0.05,0.35)$, which includes zero. 
        Therefore, the abstract dynamics admit an equilibrium point in this interval, which turns out to be $x^*_1 = 0.7$ as $0 \in \mathcal{X}(0.7)$.
        Following the same procedure, we find two additional equilibrium points $x^*_2 = 0.75$ and $x^*_3 = 0.85$ for the abstract dynamics.
        \item {Population dynamics' equilibria.}
        In view of \eqref{eq:example2}, the value $x^*_1$ equals the sum of $\rho_1$ and $\rho'_1$, i.e., the population proportions of those subpopulations whose {preferred} strategy at $x^*_1$ is $\1$.
        Evaluating \eqref{eq:example2} at $x = x^*_1$, we see that  $\x^*_1 =(0.6,0,0.1)$ is an equilibrium point for the population dynamics.
        The point $x^*_2$ equals the threshold of the second coordinating subpopulation, and the population state  $\x^*_2 =(0.6,0.05,0.1)$ is an equilibrium point for \eqref{eq:example2}.
        At $\x^*_2$, except for the second coordinating subpopulation, members of the same subpopulations adopt the same strategies, i.e., $x^*_{2,1} = \rho_1$, $x^*_{2,3} = \rho'_1$.
        Finally, the point $x^*_3 = 0.85$ 
        equals the anticoordinating subpopulation's threshold, and  $\x^*_3 =(0.45,0.3,0.1)$ is an equilibrium for the population dynamics.
        At $\x^*_3$, except for the anticoordinators, members of the same subpopulations adopt the same strategies, i.e., $x^*_{3,2} = \rho'_2$, $x^*_{3,3} = \rho'_1$.
        \end{itemize}
\end{example}
    \vspace{-10pt}
\subsection{The abstract dynamics.}
\Cref{ex:abstract} gives us an intuition that 
it is useful to focus on the evolution of the population proportion of $\1$-players, i.e., $x(t)$, which from now on we refer to as the \emph{abstract state} as it is abstracting away the heterogeneity of the population. 
Define ${\pi}'_j$ (resp. ${\pi}_i$) as the cumulative population proportion of coordinating (resp. anticoordinating) subpopulations, that is, the population proportion of those having thresholds equal to or less (resp. greater) than $\tau'_j$ (resp.  $\tau_i$), i.e., 
\begin{equation*}
   {\pi}'_j = \sum_{k=1}^j \rho'_k, \quad
   {\pi}_i = \sum_{k=1}^i \rho_k, 
\end{equation*}
where $j \in [\p']$, $i \in [\p]$,
and we define ${\pi}'_0 = {\pi}_0 =  {\pi}_{-1} = {\pi}'_{-1} =  0$,
${\pi}'_{\p'+1} = {\pi}'_{\p'},{\pi}_{\p+1} = {\pi}_{\p}$,
$\tau_{\p+1} = \tau'_0 = 0$, and $\tau_0 = \tau'_{\p'+1} = 1$.
  \begin{assumption} \label{ass:ass1}
The thresholds of the subpopulations are unique and satisfy the following:
    $$\forall k \!\in [\p]\cup\{0\}\forall l \in [\p']\cup\{0\} ({\pi}_k + {\pi}'_l \notin \! \{\tau_1,\ldots, \tau_{\p}, \tau'_1,\ldots, \tau'_{\p'} \}).$$  
\end{assumption}
In words, it is assumed that the sum of the cumulative population proportion of the first $k$ anticoordinating subpopulations and first $l$ coordinating subpopulations does not match any of the thresholds for the $(\p + \p')$ subpopulations.
\begin{prop} \label{prop2}
    Consider the continuous-time dynamics \eqref{eq: type_mixed}.
    Under \Cref{ass:ass1}, the evolution of the population proportion of $\1$-players $x(t)$ is governed by 
    \begin{gather} 
    \dot{x} \in \mathcal{X}(x), \label{eq: abstract_het}\\
\scalebox{0.94}{$
   \mathcal{X}(x) = \begin{cases}
   [{\pi}_{i-1} + {\pi}'_{j}-x, {\pi}_{i} + {\pi}'_{j}-x]  \hspace{50pt}\text{if }  \exists i (x = \tau_{i}),&
    \\
     [{\pi}_{i} + {\pi}'_{j-1}-x, {\pi}_{i} + {\pi}'_{j}-x] \hspace{50pt} \text{if }  \exists j (x = \tau'_{j}),&\\
     \{{\pi}_{i} + {\pi}'_{j} -x\}  \\ \quad \qquad   \text{if }  \exists i,j (x \in   (\max \{\tau'_{j}, \tau_{i+1}\},\min \{\tau'_{j+1}, \tau_{i} \})),
    \end{cases}
   $ } \nonumber
\end{gather}
where for $x = \tau_i$, $j=\max \{k \in [\p'] \cup \{0\}| \tau'_k < \tau_{i} \} $  and  for $x = \tau_j'$,  $i$ is equal to $\max \{k \in [\p] \cup \{0\}| \tau'_j < \tau_{k} \} $.
\end{prop}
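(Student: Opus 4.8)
The plan is to derive the abstract dynamics \eqref{eq: abstract_het} directly from the full continuous-time dynamics \eqref{eq: type_mixed} by summing the component maps. Since $x = \sum_{p=1}^{\p+\p'} x_p$, the key identity is that the set-valued map governing $\dot{x}$ is the Minkowski sum $\mathcal{X}(x) = \sum_{p=1}^{\p+\p'} \V_p(\x)$, because any selection $\dot{x}_p \in \V_p(\x)$ sums to a selection $\dot{x} \in \mathcal{X}(x)$ and conversely each component map is an interval so the sum is exactly the interval of summed endpoints. The crucial observation, which I would establish first, is that every $\V_p$ in \eqref{eq: type_mixed} depends on $\x$ only through the scalar $x$; hence $\mathcal{X}$ is genuinely a map of the scalar $x$ alone, and this self-consistency is what makes the reduction legitimate.

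First I would fix a value of $x$ and partition the index set $[\p+\p']$ according to which branch of \eqref{eq: type_mixed} is active. For an anticoordinating subpopulation $p \leq \p$ with threshold $\tau_p$, the branch is $\{\rho_p - x_p\}$ when $x < \tau_p$ and $\{-x_p\}$ when $x > \tau_p$; for a coordinating subpopulation $p > \p$ with threshold $\tau'_{\p+\p'+1-p}$, it is $\{-x_p\}$ below the threshold and $\{\rho_p - x_p\}$ above. I would then evaluate $\mathcal{X}(x) = \sum_p \V_p(\x)$ by collecting the ``inflow'' contributions $\rho_p$ (those subpopulations whose preferred strategy at $x$ is $\1$) against the uniform $-x_p$ terms that sum to $-x$. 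Using the ordering $\tau_1 > \cdots > \tau_{\p}$ and $\tau'_1 < \cdots < \tau'_{\p'}$ together with the cumulative proportions $\pi_i, \pi'_j$, the anticoordinators contributing their $\rho_p$ are exactly those with $\tau_p > x$, i.e.\ the first $i$ of them when $\tau_{i+1} < x < \tau_i$, contributing $\pi_i$; the coordinators contributing are those with $\tau'_{\p+\p'+1-p} < x$, i.e.\ the first $j$ of them when $\tau'_j < x < \tau'_{j+1}$, contributing $\pi'_j$. This yields the singleton branch $\{\pi_i + \pi'_j - x\}$ on the open intervals, with the index-selection rules for $i$ and $j$ as stated.

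For the two interval-valued branches I would treat $x = \tau_i$ and $x = \tau'_j$ separately. At $x = \tau_i$, subpopulation $i$ is exactly at its threshold, so $\V_i = [-x_i, \rho_i - x_i]$ while all other subpopulations remain in a singleton branch; summing gives an interval whose width is $\rho_i$, with lower endpoint $\pi_{i-1} + \pi'_j - x$ (subpopulation $i$ in outflow) and upper endpoint $\pi_i + \pi'_j - x$ (subpopulation $i$ in inflow), matching \eqref{eq: abstract_het}. The symmetric computation at $x = \tau'_j$ gives $[\pi_i + \pi'_{j-1} - x, \pi_i + \pi'_j - x]$. Here \Cref{ass:ass1} is essential and is the step I expect to be the main obstacle: it guarantees that at most one subpopulation sits at its threshold for any given $x$, since $\pi_k + \pi'_l$ never coincides with any threshold ensures no two distinct thresholds can be simultaneously active at a configuration reachable by the abstract state, so the interval branches never overlap and the three cases are exhaustive and mutually exclusive. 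Without this assumption, two subpopulations could straddle thresholds at once, the Minkowski sum would be a wider interval, and the clean three-case form would break down; so the careful bookkeeping under \Cref{ass:ass1} — verifying that the active-index bookkeeping is consistent and that exactly one threshold is hit at a time — is the delicate part of the argument.
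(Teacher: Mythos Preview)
Your approach is essentially the same as the paper's: fix the scalar $x$, case-split on whether $x$ lies strictly between consecutive thresholds or equals some $\tau_i$ or $\tau'_j$, and sum the component maps $\V_p$ (your Minkowski-sum framing is just a clean way to phrase the paper's direct summation). One clarification: the only part of \Cref{ass:ass1} needed here is that the thresholds are \emph{distinct}, which immediately gives that at most one subpopulation can sit at its threshold for any $x$; the condition that $\pi_k + \pi'_l$ never equals a threshold plays no role in this proposition (it is used later to separate clean-cut from ruffled equilibria), so your justification invoking that second clause is misplaced and the ``main obstacle'' you anticipate is actually trivial.
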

We refer to \eqref{eq: abstract_het} as the \emph{abstract dynamics}.
\subsection{Equilibrium points}
     Inspired by the definitions in \cite{ramazi2017asynchronous} and based on \Cref{ex:abstract}, we claim that the equilibria of the population dynamics are either \emph{clean-cut} or \emph{ruffled} as defined in the following.
\begin{definition}
    A \textbf{\emph{clean-cut population state}}, {is a population state $\mb{c}^{ij}, i \in [\p],j \in [\p'],$ } defined by 
    $$\mb{c}^{ij} = ({\rho}_1,\ldots, {\rho}_{i},0,\ldots, 0,{\rho}'_{j}, \ldots ,{\rho}'_1).$$
    The abstract state $x$ at $\mb{c}^{ij}$ is denoted by ${c}^{ij} = {\pi}_i + {\pi}'_j$. 
    The state at which no coordinating (resp. anticoordinating) agents adopt strategy $\1$, is denoted by $\mb{c}^{i0}$ (resp. $\mb{c}^{0j}$).
\end{definition}
    In a clean-cut population state, all members of every subpopulation adopt the same strategy
    in a way that all coordinating (resp. anticoordinating) subpopulations with thresholds equal to or less (resp. greater) than that of some benchmark coordinating (resp. anticoordinating) subpopulation $j \in [\p']$ (resp. $i \in [\p]$) play strategy $\1$ and the remaining subpopulations play strategy $\2$.  
\begin{definition}
    A {\textbf{\emph{ruffled anticoordinator-driven population state}}} is a population  state $\mb{a}^{ij}, i \in [\p],j \in [\p'],$  defined by
     \begin{equation*}
    \mb{a}^{ij} = 
    \big({\rho}_1,\ldots, {\rho}_{i-1},\tau_{i} - ({\pi}'_{j}+ {\pi}_{i-1}), 0,
    \ldots,0 ,{\rho}'_{j}, \ldots, {\rho}'_{1}\big),
    \end{equation*}
     where $j$ equals $\max \{k \in [\p'] \cup \{0\}| \tau'_k < \tau_{i} \} $.
    The abstract state at $\mb{a}^{ij}$ is denoted by $a^{ij}=\tau_{i}$.
    For $j=0$, no coordinators play $\1$.
\end{definition}
In a ruffled anticoordinator-driven population state, the population proportion of strategy-$\1$ players equals the threshold of some benchmark anticoordinating subpopulation $i$ such that all coordinating  (resp. anticoordinating) subpopulations with a threshold less (resp. greater) than that of subpopulation $i$ play strategy $\1$.
\begin{definition}
    A {\textbf{\emph{{ruffled coordinator-driven} population state}}} is a population state $\mb{o}^{ij}, i \in [\p], j\in [\p']$, defined by
     \begin{align*}
    \mb{o}^{ij} = \big({\rho}_1,\ldots, {\rho}_{i}, 0,\ldots,0,& \tau'_{j} - ({\pi}'_{j-1} + {\pi}_{i}),{\rho}'_{j-1}\ldots, {\rho}_1'\big), 
    \end{align*}
      where  $i= \max \{k \in [\p] \cup \{0\}| \tau'_j < \tau_{k} \} $.
   The abstract state at $\mb{o}^{ij}$ is denoted by $o^{ij}=\tau'_{j}$.
   For $i=0$, no anticoordinators play $\1$.
\end{definition}
 In a ruffled coordinator-driven population state, the population proportion of strategy-$\1$ players equals the threshold of some benchmark coordinating subpopulation $j$ such that all coordinating  (resp. anticoordinating) subpopulations with a threshold less (resp. greater) than that of subpopulation $j$ play strategy $\1$.
 
  In \Cref{ex:abstract}, $\x^*_1$ (resp. $x^*_1)$ is a clean-cut equilibrium point for \eqref{eq:example2} (resp. \eqref{eq: example}) and equals $\mb{c}^{11}$ (resp. ${c}^{11}$),
    {and} $\x^*_3$ (resp. $x^*_3$) is a ruffled anticoordinator-driven equilibrium point for \eqref{eq:example2} (resp. \eqref{eq: example}) and equals $\mb{a}^{12}$ (resp. ${a}^{12}$).
 {Finally}, $\x^*_2$ (resp. $x^*_2$) is a ruffled coordinator-driven equilibrium point for \eqref{eq:example2} (resp. \eqref{eq: example})
    and equals $\mb{o}^{12}$ (resp. ${o}^{12}$).
   From now on, for the sake of readability, we drop the word ``ruffled'' and simply say ``(anti)coordinator-driven.'' 
The following lemma proves that the equilibria of the continuous-time population and abstract dynamics are one of these three population states.
\begin{lemma} \label{lem:eqMixedAbstract}
 The following hold under \Cref{ass:ass1}:
 \begin{enumerate}
  \item The equilibrium points of the dynamics \eqref{eq: type_mixed} and \eqref{eq: abstract_het} are either clean-cut, anticoordinator-driven or coordinator-driven population states.
   \item ${c}^{ij}$ (resp. $\mb{c}^{ij}$) is an equilibrium of \eqref{eq: abstract_het} (resp. \eqref{eq: type_mixed}) iff
    \begin{align} \label{eq:type0}
    \max \{\tau_{i+1}, \tau'_{j} \}< {\pi}'_{j}+ {\pi}_{i} < \min \{\tau_{i}, \tau'_{j+1} \}.
    \end{align}
    \item ${a}^{ij}$ (resp. $\mb{a}^{ij}$) is an          equilibrium \eqref{eq: abstract_het} (resp. \eqref{eq: type_mixed}) iff
    \begin{align} \label{eq:type1}
    0 < \tau_{i} - ({\pi}'_{j} + {\pi}_{i-1}) < {\rho}_{i}.
\end{align}
  \item ${o}^{ij}$ (resp. $\mb{o}^{ij}$) is an          equilibrium \eqref{eq: abstract_het} (resp. \eqref{eq: type_mixed}) iff
    \begin{align} \label{eq:type2}
      0 < \tau'_{j} - ({\pi}_{i} + {\pi}'_{j-1})<  {\rho}'_{j}.
    \end{align}
     \end{enumerate}
\end{lemma}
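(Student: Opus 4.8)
The plan is to prove \Cref{lem:eqMixedAbstract} by leveraging \Cref{prop2}, which already reduces the analysis of the continuous-time dynamics \eqref{eq: type_mixed} to the one-dimensional abstract dynamics \eqref{eq: abstract_het}. First I would establish the equivalence between equilibria of the two systems. Observe that a population state $\x^*$ is an equilibrium of \eqref{eq: type_mixed} if and only if $\bm 0 \in \bm{\mathcal V}(\x^*)$, which forces each coordinate to satisfy $0 \in \V_p(\x^*)$. For a subpopulation whose threshold is strictly below $x^* = \sum_p x^*_p$, the only admissible value is $x^*_p = \rho_p$ (all play $\1$); for one strictly above, $x^*_p = 0$; and only the at-most-one subpopulation whose threshold equals $x^*$ (uniqueness of thresholds from \Cref{ass:ass1}) may take an interior value in $[0,\rho_p]$. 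This structural observation immediately yields part \emph{(1)}: if no threshold equals $x^*$, every subpopulation is saturated, giving a clean-cut state $\mb c^{ij}$; if $x^* = \tau_i$ for an anticoordinating subpopulation, all other subpopulations saturate and subpopulation $i$ takes the residual value, giving exactly the anticoordinator-driven form $\mb a^{ij}$; symmetrically $x^* = \tau'_j$ yields $\mb o^{ij}$.

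Next I would handle the abstract-dynamics side. Since $\X(x)$ is a singleton $\{\pi_i+\pi'_j - x\}$ on each open interval between consecutive thresholds, an equilibrium $x^*$ in such an interval requires $x^* = \pi_i + \pi'_j$, i.e., $x^*$ is a clean-cut abstract value $c^{ij}$; it lies in the interval precisely when the two-sided inequality \eqref{eq:type0} holds, which is exactly the condition defining the relevant interval $(\max\{\tau'_j,\tau_{i+1}\}, \min\{\tau'_{j+1},\tau_i\})$ from \Cref{prop2}. This proves part \emph{(2)}. For the boundary cases I would use the set-valued branches of $\X$: at $x = \tau_i$ the image is the interval $[\pi_{i-1}+\pi'_j - \tau_i,\ \pi_i+\pi'_j - \tau_i]$, and $0$ belongs to it iff $\pi_{i-1}+\pi'_j \le \tau_i \le \pi_i + \pi'_j$; after rearranging with $\pi_i - \pi_{i-1} = \rho_i$ and noting \Cref{ass:ass1} rules out the endpoints, this is exactly \eqref{eq:type1}, proving part \emph{(3)}. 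The coordinator-driven case \emph{(4)} is symmetric, using the branch $[\pi_i+\pi'_{j-1}-\tau'_j,\ \pi_i+\pi'_j-\tau'_j]$ and yielding \eqref{eq:type2}.

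Finally I would close the loop between the abstract and full dynamics. It is not enough that $x^*$ is an abstract equilibrium; I must verify that the corresponding population state reconstructed coordinate-by-coordinate (saturate the subpopulations off-threshold, assign the residual to the at-most-one on-threshold subpopulation) is genuinely a fixed point of the full inclusion and that its residual coordinate lies in the legal range $[0,\rho_p]$. For the anticoordinator-driven state this residual is $\tau_i - (\pi'_j + \pi_{i-1})$, and the condition $0 < \tau_i - (\pi'_j+\pi_{i-1}) < \rho_i$ in \eqref{eq:type1} is exactly what guarantees it is a valid interior value — so the abstract and full equilibrium conditions coincide, giving the ``iff'' in each part. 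The main obstacle, and the step requiring the most care, is the bookkeeping of which index pair $(i,j)$ is forced at a given abstract value: I must confirm that the choice $j = \max\{k \mid \tau'_k < \tau_i\}$ (and its coordinator-driven analogue) in \Cref{prop2} matches the saturation pattern dictated by $0 \in \V_p(\x^*)$, and that \Cref{ass:ass1} is precisely what prevents a cumulative proportion $\pi_k+\pi'_l$ from accidentally coinciding with a threshold, which would otherwise create degenerate boundary equilibria and break the clean trichotomy. Once this consistency is pinned down, parts \emph{(2)}--\emph{(4)} follow by direct inspection of the three branches of $\X$, and part \emph{(1)} follows from exhausting the three location cases of $x^*$ relative to the threshold set.
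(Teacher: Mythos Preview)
Your proposal is correct and mirrors the paper's own proof: both exhaust the three possible locations of $x^*$ relative to the threshold set to obtain the trichotomy in Part~1, read the iff conditions in Parts~2--4 directly off the three branches of $\X$ from \Cref{prop2}, invoke \Cref{ass:ass1} to rule out the degenerate boundary coincidences, and then verify that the residual coordinate of the reconstructed population state lies in $[0,\rho_p]$. One slip to fix when you write it out: your saturation rule ``threshold strictly below $x^*$ forces $x_p^*=\rho_p$'' holds only for coordinating subpopulations; for an anticoordinating subpopulation with $\tau_p<x^*$ the branch $\V_p(\x^*)=\{-x_p^*\}$ forces $x_p^*=0$, so the direction of saturation depends on the subpopulation type and not merely on which side of $x^*$ the threshold lies.
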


    According to \Cref{lem:eqMixedAbstract}, for every equilibrium of the continuous-time population dynamics, e.g., $\mb{c}^{ij}$, there is exactly one \emph{associated} abstract equilibrium point, i.e., $c^{ij}$.

\subsection{Global stability analysis} \label{subsec: globalstabilityanalysis}
The abstract dynamics and, in turn, the continuous-time population dynamics may admit several clean-cut, anticoordinator-driven, or coordinator-driven equilibrium points. 

Denote the set of clean-cut, anticoordinator-driven, and coordinator-driven equilibrium points of the continuous-time population dynamics by $\bm{\mathcal{Q}}^c$, $\bm{\mathcal{Q}}^a$, and $\bm{\mathcal{Q}}^o$, respectively.
When there is no equilibrium point of either type, the corresponding set will be empty.
Assume that there are altogether $\mathtt{Q} \in \mathbb{N}$ equilibrium points of clean-cut or anticoordinator-driven type.
Arrange the clean-cut or anticoordinator-driven equilibrium points of the continuous-time population dynamics \eqref{eq: type_mixed}  in the ascending order of their associated abstract equilibrium points.
Hence, we have
$$
  q^*_1 <  q^*_2  < \ldots < q^*_{\mathtt{Q}}, 
$$
where $q^*_k$ is the abstract state at $\mb{q}^*_k \in \bm{\mathcal{Q}}^a \cup \bm{\mathcal{Q}}^c$ for $k \in [\mathtt{Q}]$.
A set $\bm{\mathcal{M}}$ is \emph{attractive} under differential inclusion \eqref{eq_differentialInclusion} from a set $\bm{\mathcal{U}}$ if for each solution $\x(t)$ with $\x(0) \in \bm{\mathcal{U}}$ and each open $\epsilon$-neighborhood  of $\bm{\mathcal{M}}$, there exists time $T>0$ such that $\x(t)$ falls in $\epsilon$-neighborhood of $\bm{\mathcal{M}}$ for all $t \geq T$ \cite{MAYHEW20111045}.
The union of all sets  from which $\bm{\mathcal{M}}$ is attractive under differential inclusion $\bm{\V}$ is the \emph{basin of attraction} of $\bm{\mathcal{M}}$ under $\bm{\V}$ \cite{MAYHEW20111045}.

According to \Cref{lem:betweentwoq}, between every two consecutive anticoordinator-driven and/or clean-cut equilibria $q^*_{k}$ and $q^*_{k+1}$, $k\in [\mathtt{Q}-1]$, there exists exactly one coordinator-driven equilibrium, which we denote by $q^*_{k,k+1}$. 
The associated equilibrium in the continuous-time dynamics is denoted by $\mb{q}^*_{k,k+1}$.
Define $q^*_{0,1} = 0$ and $q^*_{\mathtt{Q},\mathtt{Q}+1} = 1$.
The following theorem reveals the global behavior of the continuous-time population dynamics.
\begin{theorem} \label{thm:ROA}
    Consider the continuous-time population dynamics \eqref{eq: type_mixed} and its associated abstract dynamics \eqref{eq: abstract_het}.
     Under \Cref{ass:ass1}, 
    \begin{enumerate}
        \item  
        Each $\mb{q}^*_{k-1,k}, k \in \{2,\ldots,\mathtt{Q}\},$ is unstable, and 
        each $\mb{q}^*_k, k \in [\mathtt{Q}]$, is 
        asymptotically stable with the basin of attraction
\begin{equation*}
    \bm{\mathcal{A}}(\mb{q}^*_k) 
    =
       \{ \x\in\bm{\X}_{ss} |
       x\in (q^*_{k-1,k}, q^*_{k,k+1})
    \}. 
\end{equation*}
\item 
The limit set of every point in the set $\{\x\in \bm{\X}_{ss} \vert x = {q}^*_{k,k+1} \}$ for $ k \in [\mathtt{Q}-1]$ is   
${\{}\bm{q}^*_{k}, \bm{q}^*_{k+1},\bm{q}^*_{k,k+1}{\}}$.
    \end{enumerate}
\end{theorem}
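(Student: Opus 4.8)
The plan is to reduce the entire analysis to the one-dimensional abstract dynamics \eqref{eq: abstract_het} furnished by \Cref{prop2}, read off the global phase portrait of that scalar differential inclusion, and then lift the conclusions back to the full state $\x$ using the fact that in \eqref{eq: type_mixed} each component map $\V_p$ depends on $\x$ only through the scalar $x=\sum_p x_p$. First I would record the structure of $\mathcal{X}(x)$: on each open interval strictly between two consecutive thresholds the map is the single value $c-x$, where $c$ is the clean-cut total of that interval, so $\dot{x}=c-x$ is an affine contraction toward $c$; at a threshold the image is the closed interval joining the two one-sided values. Combined with \Cref{lem:eqMixedAbstract} (which identifies the three equilibrium types and their defining inequalities \eqref{eq:type0}--\eqref{eq:type2}) and the interleaving statement \Cref{lem:betweentwoq}, this yields the ordering $q^*_{k-1,k}<q^*_k<q^*_{k,k+1}$, with clean-cut/anticoordinator-driven points $q^*_k$ separated by coordinator-driven points $q^*_{k,k+1}$.

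The core claim I would establish is that $\dot{x}$ keeps a constant sign along every selection on each stretch between adjacent equilibria: $\dot{x}>0$ on $(q^*_{k-1,k},q^*_k)$ and $\dot{x}<0$ on $(q^*_k,q^*_{k,k+1})$. The argument is that these stretches contain no equilibrium, hence $0\notin\mathcal{X}(x)$ there; since $\mathcal{X}(x)=\{c-x\}$ is continuous on the open subintervals and at each intermediate non-equilibrium threshold the image is the interval spanned by the two adjacent one-sided values, a sign reversal would force a zero, i.e.\ an equilibrium, a contradiction. A direct check of the one-sided values at the endpoints then fixes the sign: by \eqref{eq:type1} the one-sided flows straddle $\tau_i$ so an anticoordinator-driven point attracts from both sides; by \eqref{eq:type0} the affine flow attracts toward the clean-cut total; and by \eqref{eq:type2} both one-sided values point away from $\tau'_j$, making each coordinator-driven point a repeller. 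Monotonicity of $x(t)$ on each stretch then gives that every scalar solution starting in $(q^*_{k-1,k},q^*_{k,k+1})$ converges to $q^*_k$, while $q^*_{k,k+1}$ is unstable; this is precisely the claimed basin once expressed through $\x$.

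To lift convergence to the full vector, I would use that after finite time $x(t)$ enters and stays in an open interval between two thresholds (guaranteed by the monotone approach to $q^*_k$, except for the single threshold equal to $q^*_k$ itself). There each subpopulation $p$ sits in a region where $\V_p(\x)$ is the singleton $\{\rho_p-x_p\}$ or $\{-x_p\}$, so $x_p(t)$ relaxes exponentially to $\rho_p$ or $0$. For a clean-cut target this pins every coordinate and gives $\x\to\mb{c}^{ij}$; for an anticoordinator-driven target $q^*_k=\tau_i$ every coordinate but the $i$-th is pinned, and the remaining coordinate is recovered from the identity $x=\sum_p x_p$ as $x_i\to\tau_i-(\pi'_j+\pi_{i-1})$, the ruffled value defining $\mb{a}^{ij}$. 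Hence $\x(t)\to\mb{q}^*_k$, which together with instability of $\mb{q}^*_{k,k+1}$ proves part~1 with the stated basin $\bm{\mathcal{A}}(\mb{q}^*_k)$.

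For part~2 I would start a solution at an arbitrary $\x_0$ with abstract value $q^*_{k,k+1}=\tau'_j$. Because $\mathcal{X}(\tau'_j)$ is an interval containing negative, zero, and positive numbers (again by \eqref{eq:type2}), the scalar system admits exactly three behaviors: the selection keeping $x\equiv\tau'_j$, under which the off-threshold coordinates relax to their clean-cut values and the threshold coordinate is recovered from $x=\sum_p x_p$, giving $\x\to\mb{q}^*_{k,k+1}$; a selection eventually making $\dot{x}<0$, which pushes $x$ monotonically into $(q^*_{k-1,k},q^*_{k,k+1})$ and hence, by part~1, $\x\to\mb{q}^*_k$; and symmetrically $\x\to\mb{q}^*_{k+1}$. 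Taking the union over all solutions as in \Cref{def_Brikhoff} produces the limit set $\{\mb{q}^*_k,\mb{q}^*_{k+1},\mb{q}^*_{k,k+1}\}$. The step I expect to be the main obstacle is the rigorous sign/monotonicity analysis of the scalar inclusion across the non-smooth thresholds---ruling out sign reversals without producing a zero and verifying that all solutions of the inclusion (not merely one trajectory) behave monotonically---together with the accompanying argument that convergence of the scalar $x(t)$ forces convergence of the full vector via the conservation identity $x=\sum_p x_p$.
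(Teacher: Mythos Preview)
Your proposal is correct and follows essentially the same approach as the paper: reduce to the scalar abstract dynamics, establish the sign structure of $\mathcal{X}(x)$ between consecutive equilibria (the paper proves this in the course of \Cref{lem:betweentwoq} via $\bar f$ and $\underline f$, exactly as you sketch), and then lift convergence to the full state using exponential relaxation of the off-threshold coordinates together with the identity $x=\sum_p x_p$ for the remaining ruffled coordinate. The only cosmetic differences are that the paper packages the scalar stability via the Lyapunov function $V_k(x)=\tfrac12(x-q^*_k)^2$ and \cite[Theorem~1]{cortes2008discontinuous} rather than your direct monotonicity argument, and it proves instability of $\mb{o}^{ij}$ by exhibiting an explicit escaping initial condition rather than by reading off the one-sided signs; both are equivalent in this one-dimensional setting.
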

Indeed, it can be shown that the clean-cut equilibrium points are exponentially asymptotically stable and the anticoordinator-driven equilibrium points are finite-time stable.
\begin{cor}
\label{lem: stabilityOfType01}
   Under \Cref{ass:ass1}, for the continuous-time population dynamics \eqref{eq: type_mixed} and the associated abstract dynamics \eqref{eq: abstract_het},
   \begin{enumerate}
   \item  there exists at least one clean-cut or anticoordinator-driven equilibrium point;
    \item
    if the dynamics admit only one equilibrium point, it is  globally asymptotically stable.
  \end{enumerate}
\end{cor}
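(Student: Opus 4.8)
The plan is to reduce everything to the one-dimensional abstract dynamics \eqref{eq: abstract_het} and then transfer the conclusions to the full population state through \Cref{thm:ROA}. Write $\pi_\p$ (resp. $\pi'_{\p'}$) for the total anticoordinating (resp. coordinating) population proportion, so that $\pi_\p+\pi'_{\p'}=1$. First I would establish existence of an equilibrium by examining $\mathcal{X}$ at the endpoints of $[0,1]$. Near $x=0$ every anticoordinator prefers $\1$ and every coordinator prefers $\2$, so the (singleton) value of $\mathcal{X}(x)$ tends to $\pi_\p\ge 0$ as $x\to 0^+$; the degenerate case $\pi_\p=0$ (a purely coordinating population) makes $x=0$ the clean-cut equilibrium $\mb{c}^{00}$ and the claim is then immediate, so assume $\pi_\p>0$. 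Symmetrically, as $x\to 1^-$ the value of $\mathcal{X}(x)$ tends to $\pi'_{\p'}-1=-\pi_\p<0$. Since $\mathcal{X}$ is upper semicontinuous with interval values that fill in the jumps at thresholds, the passage from a positive value near $0$ to a negative value near $1$ forces some $x^*$ with $0\in\mathcal{X}(x^*)$, and by \Cref{lem:eqMixedAbstract}(1) this $x^*$ is clean-cut, anticoordinator-driven, or coordinator-driven.

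The crux of part 1 is to show that the smallest such $x^*$ cannot be coordinator-driven. On $(0,x^*)$ there is no equilibrium, hence $0\notin\mathcal{X}(x)$ there; since the sign of the values of $\mathcal{X}$ is locally constant across every non-equilibrium threshold and equals $+$ at $0^+$, we obtain $\dot x>0$ throughout $(0,x^*)$, so the flow enters $x^*$ from below. If $x^*=\tau'_j$ were coordinator-driven, then just below $x^*$ coordinator $j$ still prefers $\2$ and the value of $\mathcal{X}$ tends to $\pi_i+\pi'_{j-1}-\tau'_j$, which is strictly negative by the left inequality of \eqref{eq:type2}; this contradicts $\dot x>0$ just below $x^*$. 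Therefore $x^*$ is clean-cut or anticoordinator-driven, so $\mathtt{Q}\ge 1$ and part 1 follows. I expect the sign bookkeeping at thresholds to be the main obstacle, since one must argue carefully that upper semicontinuity preserves the sign of $\mathcal{X}$ across the jumps at non-equilibrium thresholds, and handle the boundary/degenerate populations separately.

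For part 2, suppose the dynamics admit a unique equilibrium. By part 1 at least one equilibrium is clean-cut or anticoordinator-driven, so the unique one is of this type, giving $\mathtt{Q}=1$ and $\bm{\mathcal{Q}}^o=\emptyset$. A purely coordinating population always has the two distinct equilibria $x=0$ and $x=1$, so uniqueness forces $\pi_\p>0$; moreover the all-$\1$ clean-cut state is never an equilibrium when anticoordinators are present, because its abstract value $1$ fails the clean-cut condition \eqref{eq:type0} via $\tau_\p<1$, and the all-$\2$ state is likewise excluded, so $q^*_1\in(0,1)$. Then \Cref{thm:ROA}(1) with $\mathtt{Q}=1$, $q^*_{0,1}=0$, and $q^*_{1,2}=1$ yields that $\mb{q}^*_1$ is asymptotically stable with basin $\{\x\in\bm{\X}_{ss}\mid x\in(0,1)\}$. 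It remains to absorb the two corner states $x=0$ and $x=1$: since $\dot x(0^+)=\pi_\p>0$ and $\dot x(1^-)=-\pi_\p<0$, any trajectory starting at a corner immediately satisfies $x(t)\in(0,1)$ and hence enters the basin and converges to $\mb{q}^*_1$. Every point of $\bm{\X}_{ss}$ therefore lies in the region of attraction of $\mb{q}^*_1$, which establishes global asymptotic stability.
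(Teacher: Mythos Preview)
Your argument is correct and aligns with the paper's intended route: the paper states \Cref{lem: stabilityOfType01} as a corollary of \Cref{thm:ROA} (together with \Cref{lem:eqMixedAbstract} and the structure established in \Cref{lem:betweentwoq}) and does not give it a separate proof. Your Part 2 is exactly the specialization of \Cref{thm:ROA} to $\mathtt{Q}=1$, with the boundary points $x=0,1$ absorbed via the signs $\dot x(0^+)=\pi_\p>0$ and $\dot x(1^-)=-\pi_\p<0$; this matches the remark in the proof of \Cref{thm:ROA} that the basins of $q^*_1$ and $q^*_{\mathtt{Q}}$ extend to the endpoints.

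For Part 1 you supply more detail than the paper does. The paper's only explicit statement is the sentence buried in the proof of \Cref{lem:betweentwoq}: ``it is straightforward to show that the leftmost and rightmost equilibrium points are clean-cut or anticoordinator-driven.'' Your endpoint-plus-sign argument (positive value of $\mathcal{X}$ near $0$, negative near $1$, upper semicontinuity forcing a zero, then ruling out a coordinator-driven leftmost zero via the left inequality in \eqref{eq:type2}) is precisely a fleshed-out version of that sign analysis, and the place you flag as delicate---preservation of sign across non-equilibrium thresholds---is indeed handled by the interval-valued upper semicontinuity of $\mathcal{X}$, exactly the mechanism used implicitly in the $\bar f,\underline f$ arguments in the proofs of \Cref{lem:betweentwoq} and \Cref{thm:ROA}. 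One cosmetic point: the notation $\mb{c}^{00}$ is not formally introduced in the paper (only $\mb{c}^{i0}$ and $\mb{c}^{0j}$ are), so you may wish to phrase the purely-coordinating endpoint case without it.
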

\subsection{The Birkhoff center of the continuous-time population dynamics}
The following proposition determines the Birkhoff center of the dynamical system induced by \eqref{eq: type_mixed}.
\begin{prop}\label{prop1}
 If Assumption \ref{ass:ass1} holds, the Birkhoff center of the dynamical system induced by the continuous-time population dynamics \eqref{eq: type_mixed} is 
 \begin{equation}
     \mathrm{BC}_{\text{M}} =  \bm{\mathcal{Q}}^c \cup \bm{\mathcal{Q}}^a \cup \bm{\mathcal{Q}}^o. \label{phiM}
 \end{equation} 
\end{prop}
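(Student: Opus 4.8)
The plan is to show that the set of recurrent points $\bm{\mathcal{R}}_{\bm{\Phi}}$ coincides \emph{exactly} with the finite set of equilibria $\bm{\mathcal{Q}}^c \cup \bm{\mathcal{Q}}^a \cup \bm{\mathcal{Q}}^o$. Since a finite set is closed, its closure---the Birkhoff center, by \Cref{def_Brikhoff}---is the set itself, and \eqref{phiM} follows immediately. I would prove the two inclusions separately, leaning entirely on the global-convergence picture furnished by \Cref{thm:ROA} rather than on any fresh computation.

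For $\bm{\mathcal{Q}}^c \cup \bm{\mathcal{Q}}^a \cup \bm{\mathcal{Q}}^o \subseteq \bm{\mathcal{R}}_{\bm{\Phi}}$, I would observe that every equilibrium $\x^*$ of \eqref{eq: type_mixed} satisfies $\bm 0 \in \bm{\mathcal{V}}(\x^*)$ (this is what \Cref{lem:eqMixedAbstract} characterizes), so the constant map $\y(t)\equiv\x^*$ is an admissible solution of the differential inclusion. Its limit set is the singleton $\{\x^*\}$, which contains $\x^*$; hence $\x^*\in\bm{\mathcal{L}}(\x^*)$ and $\x^*$ is recurrent. This disposes of all three equilibrium types at once.

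For the reverse inclusion I would take an arbitrary recurrent point $\x_0$, so $\x_0\in\bm{\mathcal{L}}(\x_0)$, and show it must be an equilibrium by stratifying $\bm{\X}_{ss}$ according to the value of the abstract state $x_0=\bm 1^\top\x_0\in[0,1]$. Using the conventions $q^*_{0,1}=0$ and $q^*_{\mathtt{Q},\mathtt{Q}+1}=1$, the value $x_0$ either lies in an open interval $(q^*_{k-1,k},q^*_{k,k+1})$ or equals a coordinator-driven threshold $q^*_{k,k+1}$ with $k\in[\mathtt{Q}-1]$; these exhaust $[0,1]$. In the first case \Cref{thm:ROA}(1) places $\x_0$ in the basin $\bm{\mathcal{A}}(\mb{q}^*_k)$, so every solution from $\x_0$ converges to $\mb{q}^*_k$ and $\bm{\mathcal{L}}(\x_0)=\{\mb{q}^*_k\}$; recurrence then forces $\x_0=\mb{q}^*_k$, an equilibrium. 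In the second case \Cref{thm:ROA}(2) gives $\bm{\mathcal{L}}(\x_0)=\{\mb{q}^*_k,\mb{q}^*_{k+1},\mb{q}^*_{k,k+1}\}$, and since this limit set is already the union over \emph{all} solutions issuing from $\x_0$, recurrence forces $\x_0$ to be one of these three equilibria. Either way $\x_0\in\bm{\mathcal{Q}}^c\cup\bm{\mathcal{Q}}^a\cup\bm{\mathcal{Q}}^o$.

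The crux is not any new estimate but the correct lifting of the essentially one-dimensional convergence result to the full state space $\bm{\X}_{ss}$: on each coordinator-driven slice $\{\x : x=q^*_{k,k+1}\}$ the inclusion \eqref{eq: type_mixed} is genuinely set-valued, so a continuum of solutions emanates from each point, and I must rely on \Cref{thm:ROA}(2) already reporting $\bm{\mathcal{L}}(\x_0)$ as the union over that entire solution family rather than along a single trajectory. Once this is granted, finiteness of the equilibrium set---there are only finitely many index pairs $(i,j)$ producing states $\mb{c}^{ij}$, $\mb{a}^{ij}$, $\mb{o}^{ij}$---makes $\bm{\mathcal{R}}_{\bm{\Phi}}$ closed, rendering the Birkhoff-center computation immediate and completing the proof.
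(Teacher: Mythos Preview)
Your proposal is correct and follows essentially the same approach as the paper's own proof: both invoke \Cref{thm:ROA} to conclude that the limit set of every point in $\bm{\X}_{ss}$ is contained in $\bm{\mathcal{Q}}^c \cup \bm{\mathcal{Q}}^a \cup \bm{\mathcal{Q}}^o$, whence the recurrent points coincide with the equilibria, and closure is immediate. Your write-up is in fact more explicit than the paper's, spelling out both inclusions and the finiteness argument for closure; the only cosmetic gap is that your dichotomy on $x_0$ omits the endpoints $x_0\in\{0,1\}$, but these are covered by the half-closed basins established in the proof of \Cref{thm:ROA}.
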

The following corollaries are direct results of the Proposition 
\ref{prop1} and
determine the Birkhoff centers of the dynamical systems induced by the continuous-time population dynamics capturing the behavior of the populations of all coordinators and all anticoordinators, respectively.
\begin{cor} \label{corCor}
   If Assumption \ref{ass:ass1} holds, the Birkhoff center of the continuous-time population dynamics associated with a population of all coordinators is
   \begin{align}
    \mathrm{BC}_{\text{C}} &=  \bm{\mathcal{Q}}^c \cup  \bm{\mathcal{Q}}^o.  \label{phiC}
 \end{align}
\end{cor}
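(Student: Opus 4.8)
The plan is to treat the exclusively coordinating population as the degenerate case of the mixed population dynamics in which there are no anticoordinating subpopulations, i.e.\ $\p = 0$, and then invoke Proposition \ref{prop1} essentially verbatim. Since the corollary is advertised as a direct consequence of that proposition, the entire task reduces to identifying which of the three families of equilibria survives when $\p = 0$, and then substituting.

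First I would record that setting $\p = 0$ collapses the {continuous-time} population dynamics \eqref{eq: type_mixed} onto its coordinator branch alone: every index $p$ now satisfies $p > \p$, so each $\V_p$ is governed by the coordinating case with threshold $\tau'_{\p'+1-p}$. Assumption \ref{ass:ass1} remains meaningful and nonvacuous, specializing (via $\pi_0 = 0$) to the requirement that $\pi'_l \notin \{\tau'_1,\ldots,\tau'_{\p'}\}$ for every $l \in [\p']\cup\{0\}$. Consequently both the abstract dynamics \eqref{eq: abstract_het} and the global convergence statement of Theorem \ref{thm:ROA} continue to hold, so the machinery underlying Proposition \ref{prop1} is available unchanged.

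The key step is to argue that the set of anticoordinator-driven equilibria is empty. By definition, a ruffled anticoordinator-driven state $\mb{a}^{ij}$ is indexed by a benchmark anticoordinating subpopulation $i \in [\p]$; with $\p = 0$ the index set $[\p]$ is empty, so no such state can be formed and $\bm{\mathcal{Q}}^a = \emptyset$. Equivalently, the characterization in part (3) of \Cref{lem:eqMixedAbstract}, namely the condition $0 < \tau_i - (\pi'_j + \pi_{i-1}) < \rho_i$ of \eqref{eq:type1}, cannot be satisfied because it presupposes an anticoordinating threshold $\tau_i$, of which there are none. Substituting $\bm{\mathcal{Q}}^a = \emptyset$ into the Birkhoff-center expression \eqref{phiM} of Proposition \ref{prop1} then gives $\mathrm{BC}_{\text{C}} = \bm{\mathcal{Q}}^c \cup \bm{\mathcal{Q}}^a \cup \bm{\mathcal{Q}}^o = \bm{\mathcal{Q}}^c \cup \bm{\mathcal{Q}}^o$, which is precisely \eqref{phiC}.

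The only point that genuinely requires care is verifying that Proposition \ref{prop1} does not silently assume the presence of at least one anticoordinator in its derivation. I would check that its proof rests solely on the equilibrium classification of \Cref{lem:eqMixedAbstract} and the attractivity/instability dichotomy of \Cref{thm:ROA}, each of which was established under Assumption \ref{ass:ass1} without requiring $\p \geq 1$; in particular, the clean-cut states $\mb{c}^{0j}$ and coordinator-driven states $\mb{o}^{0j}$ (for which no anticoordinator plays $\1$) remain well defined, so $\bm{\mathcal{Q}}^c$ and $\bm{\mathcal{Q}}^o$ are nonvacuous. Once this boundary case is confirmed, the reduction to the exclusively coordinating population is fully legitimate and the corollary follows with no further computation.
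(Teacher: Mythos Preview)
Your proposal is correct and matches the paper's approach exactly: the paper states that the corollary is a direct result of Proposition~\ref{prop1}, obtained by specializing to $\p=0$ so that $\bm{\mathcal{Q}}^a=\emptyset$ and \eqref{phiM} reduces to \eqref{phiC}. Your additional care in checking that the supporting results (\Cref{lem:eqMixedAbstract}, \Cref{thm:ROA}) do not implicitly require $\p\geq 1$ is a legitimate robustness check, but it does not depart from the paper's intended one-line derivation.
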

As for the population of anticoordinators, the result would be $\mathrm{BC}_{\text{A}} =  \bm{\mathcal{Q}}^a \cup  \bm{\mathcal{Q}}^c$.
However, in this case, these sets can be simplified as it can be shown that there is only one equilibrium (\Cref{lem: stabilityOfType01}) and it is either a clean-cut or ruffled anticoordinator-driven.
We summarize this possibility into the following state:
$$\mb{q} = ({\rho}_1, \ldots, {\rho}_{p-1}, \min \{\tau_p -  {\pi}_{p-1}, {\rho}_p \}, 0, \ldots, 0).$$
In view of \Cref{lem:eqMixedAbstract}, the condition for the existence of this equilibrium would be 
$
       {\pi}_{p-1} <  \tau_p$ and $\tau_{p+1} <  {\pi}_{p}
$.
\begin{cor} \label{corAnti}
   If Assumption \ref{ass:ass1} holds, the Birkhoff center of the continuous-time population dynamics describing the evolution of a population of $\p$ types of anticoordinators is
   \begin{align}
     \mathrm{BC}_{\text{A}} &= \{\mb{q} \}.\label{phiA}
 \end{align}
\end{cor}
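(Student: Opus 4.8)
The plan is to derive \Cref{corAnti} as a specialization of \Cref{prop1} to the case of an exclusively anticoordinating population, i.e., $\p' = 0$. First I would invoke \Cref{prop1}, which gives $\mathrm{BC}_{\text{M}} = \bm{\mathcal{Q}}^c \cup \bm{\mathcal{Q}}^a \cup \bm{\mathcal{Q}}^o$. Setting $\p' = 0$, there are no coordinating subpopulations, so every coordinator-driven state is vacuous and $\bm{\mathcal{Q}}^o = \emptyset$; this reduces the Birkhoff center to $\bm{\mathcal{Q}}^c \cup \bm{\mathcal{Q}}^a$. The crux is then to show that in this purely anticoordinating setting there is exactly one equilibrium, and that it is precisely the state $\mb{q} = ({\rho}_1, \ldots, {\rho}_{p-1}, \min\{\tau_p - {\pi}_{p-1}, {\rho}_p\}, 0, \ldots, 0)$ claimed in the statement.

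For uniqueness I would appeal directly to \Cref{lem: stabilityOfType01}, whose first part guarantees at least one clean-cut or anticoordinator-driven equilibrium and whose second part states that if the dynamics admit only one equilibrium it is globally asymptotically stable. To pin down that there is genuinely only one, I would argue via \Cref{lem:betweentwoq} (used in \Cref{thm:ROA}), which asserts that between every two anticoordinator-driven and/or clean-cut equilibria $q^*_k$ and $q^*_{k+1}$ there must lie a coordinator-driven equilibrium; since $\bm{\mathcal{Q}}^o = \emptyset$ when $\p' = 0$, no such intervening equilibrium can exist, forcing $\mathtt{Q} = 1$. Hence there is exactly one element in $\bm{\mathcal{Q}}^c \cup \bm{\mathcal{Q}}^a$, and $\mathrm{BC}_{\text{A}} = \{\mb{q}\}$ is a singleton.

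It then remains to identify this unique equilibrium with the explicit state $\mb{q}$ and verify the existence condition ${\pi}_{p-1} < \tau_p$ and $\tau_{p+1} < {\pi}_p$. I would handle this by the two cases embedded in the $\min$. When $\tau_p - {\pi}_{p-1} < {\rho}_p$, the abstract state sits at $\tau_p$ and condition \eqref{eq:type1} of \Cref{lem:eqMixedAbstract} (specialized to $j = 0$, so ${\pi}'_j = 0$) reads $0 < \tau_p - {\pi}_{p-1} < {\rho}_p$, giving an anticoordinator-driven equilibrium in $\bm{\mathcal{Q}}^a$. When instead $\tau_p - {\pi}_{p-1} \geq {\rho}_p$, the $\min$ returns ${\rho}_p$ and the state becomes clean-cut $\mb{c}^{p0}$, whose equilibrium condition \eqref{eq:type0} (with $j = 0$) becomes $\tau_{p+1} < {\pi}_p < \tau_p$; I would check this is consistent with the hypotheses. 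In either branch the resulting population state coincides with the stated $\mb{q}$, completing the identification.

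The main obstacle I anticipate is the bookkeeping needed to confirm that the single index $p$ appearing in $\mb{q}$ is well-defined and consistent across the two branches of the $\min$, i.e., that the benchmark subpopulation whose threshold straddles the cumulative proportion is the same $p$ in both cases. This requires carefully reconciling the equilibrium conditions \eqref{eq:type0} and \eqref{eq:type1} at $j = 0$ with the convention $\tau_{\p+1} = 0$ and the descending ordering $\tau_1 > \cdots > \tau_{\p}$, to verify that the existence inequality ${\pi}_{p-1} < \tau_p$ and $\tau_{p+1} < {\pi}_p$ is exactly the boundary condition selecting a unique $p$. Once the correct $p$ is isolated, the remaining verification that $\bm{0} \in \bm{\mathcal{V}}(\mb{q})$ is routine from \Cref{def_semicontinuousDynamics}.
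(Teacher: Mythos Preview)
Your proposal is correct and follows essentially the same route as the paper: specialize \Cref{prop1} to $\p'=0$ so that $\bm{\mathcal{Q}}^o=\emptyset$, then argue there is a single equilibrium in $\bm{\mathcal{Q}}^c\cup\bm{\mathcal{Q}}^a$ and identify it with $\mb{q}$ via the equilibrium conditions of \Cref{lem:eqMixedAbstract}. Your use of \Cref{lem:betweentwoq} to force $\mathtt{Q}=1$ is a welcome sharpening, since the paper only loosely cites \Cref{lem: stabilityOfType01} for uniqueness (which literally gives existence, not uniqueness); your argument makes that step rigorous.
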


So far, the steady-state behavior of the continuous-time population dynamics \eqref{eq: type_mixed} was determined.
The following section determines the asymptotic behavior of the discrete population dynamics \eqref{populationDynamicsDiscrete} as the population size approaches infinity. 

\section{The asymptotic behavior of the discrete population dynamics}

 The following theorem asserts the main result of this paper.
\begin{theorem} \label{thm:2}
    Consider the discrete population dynamics {for a population size $\N$} \eqref{populationDynamicsDiscrete}.
    {Let ${\mu}^{\frac{1}{\N}}$} be an
    invariant probability measure of the corresponding population dynamics Markov chain.
    Under \Cref{ass:ass1}, { for every vanishing sequence $\langle \frac{1}{\N} \rangle_{\N \in \mathcal{N}}$ we have $\lim_{ \frac{1}{\N} \to 0} {\mu}^{\frac{1}{\N}}(\bm{\mathcal{O}})=1$ where $\bm{\mathcal{O}}$ is any open set containing }
    \begin{itemize}
    \item $\mathrm{BC}_{\text{M}}$ if the population is a mixture of coordinating and anticoordinating subpopulations;
    \item $\mathrm{BC}_{\text{A}}$ if the population consists of  anticoordinating subpopulations;
    \item  $\mathrm{BC}_{\text{C}}$ if the population consists of  coordinating subpopulations.
    \end{itemize}
\end{theorem}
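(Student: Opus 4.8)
The plan is to prove \Cref{thm:2} as a direct consequence of the stochastic-approximation machinery assembled in the previous sections, with the three bullet points obtained by substituting the corresponding Birkhoff-center characterizations. Concretely, I would combine \Cref{propMarkov} (the discrete dynamics form a Markov chain), \Cref{lemGSAPS} (the family of these chains is a GSAP for \eqref{eq: type_mixed}), the fact already noted that \eqref{eq: type_mixed} is a good upper semicontinuous differential inclusion over the convex compact box $\bm{\mathcal{X}}_{ss}$, and \Cref{thm:implicationOFSandholm}, which converts these ingredients into the desired concentration of the invariant measures on any open neighborhood of the Birkhoff center.

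First I would set up the vanishing sequence. Since $\bm{\rho}$ is fixed and $\mathcal{N}=\{\N\mid \N\bm{\rho}\in\mathbb{Z}^{\p+\p'}\}$ is a countable, unbounded subset of $\mathbb{N}$, I would enumerate it in increasing order $\N_0<\N_1<\cdots$ and set $\epsilon_n=1/\N_n$, so that $\langle\epsilon_n\rangle_{n=0}^\infty$ is a vanishing sequence; any ordering of $\mathcal{N}$ by increasing $\N$ yields the same limit $1/\N\to0$, matching the statement. For each $\N\in\mathcal{N}$ the chain $\langle\mb{X}^{1/\N}_k\rangle_{k=0}^\infty$ of \Cref{propMarkov} lives on the finite state space $\bm{\mathcal{X}}_{ss}\cap\tfrac{1}{\N}\mathbb{Z}^{\p+\p'}$, so a stationary probability measure $\mu^{1/\N}$ exists by standard finite Markov chain theory; this is the measure appearing in the statement.

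Next I would verify the hypotheses of \Cref{thm:implicationOFSandholm} along this sequence. By \Cref{lemGSAPS} the collection $\langle\langle\mb{X}^{1/\N}_k\rangle_{k=0}^\infty\rangle_{\N\in\mathcal{N}}$ is a GSAP for the good upper semicontinuous differential inclusion \eqref{eq: type_mixed}, and each member is a Markov chain with invariant measure $\mu^{1/\N}$. \Cref{thm:implicationOFSandholm} then yields $\lim_{1/\N\to0}\mu^{1/\N}(\bm{\mathcal{O}})=1$ for every open set $\bm{\mathcal{O}}$ containing the Birkhoff center of the dynamical system $\bm{\Phi}$ induced by \eqref{eq: type_mixed}. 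It remains only to replace $\mathrm{BC}(\bm{\Phi})$ with its explicit form in each regime: \Cref{prop1} gives $\mathrm{BC}(\bm{\Phi})=\mathrm{BC}_{\text{M}}$ for a mixed population, \Cref{corCor} gives $\mathrm{BC}_{\text{C}}$ for an exclusively coordinating population, and \Cref{corAnti} gives $\mathrm{BC}_{\text{A}}$ for an exclusively anticoordinating population, which are exactly the three bullet points.

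The genuine difficulty of \Cref{thm:2} does not reside in this assembly, which is essentially bookkeeping once the sequence $\langle\epsilon_n\rangle$ is fixed; it was already discharged in \Cref{lemGSAPS} (verifying all four GSAP conditions, in particular the uniform concentration condition on the noise partial sums) and in \Cref{prop1} together with \Cref{corCor} and \Cref{corAnti} (determining the Birkhoff center from the global stability analysis of \Cref{thm:ROA}). The only points requiring care here are confirming that $\mathcal{N}$ furnishes a legitimate vanishing $\epsilon$-sequence and that a stationary measure exists for each finite $\N$; both are immediate.
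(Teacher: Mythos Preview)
Your proposal is correct and follows essentially the same approach as the paper: it assembles \Cref{propMarkov}, \Cref{lemGSAPS}, the existence of an invariant measure for each finite chain, and the Birkhoff-center identifications (\Cref{prop1}, \Cref{corCor}, \Cref{corAnti}), then invokes \Cref{thm:implicationOFSandholm} along the vanishing sequence $\langle 1/\N\rangle_{\N\in\mathcal{N}}$. The paper's proof is exactly this bookkeeping, with no additional ingredients.
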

 In words, consider a population of coordinators and anticoordinators who update their decisions asynchronously based on \eqref{eq: scor} and \eqref{eq: santi}, respectively.
 As the population size approaches infinity, the population state almost surely only visits the states close to the equilibria. 
 Consequently, the amplitude of the possible {perpetual} fluctuations {in the population proportion of $\1$-players} will converge to zero.
 This results in the following corollary.

\begin{cor} \label{cor_fluctuationsDoNotScaleWithN_discretePopulationDynamics_2}
Consider the population dynamics \eqref{populationDynamicsDiscrete}.
    Under \Cref{ass:ass1} and starting from a given initial condition,
   as the population size approaches infinity, the amplitude of the fluctuations in the population proportion of $\1$-players converges to zero with probability one.
\end{cor}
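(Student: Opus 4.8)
The plan is to derive \Cref{cor_fluctuationsDoNotScaleWithN_discretePopulationDynamics_2} as a direct consequence of \Cref{thm:2}, by translating the measure-theoretic concentration statement into a statement about the geometric amplitude of any fluctuation set. First I would observe that for each population size $\N \in \mathcal{N}$, the Birkhoff center (whichever of $\mathrm{BC}_{\text{M}}$, $\mathrm{BC}_{\text{A}}$, $\mathrm{BC}_{\text{C}}$ applies) is by \Cref{prop1} and its corollaries a \emph{finite} set of isolated equilibrium points, and in particular the abstract states $x$ at these equilibria form a finite set of real numbers in $[0,1]$. The key idea is that amplitude is measured through $\bm 1^\top \x = x$, the population proportion of $\1$-players, so I only need to control how far the abstract state can wander from this finite equilibrium set.

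The main steps, in order, would be: \emph{(i)} Fix an arbitrary $\alpha > 0$. Using the fact that the equilibria are finitely many isolated points, construct an open set $\bm{\mathcal{O}}$ that is the union of small open balls of radius $r$ (with $r$ chosen so that $2r < \alpha$ in the abstract coordinate $x = \bm 1^\top \x$) around each point of the Birkhoff center, chosen small enough that the balls are disjoint and each intersects the fluctuation region in a set whose spread in $x$ is at most $\alpha$. \emph{(ii)} Apply \Cref{thm:2} to this $\bm{\mathcal{O}}$ to conclude $\lim_{1/\N \to 0}\mu^{1/\N}(\bm{\mathcal{O}}) = 1$, so the invariant measure of the population dynamics Markov chain places asymptotically all of its mass on $\bm{\mathcal{O}}$. \emph{(iii)} Invoke the characterization of a fluctuation set $\bm{\mathcal{Y}}^{\N}$: since a fluctuation set is minimal and every one of its states is visited infinitely often under a persistent activation sequence, it must lie in the support of an ergodic component of an invariant measure; the concentration of mass on $\bm{\mathcal{O}}$ forces, for $\N$ large, any recurrent fluctuation set to be contained in $\bm{\mathcal{O}}$. \emph{(iv)} Conclude that for such $\N$ every state $\y^{\N}, \z^{\N} \in \bm{\mathcal{Y}}^{\N}$ lies within a single ball, whence $\vert \bm 1^\top \y^{\N} - \bm 1^\top \z^{\N}\vert \le 2r < \alpha$, i.e.\ the amplitude is below $\alpha$. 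Since $\alpha$ was arbitrary, the amplitude converges to zero.

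The hard part will be step \emph{(iii)}: making rigorous the passage from ``the invariant measure concentrates on $\bm{\mathcal{O}}$'' to ``a fluctuation set must lie inside $\bm{\mathcal{O}}$.'' The subtlety is that $\mu^{1/\N}(\bm{\mathcal{O}}) \to 1$ is an asymptotic statement in $\N$, whereas a fluctuation set is defined for a fixed finite $\N$; I would need to argue that any fluctuation set, being minimal and positively invariant, supports an invariant measure of its own, and that if a fluctuation set contained a state outside $\bm{\mathcal{O}}$, then by minimality the state would be visited with positive frequency, contradicting $\mu^{1/\N}(\bm{\mathcal{O}}^{c}) \to 0$ for large $\N$. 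Care is needed because the claim is a ``with probability one'' statement along a given realization and initial condition, so I would phrase the conclusion as: with probability one the trajectory enters the region where amplitude is at most $\alpha$, using that the almost-sure visitation structure of a minimal fluctuation set couples the per-realization behavior to the invariant measure via the ergodic theorem. Once this coupling is in place, the geometric estimate in step \emph{(iv)} is routine, and the final limit $\alpha \to 0$ closes the argument.
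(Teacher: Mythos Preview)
Your overall plan is close to the paper's in spirit---both rest on \Cref{thm:2} and the finiteness of the equilibrium set---but step \emph{(iii)} contains a genuine gap that makes the approach fail as written, and step \emph{(iv)} draws a conclusion that does not follow.

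The problem in step \emph{(iii)} is the inference ``if a fluctuation set contained a state outside $\bm{\mathcal{O}}$, then by minimality the state would be visited with positive frequency, contradicting $\mu^{1/\N}(\bm{\mathcal{O}}^{c}) \to 0$.'' Positive frequency for a \emph{single} lattice state does not contradict $\mu^{1/\N}(\bm{\mathcal{O}}^{c})\to 0$: as $\N$ grows the grid refines, a fluctuation set of fixed amplitude contains on the order of $\N$ states, and the invariant mass on any one of them can---and typically will---be $O(1/\N)$. So you cannot conclude $\bm{\mathcal{Y}}^{\N}\subset\bm{\mathcal{O}}$ for large $\N$; that containment is generally false for every finite $\N$. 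Consequently step \emph{(iv)} also fails: even granting containment in $\bm{\mathcal{O}}$, you assert without argument that the fluctuation set lies in a \emph{single} ball, which would still need a connectivity argument via the $1/\N$ step size in $x=\bm{1}^\top\x$.

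The paper circumvents this by working with \emph{intervals of $\N$-independent length} in the abstract coordinate $x$ rather than with individual states. It argues by contradiction: if the amplitude stayed bounded below by some $L>0$ along a sequence of population sizes, then $[\min\mathcal{Y}^{\N},\max\mathcal{Y}^{\N}]$ has length at least $L$, and after deleting the $\epsilon$-balls around the $\mathtt{n}$ abstract equilibria one is left (pigeonhole) with a connected gap of length at least $(L-\mathtt{n}\epsilon)/(\mathtt{n}+1)>0$. Because the abstract state moves by $\pm 1/\N$ and the fluctuation set is visited everywhere, this whole gap is traversed and hence carries invariant mass bounded away from zero uniformly in $\N$, contradicting $\mu^{1/\N}(\bm{\mathcal{O}})\to 1$. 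The key idea you are missing is precisely this replacement of a single state by an interval whose measure does not vanish with $\N$.
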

 {
Nevertheless, we can still have perpetual fluctuations (with nonzero probability), but the amplitude of the fluctuations 
will vanish with the population size almost surely.
Roughly speaking, the state if does not equilibrate for ``large'' populations most of the time in the long run will spend ``close'' to the isolated members of the Birkhoff center.}
\begin{remark}
In some setups, such as in \cite{ramazi2020convergence}, agents include themselves when determining their preferred strategy.
Our main results hold for such update rules.
\end{remark}
\setcounter{revisit}{1}
\begin{revisit}
    Consider the associated continuous-time population dynamics with the finite population dynamics introduced in Example \ref{exampleFinite}. 
    Based on \Cref{lem:eqMixedAbstract}, the mean dynamics admit only one equilibrium point at $x = \tau_1$, which is an anticoordinator-driven equilibrium point.
    Therefore, according to \Cref{lem: stabilityOfType01} this equilibrium point is globally asymptotically stable.
    \Cref{fig:finitepopEx2} depicts the evolution of population proportion of strategy-$\1$ players merged with \Cref{fig:finitepop1}.
    As seen, $x(t)$ approaches  $\tau_1$.
    The chattering observed in the solid black line is due to the numerical errors.
\end{revisit}
\begin{example} \label{exampleSpecific}
    Consider a population of size $\N$ consisting of four coordinating and  three anticoordinating subpopulations.
    The population distribution over the subpopulations is $\bm{\rho} = (\tr{2/28}, \tr{3/28}, \tr{3/28}, {6/28}, {8/28}, {3/28}, {3/28})$.
    The thresholds of the anticoordinating subpopulations are $\tr{\tau_1} = \tr{0.929}$, $\tr{\tau_2} = \tr{6/7}$, $\tr{\tau_3} = \tr{0.357}$, and those of coordinators are ${\tau'_1} = {0.05}$, ${\tau'_2} = {0.321}$, ${\tau'_3} = {0.5}$, ${\tau'_4} = {0.643}$.
    When the population size $\N$ equals $28$ or $56$ (which is the next larger valid size), the population proportion of $\1$-players either reaches $x^{\N} = 11/28$ and remains there or fluctuates around $x^{\N} = 6/7$. However, for larger populations, the fluctuation disappears.

    \textbf{Continuous-time population dynamics.} In view of \Cref{lem:eqMixedAbstract}, the continuous-time population dynamics associated with the described discrete population dynamics admit three equilibrium points;
    one clean-cut equilibrium point 
    $\mb{c}^{22} $ at $ (\tr{2/28},\tr{3/28},\tr{0},{0},{0},{3/28},{3/28})$ with the corresponding abstract state $c^{22}$ equals $11/28$, one anticoordinator-driven equilibrium point $\mb{a}^{24}$ at $ (\tr{2/28},\tr{2/28},\tr{0},{6/28},{8/28},{3/28},{3/28})$ with the corresponding abstract state $a^{24}$ equal to $6/7$, and one coordinator-driven equilibrium point $\mb{o}^{23}$ at $(\tr{2/28}, \tr{3/28}, \tr{0}, {0}, {3/28}, {3/28}, {3/28})$ with the corresponding abstract state $o^{23}$ equal to the threshold of the third coordinating subpopulation, $ 0.5$.
    According to \Cref{thm:ROA},
    except for a set of initial conditions of measure zero, the population state converges to either of $\mb{c}^{22}$ or $\mb{a}^{24}$, depending on the initial condition--\Cref{fig:finitepop}.
    As for the discrete population dynamics, the observed fluctuations are around ${a}^{24}$.
    In addition, the equilibrium state at $x^{\N} = 11/28$ preserved as the
    population size grows and is indeed equal to ${c}^{22}$. 
\end{example}
\begin{remark}
Several studies have analyzed the population dynamics by studying one-dimensional dynamics.
In \cite{granovetter1978threshold}, the discrete-time evolution of a heterogeneous population of coordinators was studied by analyzing the evolution of the population proportion of strategy-$\mathtt{A}$ players.
A similar approach was used in \cite[Section 9]{roohi} for a heterogeneous population of coordinators and anticoordinators under a synchronous updating rule, where all agents are active at every time step.
In \cite{comoApproximation,ravazzi2023asynchronous}, scalar continuous-time dynamics instead serve as approximations of the evolution of the population proportion of strategy-$\mathtt{A}$ players in structured populations.
In this paper, however, we were interested in studying the evolution of the proportion of strategy-$\mathtt{A}$ players within each subpopulation.
Hence, a one-dimensional dynamic would not reveal the behavior of each sub-population.
The dimension should be at least as many as the number of sub-populations, i.e., $\p+\p'$.
Moreover, the evolution of the total proportion of $\1$-players at index $k$ depends on the proportions of $\1$-players in subpopulations.
Nevertheless, the expected change of $x^\N$ can be calculated, and that is how we provided the abstract dynamics, which facilitated the subsequent analysis of the $(\p + \p')$-dimensional population dynamics. 
\end{remark}

\begin{figure}
    \centering
    \includegraphics[width=0.8\linewidth]{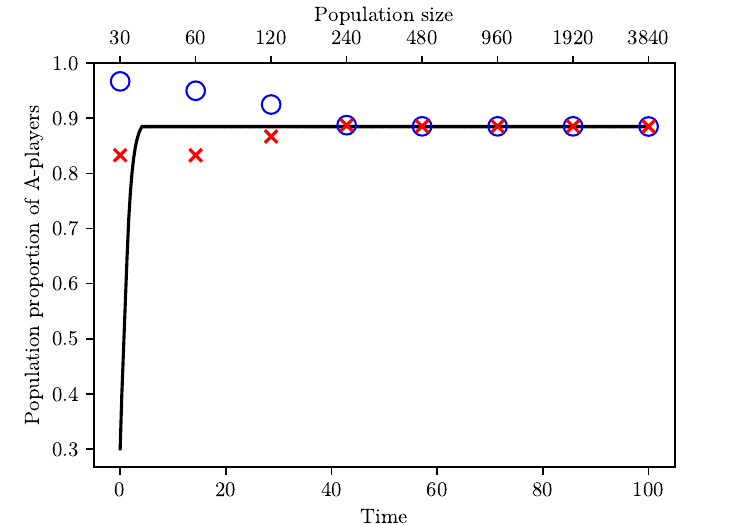}
    \caption{\textbf{The associated abstract state with the discrete population dynamics described in Example \ref{exampleFinite} approaches  $\tau_1$.}
    The black solid line represents the evolution of the abstract state over time. 
    The circles and crosses represent the upper and lower bounds of the invariant sets for different population sizes.
    }
    \label{fig:finitepopEx2}
\end{figure}

\begin{figure}
    \centering
    \includegraphics[width=0.9\linewidth]{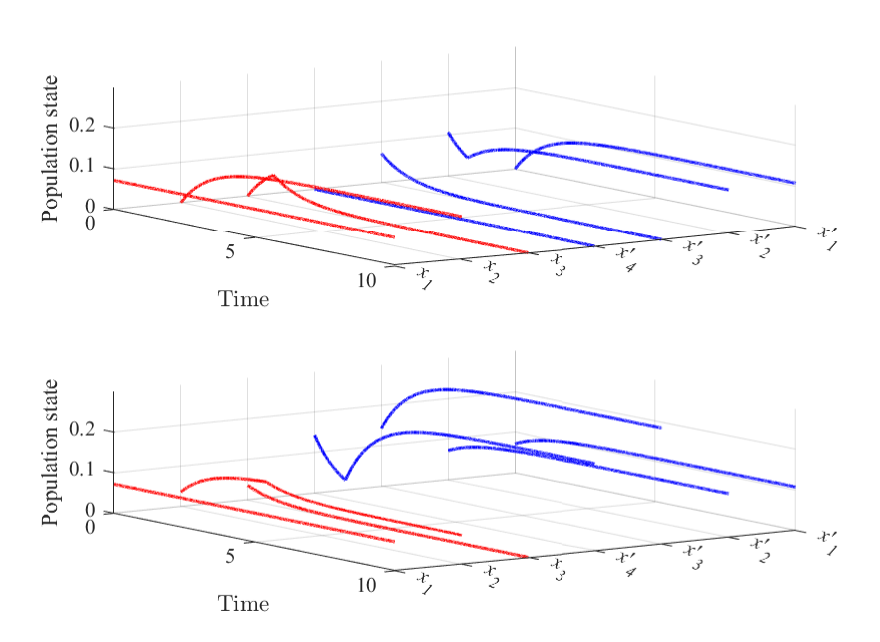}
    \caption{\textbf{Example \ref{exampleSpecific}. The solution of the  continuous-time population dynamics for two different initial conditions.} 
    The upper panel shows the evolution of the population state which converges to clean-cut equilibrium point $\mb{c}^{22} = (\tr{2,3,0},{0,0,3,3})/28$ and the lower one depicts the evolution of the population state converging to the anticoordinator-driven equilibrium point $\mb{a}^{24} = (\tr{2,2,0},{6,8,3,3})/28$, where the corresponding population proportion of $\1$-players equals $\tau_2 = 6/7$.
    The observed non-smoothness in the evolution of some subpopulations is due to the change in their {preferred} strategy.
    For example,
    in the upper panel, at time zero, the abstract state is below  $\tau'_2$  (resp. $\tau_3$), resulting in $\dot{x}'_2 <0$ ($\dot{x}_3 >0$). 
    As abstract state increases and exceeds $\tau'_2$ (resp. $\tau_3$), the {preferred} strategy of coordinating (resp. anticoordinating) subpopulation $2$ (resp. $3$) changes to $\1$ (resp. $\2$) yielding  $\dot{x}'_2 >0$ (resp. $\dot{x}_3 <0$).}
    \label{fig:finitepop}
\end{figure}


\section{Concluding Remarks}
We analyzed the asymptotic behavior of a well-mixed finite heterogeneous population of agents repeatedly choosing between two strategies when the population size approaches infinity.
We utilized the available results linking the stationary measures of the Markov chain corresponding to this discrete population dynamics with the steady-state behavior of the associated continuous-time mean dynamics. 
We then showed that the fluctuations in the finite mixed population of coordinators and anticoordinators {likely} do not {persist} with the population size (\Cref{fig:revisit}).

The convergence result of the continuous-time population dynamics for populations of all anticoordinators, i.e., \Cref{corAnti}, matches that of the discrete dynamics \cite{ramazi2017asynchronous} where the population state was proven to always reach the clean-cut equilibrium $\mb{q}$ or reach the set of two ruffled states (that are not possible here due to the continuity of the state space of the continuous-time population dynamics).
Similarly, the convergence result of the continuous-time population dynamics for  populations of all coordinators, i.e., \Cref{corCor}, matches that of the discrete dynamics \cite{ramazi2020convergence} where the population state was proven to always equilibrate. 
Moreover, the equilibria of the discrete and continuous-time dynamics match.

In addition, the stability of clean-cut equilibrium points of continuous-time population dynamics matches that of the discrete dynamics \cite{le2022heterogeneous}.
However, the stability of the ruffled types has not been investigated in the discrete population.

In the case of the fluctuations, in all examples, we observed that the interval defined by the minimum and maximum of the ratio of strategy $\1$-players in the same population and for a fixed initial condition and different activation sequence contains an anticoordinator-driven equilibrium point of the abstract dynamics.
Moreover, as the population size grows, the interval shrinks. 
It remains unknown whether this observation holds always for every population and whether ruffled anticoordinator-driven equilibria are necessary for the existence of minimal positively invariant sets in the finite discrete populations.
The difficulty partly arises from the inherent challenge of characterizing the positively invariant sets and amplitude of the fluctuations.
 Although \cite[Corollary 6]{roohi} provides bounds on the fluctuations, it still requires the benchmark subpopulations to be known, which appears to be an unsolved complicated problem.
The result of this paper holds in a well-mixed population. 
Whether a similar result can be obtained for structured populations, such as k-regular networks, is an open question.
In this regard,
approaches similar to those in \cite{ravazzi2023asynchronous} might work.

\section*{Acknowledgment}
The authors thank the anonymous reviewers for constructive comments and Hichem Ben-El-Mechaiekh for technical discussions.

\setcounter{figure}{1}
\begin{figure}
    \centering
    \tikzset {_vma69n84m/.code = {\pgfsetadditionalshadetransform{ \pgftransformshift{\pgfpoint{0 bp } { 0 bp }  }  \pgftransformrotate{0 }  \pgftransformscale{2 }  }}}
    \pgfdeclarehorizontalshading{_tlw99ukwl}{150bp}{rgb(0bp)=(0.88,1,1);
    rgb(37.5bp)=(0.88,1,1);
    rgb(39.25bp)=(0.88,1,1);
    rgb(40.5bp)=(0.99,1,1);
    rgb(45bp)=(0.9,0.97,0.99);
    rgb(51bp)=(0.78,0.93,0.98);
    rgb(56.25bp)=(0.75,0.89,0.97);
    rgb(62.5bp)=(0.69,0.85,0.96);
    rgb(100bp)=(0.69,0.85,0.96)}
    \tikzset {_yweun0qfs/.code = {\pgfsetadditionalshadetransform{ \pgftransformshift{\pgfpoint{0 bp } { 0 bp }  }  \pgftransformrotate{0 }  \pgftransformscale{2 }  }}}
    \pgfdeclarehorizontalshading{_xx3bytkm9}{150bp}{rgb(0bp)=(0.88,1,1);
    rgb(37.5bp)=(0.88,1,1);
    rgb(39.25bp)=(0.88,1,1);
    rgb(40.5bp)=(0.99,1,1);
    rgb(45bp)=(0.9,0.97,0.99);
    rgb(51bp)=(0.78,0.93,0.98);
    rgb(56.25bp)=(0.75,0.89,0.97);
    rgb(62.5bp)=(0.69,0.85,0.96);
    rgb(100bp)=(0.69,0.85,0.96)}
    \tikzset {_jaxq4wg33/.code = {\pgfsetadditionalshadetransform{ \pgftransformshift{\pgfpoint{0 bp } { 0 bp }  }  \pgftransformrotate{0 }  \pgftransformscale{2 }  }}}
    \pgfdeclarehorizontalshading{_00fuj0tf3}{150bp}{rgb(0bp)=(0.88,1,1);
    rgb(37.5bp)=(0.88,1,1);
    rgb(39.25bp)=(0.88,1,1);
    rgb(40.5bp)=(0.99,1,1);
    rgb(45bp)=(0.9,0.97,0.99);
    rgb(51bp)=(0.78,0.93,0.98);
    rgb(56.25bp)=(0.75,0.89,0.97);
    rgb(62.5bp)=(0.69,0.85,0.96);
    rgb(100bp)=(0.69,0.85,0.96)}
    \tikzset{every picture/.style={line width=0.75pt}} 
    \begin{tikzpicture}[x=0.75pt,y=0.75pt,yscale=-1,xscale=1]
    
    \draw  [color={rgb, 255:red, 74; green, 144; blue, 226 }  ,draw opacity=1 ][fill={rgb, 255:red, 74; green, 144; blue, 226 }  ,fill opacity=1 ] (134,238) -- (169,228) -- (204,238) -- (186.5,238) -- (186.5,258) -- (204,258) -- (169,268) -- (134,258) -- (151.5,258) -- (151.5,238) -- cycle ;
    \path  [shading=_tlw99ukwl,_vma69n84m] (166.5,123.3) .. controls (227.25,123.3) and (276.5,146.36) .. (276.5,174.8) .. controls (276.5,203.25) and (227.25,226.3) .. (166.5,226.3) .. controls (105.75,226.3) and (56.5,203.25) .. (56.5,174.8) .. controls (56.5,146.36) and (105.75,123.3) .. (166.5,123.3) -- cycle ; 
     \draw  [color={rgb, 255:red, 74; green, 144; blue, 226 }  ,draw opacity=1 ] (166.5,123.3) .. controls (227.25,123.3) and (276.5,146.36) .. (276.5,174.8) .. controls (276.5,203.25) and (227.25,226.3) .. (166.5,226.3) .. controls (105.75,226.3) and (56.5,203.25) .. (56.5,174.8) .. controls (56.5,146.36) and (105.75,123.3) .. (166.5,123.3) -- cycle ; 
    
    \draw  [color={rgb, 255:red, 74; green, 144; blue, 226 }  ,draw opacity=1 ][fill={rgb, 255:red, 74; green, 144; blue, 226 }  ,fill opacity=1 ] (131.5,115.3) -- (149,115.3) -- (149,103.3) -- (184,103.3) -- (184,115.3) -- (201.5,115.3) -- (166.5,123.3) -- cycle ;
    \path  [] (166.5,268.3) .. controls (227.25,268.3) and (276.5,291.36) .. (276.5,319.8) .. controls (276.5,348.25) and (227.25,371.3) .. (166.5,371.3) .. controls (105.75,371.3) and (56.5,348.25) .. (56.5,319.8) .. controls (56.5,291.36) and (105.75,268.3) .. (166.5,268.3) -- cycle ; 
     \draw  [color={rgb, 255:red, 74; green, 144; blue, 226 }  ,draw opacity=1 ] (166.5,268.3) .. controls (227.25,268.3) and (276.5,291.36) .. (276.5,319.8) .. controls (276.5,348.25) and (227.25,371.3) .. (166.5,371.3) .. controls (105.75,371.3) and (56.5,348.25) .. (56.5,319.8) .. controls (56.5,291.36) and (105.75,268.3) .. (166.5,268.3) -- cycle ; 
    
    \path  [shading=_00fuj0tf3,_jaxq4wg33] (166.5,0.3) .. controls (227.25,0.3) and (276.5,23.36) .. (276.5,51.8) .. controls (276.5,80.25) and (227.25,103.3) .. (166.5,103.3) .. controls (105.75,103.3) and (56.5,80.25) .. (56.5,51.8) .. controls (56.5,23.36) and (105.75,0.3) .. (166.5,0.3) -- cycle ; 
     \draw  [color={rgb, 255:red, 74; green, 144; blue, 226 }  ,draw opacity=1 ] (166.5,0.3) .. controls (227.25,0.3) and (276.5,23.36) .. (276.5,51.8) .. controls (276.5,80.25) and (227.25,103.3) .. (166.5,103.3) .. controls (105.75,103.3) and (56.5,80.25) .. (56.5,51.8) .. controls (56.5,23.36) and (105.75,0.3) .. (166.5,0.3) -- cycle ; 
    
    \draw  [color={rgb, 255:red, 74; green, 144; blue, 226 }  ,draw opacity=1 ][fill={rgb, 255:red, 74; green, 144; blue, 226 }  ,fill opacity=1 ] (310.8,264.86) .. controls (310.8,303.24) and (283.94,334.36) .. (250.8,334.36) -- (250.8,316.36) .. controls (283.94,316.36) and (310.8,285.24) .. (310.8,246.86) ;\draw  [color={rgb, 255:red, 74; green, 144; blue, 226 }  ,draw opacity=1 ][fill={rgb, 255:red, 74; green, 144; blue, 226 }  ,fill opacity=1 ] (310.8,246.86) .. controls (310.8,220.51) and (298.15,197.59) .. (279.5,185.8) -- (279.5,179.8) -- (250.8,186.36) -- (279.5,209.8) -- (279.5,203.8) .. controls (298.15,215.59) and (310.8,238.51) .. (310.8,264.86)(310.8,246.86) -- (310.8,264.86) ;
        \draw (298,303.65) node [anchor=north west][inner sep=0.75pt]  [font=\footnotesize,rotate=-0.56] [align=center] {\begin{minipage}[lt]{54.62pt}\setlength\topsep{0pt}
    \begin{center}
    {\fontfamily{ptm}\selectfont continuous-time }\\{\fontfamily{ptm}\selectfont population }\\{\fontfamily{ptm}\selectfont dynamics}
    \end{center}
    
    \end{minipage}};
    \draw (52.26,222.7) node [anchor=north west][inner sep=0.75pt]  [font=\footnotesize,rotate=-0.56] [align=center] {\begin{minipage}[lt]{54.62pt}\setlength\topsep{0pt}
    \begin{center}
    {\fontfamily{ptm}\selectfont Stochastic }\\{\fontfamily{ptm}\selectfont approximation }\\{\fontfamily{ptm}\selectfont theory}
    \end{center}
    
    \end{minipage}};
    \draw (90,305) node [anchor=north west][inner sep=0.75pt]  [font=\footnotesize] [align=center] {{\fontfamily{ptm}\selectfont Coordinating  }\\{\fontfamily{ptm}\selectfont Anticoordinating }\\{\fontfamily{ptm}\selectfont Mixture of both} };
    \draw (182.07,282.03) node [anchor=north west][inner sep=0.75pt]  [rotate=-0.53]  {$t\ \rightarrow \ \infty $};
    \draw (175.08,314.89) node [anchor=north west][inner sep=0.75pt]  [font=\small,rotate=-0.19] [align=left] {{\fontfamily{ptm}\selectfont \textbf{Equilibrates}}\\};
     \draw (170.08,157.3) node [anchor=north west][inner sep=0.75pt]  [font=\footnotesize,rotate=-0.0] [align=center] {{\fontfamily{ptm}\selectfont Equilibrates}\\{\fontfamily{ptm}\selectfont Equilibrates}\\{\fontfamily{ptm}\selectfont The {fluctuations}}
     \\{\fontfamily{ptm}\selectfont {vanish}}};
    \draw (98.91,282.03) node [anchor=north west][inner sep=0.75pt]  [font=\footnotesize,rotate=-0.76] [align=center] {\begin{minipage}[lt]{40.13pt}\setlength\topsep{0pt}
    \begin{center}
    {\fontfamily{ptm}\selectfont \textbf{Population}}
    \end{center}
    \end{minipage}};
    \draw (181,36.3) node [anchor=north west][inner sep=0.75pt]  [font=\footnotesize] [align=center] {{\fontfamily{ptm}\selectfont Equilibrates}\\{\fontfamily{ptm}\selectfont Equilibrates}\\{\fontfamily{ptm}\selectfont May Fluctuate}};
    \draw (290,21.77) node [anchor=north west][inner sep=0.75pt]  [font=\footnotesize,rotate=0] [align=center] {\begin{minipage}[lt]{60pt}\setlength\topsep{0pt}
    \begin{center}
    \centering
    {\fontfamily{ptm}\selectfont Discrete}
    {\fontfamily{ptm}\selectfont population}
    {\fontfamily{ptm}\selectfont of finite size $\N$}
    \end{center}
    \end{minipage}};
    \draw (91,36.3) node [anchor=north west][inner sep=0.75pt]  [font=\footnotesize] [align=center] {{\fontfamily{ptm}\selectfont Coordinating }\\{\fontfamily{ptm}\selectfont Anticoordinating}\\{\fontfamily{ptm}\selectfont Mixture of both} };
     \draw (290,155) node [anchor=north west][inner sep=0.75pt]  [font=\footnotesize,rotate=0] [align=center] {\begin{minipage}[lt]{60pt}\setlength\topsep{0pt}
    \begin{center}
    \centering
    {\fontfamily{ptm}\selectfont Discrete}
    {\fontfamily{ptm}\selectfont population}
    {\fontfamily{ptm}\selectfont $\N \rightarrow \infty$}
    \end{center}
    \end{minipage}};
     \draw (90,157.3) node [anchor=north west][inner sep=0.75pt]  [font=\footnotesize] [align=center]{{\fontfamily{ptm}\selectfont Coordinating }\\{\fontfamily{ptm}\selectfont Anticoordinating}\\{\fontfamily{ptm}\selectfont Mixture of both} };
    \draw (182.07,13.42) node [anchor=north west][inner sep=0.75pt]  [rotate=-0.53]  {$t\ \rightarrow \ \infty $};
    \draw (182.07,135.33) node [anchor=north west][inner sep=0.75pt]  [rotate=-0.53]  {$t\ \rightarrow \ \infty $};
    \draw (95.91,15.89) node [anchor=north west][inner sep=0.75pt]  [font=\footnotesize,rotate=-0.76] [align=center] {\begin{minipage}[lt]{40.13pt}\setlength\topsep{0pt}
    \begin{center}
    {\fontfamily{ptm}\selectfont  \textbf{Population}}
    \end{center}
    
    \end{minipage}};
    \draw (95.91,135.33) node [anchor=north west][inner sep=0.75pt]  [font=\footnotesize,rotate=-0.76] [align=center] {\begin{minipage}[lt]{40.13pt}\setlength\topsep{0pt}
    \begin{center}
    {\fontfamily{ptm}\selectfont  \textbf{Population}}
    \end{center}
    \end{minipage}};
    \end{tikzpicture}

    \caption{\textbf{[Revisit]-The connection between the asymptotic behavior of finite populations of interacting agents evolving based on \eqref{populationDynamicsDiscrete} and their associated continuous-time population dynamics \eqref{eq: type_mixed} based on the stochastic approximation theory.} {As population size approaches infinity, the amplitude of the fluctuations in the population proportion of strategy-$\1$ players vanishes with probability $1$.}
    }
    \label{fig:revisit}
 \end{figure}

\bibliographystyle{IEEEtran}
\bibliography{IEEEabrv,ref}
\section*{{Appendix-Basic Definitions}}
The following are mainly adopted from \cite{klappenecker2018markov}. 
A \emph{discrete-time stochastic process} $\langle\mb{X}_k\rangle_{k=0}^{\infty}$ is a sequence of random variables indexed by $k \in \mathbb{Z}_{\geq 0}$, where $\mb{X}_k$ is the \emph{state of the stochastic process}. 
A \emph{Markov chain} is a discrete-time stochastic process $\langle\mb{X}_k\rangle_{k=0}^{\infty}$ where $\mb{X}_{k+1}$ is independent of the states $\mb{X}_{k-1},\ldots,\mb{X}_0$ given $\mb{X}_k$.
A Markov chain is \emph{finite} if the space over which the states are defined{, denoted by $\bm{\mathcal{S}}$,} is finite.
The transition probabilities of a Markov chain is denoted by $ \text{Pr}_{\x,\y}$ which represents the probability of reaching $\y$ at index $k+1$ given that the state at index $k$ is equal to $\x$, i.e., $\text{Pr}_{\x,\y}=\mathbb{P}[\mb{X}_{k+1} = \y \vert \mb{X}_k = \x]$.
The transition matrix $\mathbf{P} = [\text{Pr}_{\x,\y}]$ is defined accordingly.
A Markov chain is \emph{homogeneous} if its transition probabilities are time independent.
The \emph{invariant probability measure} for a Markov chain with  transition matrix $\mathbf{P}$
{
is a probability measure satisfying $\mu(\x) = \sum_{\y \in \bm{\mathcal{S}}} \text{Pr}_{\x,\y} \mu(\y)$.
}
{
A \emph{realization} of a Markov chain lies in a space constructed by a countable product $\Pi_{i=0}^{\infty}\bm{\mathcal{S}}_i$, where each $\bm{\mathcal{S}}_i$ is a copy of $\bm{\mathcal{S}}$ equipped with a copy of the $\sigma$-algebra of $\bm{\mathcal{S}}$ \cite{Meyn:1253345}.
Intuitively, a realization of a Markov chain is a set of values of  $\mb{X}_0,\mb{X}_1,\ldots$, which are generated based on the transition matrix $\mathbf{P}$.
We say that an event $A$ happens almost surely if it happens with probability one, i.e., $\mathbb{P}(A)=1$. 
This well-known definition is motivated by events $A$ whose complements $A^c$  are non-empty, i.e., $A^c\neq \emptyset$, but have a zero measure.
A sequence of probability measures $\langle{\mu_k}\rangle_{k \in \mathbb{Z}_{\geq 0}}$ defined on $\bm{\mathcal{S}}$ converges weakly to a probability measure $\mu$ if for any bounded continuous function $ {f}:\bm{\mathcal{S}} \to \mathbb{R}$, we have $\int f d\mu_k \xrightarrow{k \to \infty} \int f d\mu$ \cite{klenke2013probability}}.
A sequence $\langle \epsilon_k \rangle_{k=0}^{\infty}$ of positive constants that converges to zero is called \emph{vanishing}.

\begin{thm} \label{thm:sandholm}
    \cite[Theorem 3.5 and Corollary 3.9]{roth2013stochastic}
    For a vanishing sequence of $\epsilon>0$, let 
    $ \langle \langle \mb{X}^{\epsilon}_k\rangle{_{k=0}^\infty}\rangle_{\epsilon >0}$
    be GSAPs for a good upper semicontinuous differential inclusion $\dot{\x} \in \bm{\mathcal{V}}(\x)$.
   Assume that for each $\epsilon$, $\langle \mb{X}^{\epsilon}_k\rangle_{k=0}^{\infty} $
   is a Markov chain and let ${\mu}^{\epsilon}$ be an invariant probability measure of $\langle \mb{X}^{\epsilon}_k\rangle{_{k=0}^\infty} $. Let ${\mu}$ be a  {limit point of $\langle{\mu}^{\epsilon}\rangle_{\epsilon >0}$ in the topology of weak convergence.}
    Then the support  of ${\mu}$ is contained in $\mathrm{BC}(\bm{\Phi})$, where $\bm{\Phi}$ is the dynamical system induced by $\dot{\x} \in \bm{\mathcal{V}}(\x)$.
\end{thm}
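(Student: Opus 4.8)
The plan is to prove the statement in two stages: first to show that every weak limit point $\mu$ of the stationary measures $\langle\mu^{\epsilon}\rangle$ is an invariant measure for the set-valued dynamical system $\bm{\Phi}$, and then to invoke a Poincar\'e-type recurrence argument to conclude that any $\bm{\Phi}$-invariant measure is supported on the Birkhoff center. First I would pass from the discrete iterates to continuous time: for each $\epsilon$, define the affine-interpolated process $\bm{\mathsf{X}}^{\epsilon}(\cdot)$ with $\bm{\mathsf{X}}^{\epsilon}(k\epsilon)=\mb{X}^{\epsilon}_k$, which by condition (1) of \Cref{defGSAP} takes values in the convex compact set $\bm{\mathcal{X}}_0$. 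Writing the one-step update via condition (2) as drift plus noise, I would isolate the cumulative noise $\sum_{i\le k}\epsilon\,\mathbi{U}^{\epsilon}_i$; condition (4) guarantees that, uniformly in the initial state, this cumulative noise is negligible over any window $[0,T]$ as $\epsilon\to 0$. Combined with condition (3)---which squeezes $\bm{\mathcal{V}}^{\epsilon}$ into a $\delta$-neighborhood of the image under $\bm{\mathcal{V}}$ of a $\delta$-neighborhood of the base point---and the good upper semicontinuity of $\bm{\mathcal{V}}$, this shows that the interpolations are asymptotic pseudotrajectories of $\bm{\Phi}$: over each finite horizon they are uniformly shadowed by genuine Carath\'eodory solutions of $\dot{\x}\in\bm{\mathcal{V}}(\x)$.

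The core step is to lift this to the level of measures. I would work on the path space $C([0,\infty),\bm{\mathcal{X}}_0)$ equipped with the compact-open topology and consider, for each $\epsilon$, the law of the stationary interpolated trajectory started from $\mu^{\epsilon}$. Because $\bm{\mathcal{X}}_0$ is compact, these path-space laws are tight; extract a weak limit $P$ along the subsequence realizing the limit point $\mu$. Two properties must then be verified: (i) $P$ is concentrated on genuine solutions of the differential inclusion---this follows from the pseudotrajectory estimates above together with the upper semicontinuity, convexity, and boundedness of $\bm{\mathcal{V}}$, upon letting $\delta\to 0$; and (ii) $P$ is invariant under the time-shift on path space---this is inherited from the stationarity of each $\mu^{\epsilon}$ under the Markov dynamics, since a shift by one discrete step collapses to a vanishing time shift in the limit. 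From (i) and (ii), the one-dimensional marginal $\mu$ of $P$ is an invariant measure for the set-valued flow $\bm{\Phi}$.

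Finally, I would show that any $\bm{\Phi}$-invariant measure charges only recurrent points. Restricting to the solutions in the support of $P$ and applying a set-valued Poincar\'e recurrence theorem, $\mu$-almost every point $\x_0$ returns to every neighborhood of itself along some solution, hence lies in its own limit set $\bm{\mathcal{L}}(\x_0)$; by \Cref{def_Brikhoff} such points are exactly the recurrent points $\bm{\mathcal{R}}_{\bm{\Phi}}$. Therefore $\mu(\bm{\mathcal{R}}_{\bm{\Phi}})=1$, and since the support of a measure is closed, $\operatorname{supp}(\mu)\subseteq\text{cl}(\bm{\mathcal{R}}_{\bm{\Phi}})=\mathrm{BC}(\bm{\Phi})$, which is the claim.

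The hardest part is establishing invariance of the limit measure, namely step (ii) above, because $\bm{\Phi}$ is multivalued with possibly non-unique solutions, so one cannot push $\mu$ forward by a single deterministic map. The shift-invariance argument must instead be carried out on path space, and the limiting measure must be shown to be carried by honest solutions \emph{simultaneously} with being shift-invariant; reconciling these two requirements in the limit is the delicate point, and it is precisely where conditions (3) and (4) of \Cref{defGSAP} together with the good upper semicontinuity of $\bm{\mathcal{V}}$ are indispensable.
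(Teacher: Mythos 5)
You should first note a mismatch of premises: the paper does not prove \Cref{thm:sandholm} at all --- it is imported verbatim, with citation, as \cite[Theorem 3.5 and Corollary 3.9]{roth2013stochastic}, and the paper's appendix only proves the downstream \Cref{thm:implicationOFSandholm} by a compactness/Prokhorov contradiction argument that \emph{uses} \Cref{thm:sandholm} as a black box. So there is no in-paper proof to compare against; the relevant benchmark is the proof in the cited reference, and your outline is a faithful reconstruction of it. The three-stage architecture you propose --- (a) interpolate the chains and use conditions (2)--(4) of \Cref{defGSAP} to show finite-horizon shadowing by Carath\'eodory solutions, (b) lift the stationary measures to laws on path space, extract a tight weak limit that is simultaneously shift-invariant and carried by genuine solutions, (c) conclude via a recurrence theorem for set-valued flows that the marginal charges only $\bm{\mathcal{R}}_{\bm{\Phi}}$, hence has support in $\mathrm{BC}(\bm{\Phi})$ --- is exactly how Roth and Sandholm proceed, and your closing remark correctly identifies the shift-invariance-on-path-space step as the place where the multivaluedness of $\bm{\Phi}$ forbids pushing $\mu$ forward by a single map.

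Two presentation-level caveats, neither a genuine gap. First, for a \emph{fixed} $\epsilon$ the interpolated process is not an asymptotic pseudotrajectory of $\bm{\Phi}$ --- that notion belongs to the vanishing-step-size setting; in the constant-step setting the correct statement (and the content of Theorem 3.5 in \cite{roth2013stochastic}) is that the interpolations converge in probability, uniformly over initial conditions in $\bm{\mathcal{X}}_0$ and over compact time windows, to the set of solutions as $\epsilon \to 0$, which is what conditions (3) and (4) deliver. Second, step (c) is not classical Poincar\'e recurrence applied to $\mu$ directly: one applies recurrence to the deterministic time-shift flow on the (compact) solution space under the limiting shift-invariant path law $P$, so that $P$-almost every solution $\y$ is shift-recurrent, whence $\y(0) \in \bm{\mathcal{L}}(\y(0))$ by evaluating at time zero; this ergodic-theoretic machinery for set-valued dynamical systems is itself a nontrivial external ingredient (due to Faure and Roth) rather than a routine invocation. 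Your sketch gestures at both points correctly, so with these formulations made precise your proposal matches the actual proof of the cited result.
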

\section*{Proofs}
\vspace{-5pt}
\subsection{Proof of \Cref{thm:implicationOFSandholm}}
{
	\begin{proof}
		Assume by contradiction that $\lim_{k \to \infty} {{\mu}}^{\epsilon_k}(\bm{\mathcal{ {O}}}) \not \to 1$.
		This implies that for some positive constant $\delta <1 $, there exists a subsequence $\langle \alpha_k\rangle_{k=1}^{\infty} = \alpha_1, \alpha_2, \ldots$ of $\langle \epsilon_k\rangle_{k=0}^{\infty}$ such that 
		${{\mu}}^{\alpha_k}(\bm{\mathcal{{O}}}) <  \delta$ for each $k$.
		As the state space $\bm{\mathcal{X}}_{0}$ is compact, the family of probability measures defined on the space $\bm{\mathcal{X}}_{0}$ is tight \cite{klenke2013probability}.
		Then, in view of Prokhorov’s theorem, every sequence of the probability measures, including $\langle{\mu}^{\alpha}\rangle$  has a weak limit point, which is a probability measure ${\mu'}$.
		This implies that there exists a subsequence  $\langle \beta_k\rangle_{k=1}^{\infty} = \beta_1, \beta_2, \ldots$ of $\langle \alpha_k\rangle_{k=1}^{\infty}$ such that
		$   \lim \inf_{k \to \infty} {{\mu}}^{\beta_k}(\bm{\mathcal{ {O}}}) \geq {{\mu'}}(\bm{\mathcal{{O}}})$.
		On the other hand, in view of \Cref{thm:sandholm}, the support of every limit point is contained in the Birkhoff center $\mathrm{BC}(\bm{\Phi})$, i.e., ${\mu}'(\bm{\mathcal{O}}) =1$.
		This means that there exists some $k^*$ such that for all $k >k^*$ ${{\mu}}^{\beta_k}(\bm{\mathcal{ {O}}}) >  \delta $, a contradiction.
	\end{proof}
}
\vspace{-10pt}
\subsection{Proof of \Cref{propMarkov}}
\begin{proof}
	To prove, it suffices to show that 
	${\mathbb{P}[\mb{X}_{k+1} = \x^{\N}(k+1) \vert \mb{X}_{k} = \x^{\N}(k)] = \text{Pr}_{\x^{\N}(k),\x^{\N}(k+1)}}$.
	That is to show that the probability the population dynamics \eqref{populationDynamicsDiscrete} reach the state $\y^{\N}$ at index $k+1$ from the state $\x^{\N}$ at index $k$ equals the transition probability $ \text{Pr}_{\x^{\N},\y^{\N}} $ in the Markov chain.
	This is straightforward, because at any state $\x^{\N}$ either the active agent switches to $\1$ or to $\2$ or sticks to her current strategy and no other case is possible. 
	This summarizes the possible four cases in \eqref{eq:markov}.
	In the first case, for example, the probability of an agent that belongs to the subpopulation $p$  switches her strategy from $\2$ to $\1$  equals the probability of drawing such an agent 
	${\rho}_p - x^{\N}_p$  if strategy $\1$ is the {preferred} strategy of subpopulation $p$'s agents {with current strategy $\2$}, i.e., $\smash{s^*(p,\x^{\N}, {2}) = 1}$, and equals zero otherwise,  i.e., $\smash{s^*(p,\x^{\N}, {2}) = 2}$. 
\end{proof}
\subsection{Proof of \Cref{lemGSAPS}}
{First, we introduce the following lemma, which facilitates the proof of \Cref{lemGSAPS}.}
{\begin{lem} \label{lem_minimumDistanceBtwThresholds}
		Under \Cref{ass:ass1},
		let $d$ denote the minimum distance between thresholds, i.e., $d = \min_{i,j \in [\p+\p']} |\tau_i - \tau_j|.$
		Then, for any population size $\N$ satisfying $\N d > 1$ and any $\x \in \bm{\mathcal{X}}_{ss}$, 
		the sum $\bm 1^\top \x$ belongs to at most one interval of the form 
		$[\tau_i, \tau_i + \frac{1}{\N})$ with $i \in [\p + \p']$.
	\end{lem}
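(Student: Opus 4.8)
The plan is to exploit that, under \Cref{ass:ass1}, the $\p+\p'$ thresholds $\{\tau_i\}_{i\in[\p+\p']}$ are pairwise distinct, so that $d$ is strictly positive and the hypothesis $\N d>1$ is simply $\frac{1}{\N}<d$. The underlying geometric fact is elementary: half-open intervals of common length $\frac{1}{\N}$, each anchored at a distinct threshold, cannot overlap once $\frac{1}{\N}$ is strictly smaller than the smallest gap between thresholds. One point of care is the reading of $d$: taken literally the minimum over all $i,j\in[\p+\p']$ would include the diagonal term $i=j$ and vanish, so I would state at the outset that $d$ denotes the minimum over \emph{distinct} indices, $d=\min_{i\neq j}|\tau_i-\tau_j|$, which is positive precisely because \Cref{ass:ass1} guarantees the thresholds are unique.

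First I would reduce the claim to a statement about a single scalar. Writing $s:=\bm 1^\top\x=\sum_{p=1}^{\p+\p'}x_p$, membership of $\bm 1^\top\x$ in an interval $[\tau_i,\tau_i+\frac{1}{\N})$ depends on $\x$ only through $s$, so the constraint $\x\in\bm{\mathcal{X}}_{ss}$ is irrelevant beyond ensuring $s\in\mathbb{R}$. It therefore suffices to show that, for an arbitrary real number $s$, at most one index $i\in[\p+\p']$ can satisfy $s\in[\tau_i,\tau_i+\frac{1}{\N})$.

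Then I would argue by contradiction. Suppose $s$ lies in two distinct intervals $[\tau_i,\tau_i+\frac{1}{\N})$ and $[\tau_j,\tau_j+\frac{1}{\N})$ with $i\neq j$, and assume without loss of generality $\tau_i<\tau_j$. From $s\geq\tau_j$ together with $s<\tau_i+\frac{1}{\N}$ I obtain $\tau_j<\tau_i+\frac{1}{\N}$, hence $0<\tau_j-\tau_i<\frac{1}{\N}$. Combining this with the hypothesis $\frac{1}{\N}<d$ yields $\tau_j-\tau_i<d$, which contradicts the definition of $d$ as the minimum distance between distinct thresholds. Consequently no two such intervals can contain $s$ simultaneously, establishing the lemma. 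I do not expect a genuine obstacle here: the argument is a one-line disjointness/pigeonhole estimate, and the only substantive input is \Cref{ass:ass1}, invoked once to guarantee $d>0$.
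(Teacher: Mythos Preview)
Your proof is correct and follows essentially the same contradiction argument as the paper: assume two distinct thresholds $\tau_i<\tau_j$ with $\bm 1^\top\x$ in both intervals, conclude $\tau_j-\tau_i<\frac{1}{\N}<d$, and contradict the definition of $d$. Your added remark clarifying that $d$ should be read as $\min_{i\neq j}|\tau_i-\tau_j|$ is a helpful observation the paper leaves implicit.
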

	\begin{proof}
		We assume, on the contrary, that there is also a threshold $\tau_j$ where
		$\bm 1^\top \x \in [\tau_j, \tau_j + \frac{1}{\N})$ and that $\tau_i \neq \tau_j$. 
		Without loss of generality, let $\tau_j >\tau_i$ which results in 
		$\tau_j < x < \tau_i + 1/\N$ and, in turn, $\tau_j < \tau_i+ 1/\N $. 
		However, according to the assumption $1/\N < d$, and we reach a contradiction.
\end{proof}}
The proof of \Cref{lemGSAPS} is inspired by the steps taken in \cite[Example 4.1]{roth2013stochastic}.

\begin{proof}
	By considering the sequence $\langle \frac{1}{\N}\rangle_{\N\in\mathcal{N}}$ as a vanishing sequence,
	it suffices to show that there exist $\langle\bm{\mathcal{V}}^{\frac{1}{\N}}\rangle_{{\frac{1}{\N}}}$ and $\langle \mb{U}^{\frac{1}{\N}}_k\rangle_{k}$ such that the conditions in \Cref{defGSAP} are met.
	The first condition of \Cref{defGSAP} {can be easily verified.}
	Let $\bm{\nu}^{\frac{1}{\N}}(\bm{x}^{\N})$ denote the expected increment per time unit of the Markov chain $\langle \mb{X}^{\frac{1}{\N}}_k\rangle_{k}$ when the Markov state is at $\bm{x}^{\N}$.
	Since there are $\N$ transitions per unit time, we have
	$\bm{\nu}^{\frac{1}{\N}}(\bm{x}^{\N}) = \smash{\N \mathbb{E}[\mathbf{X}^{\frac{1}{\N}}_{k+1} - \mathbf{X}^{\frac{1}{\N}}_{k} \mid \mathbf{X}^{\frac{1}{\N}}_k = \x^{\N}]}$.
	The $p^{\text{th}}$ element of $\bm{\nu}^{\frac{1}{\N}}$, $p \in [\p + \p']$, which is the expected increment associated with the subpopulation $p$, equals the sum of the multiplication of each possible change with its probability.
	In view of \eqref{eq:markov}, there are two possible changes, i.e., $\frac{1}{\N}$ step increment or $\frac{1}{\N}$ step reduction, which results in ${\nu}^{\frac{1}{\N}}_p({\x}^{\N}) =\N \big( \frac{1}{\N}(\rho_p - x^{\N}_p)(2-s^*(p,\x^{\N},2))  -\frac{1}{\N} (x^{\N}_p (s^*(p,\x^{\N},1)-1))\big)$ and consequently, 
	\begin{align}
	{\nu}^{\frac{1}{\N}}_p({\x}^{\N}) =&\rho_p(2 - s^*(p,\x^{\N},2))-x^{\N}_p \nonumber \\
	&+ x^{\N}_p\left(s^*(p, \x^\N,2) - s^*(p, \x^\N,1)\right). \label{eq_expected}
	\end{align}   
	The second condition of \Cref{defGSAP} is satisfied by 
	$ \mb{U}^{\frac{1}{\N}}_{k+1} =$ $\smash{
		\N(\mb{X}^{\frac{1}{\N}}_{k+1} - \mb{X}^{\frac{1}{\N}}_{k} -
		\mathbb{E}[\mathbf{X}^{\frac{1}{\N}}_{k+1} - \mathbf{X}^{\frac{1}{\N}}_{k} \mid \mathbf{X}^{\frac{1}{\N}}_k = \x^{\N}])}$
	resulting in
	$\mb{X}^{\frac{1}{\N}}_{k+1} - \mb{X}^{\frac{1}{\N}}_{k} - {\frac{1}{\N}}\mb{U}^{\frac{1}{\N}}_{k+1} = {\frac{1}{\N}}\bm{\nu}^{\frac{1}{\N}}(\x)$.
	Now, we investigate Condition 3.
	To show that Condition 3 is satisfied, for a given $\delta$, we need to provide an $\epsilon_0$ satisfying  \cref{eq_thirdCondition}.
	We propose an $\epsilon_0$ which satisfies $\epsilon_0<\min \{d, \delta\}.$ 
	Then, given \Cref{lem_minimumDistanceBtwThresholds}, for any $\x \in \bm{\mathcal{X}}_{ss}$ there exists at most one threshold, say $\tau_l$, $l \in [\p + \p']$, that satisfies $\bm 1^\top \x \in [\tau_l, \tau_l + \epsilon_0)$, with $\frac{1}{\N} \leq \epsilon_0.$
	In what follows, we show that Condition 3 holds for all  $\x \in \bm{\mathcal{X}}_{ss}$ that satisfy $\exists l \in [\p + \p'] (\bm 1^\top \x \in [\tau_l, \tau_l + \epsilon_0))$;
	the other values of $\x$ can be handled similarly.
	Let  $\varepsilon_x = \bm 1^\top \x - \tau_l$.
	Then given $\bm 1^\top \x \in [\tau_l, \tau_l + \epsilon_0))$, we have
	$\varepsilon_x < \epsilon_0$ and, in turn, $\varepsilon_x < \delta.$ 
	Consider the point $\y$ whose all elements, except some element $j$, are equal to the corresponding elements of $\x$, and the $j^\text{th}$ element, $y_j$, equals $x_j - \varepsilon_x$ (the value of $x_j$ should be greater than or equal to $\varepsilon_x$).
	The term $|\x - \y|$ then equals $\varepsilon_x$, and hence 
	the point $\y$, which satisfies $\bm 1^\top \y = \tau_l$,  belongs to the  $\delta$-neighborhood of $\x$, i.e.,
	$|\x - \y|<\delta$.
	For all $p \neq l$, we then have $\V_p({\y}) = \rho_p(2 - s^*(p,{\y},2)) - {y}_p$ and for $p = l$, $\V_p = [-{y}_p, \rho_p - {y}_p]$.
	On the other hand, in view of \eqref{eq_expected} and 
	$\exists l \in [\p + \p'] (\bm 1^\top \x \in [\tau_l, \tau_l + \epsilon_0))$, the term
	${\nu}^{\frac{1}{\N}}_p({\x})$ for $p = l$ equals $\rho_l - 2 x^{\frac{1}{\N}}_l$ (resp. $0$) if $l$ corresponds to a coordinating (resp. an anticoordinating) subpopulation.
	For $p \neq l$, ${\nu}^{\frac{1}{\N}}_p({\x})$ equals 
	$\rho_p(2 - s^*(p,\x,2))-x_p$, and it can be easily shown that it is also equal to the member of the singleton
	$\V_p(\x)$ defined in
	\eqref{eq: type_mixed}.
	By $\bm z = \bm{\nu}^{\frac{1}{\N}}(\x)$ and  choosing $\bm{v}$ to be a
	member of $\bm{\V}({\y})$ with 
	$v_l({\y}) = \rho_l - 2{y}_l$ (resp. $v_l({\y}) = 0$) if $l$ corresponds to a coordinating (resp. an anticoordinating) subpopulation, 
	the term $\vert \bm{z} - \bm{v}\vert$ equals $|\bm{v}({\y}) -\bm{\nu}^{\frac{1}{\N}}(\x)| = |{\y} - \x |$ which is less than $\delta$. 
	This implies that the relation
	$\inf_{\bm{v}\in \bm{\mathcal{V}}(\y)} \!\vert \bm{z} - \bm{v}\vert< \delta$ holds and, in turn,
	Condition 3 is satisfied.
	As for the last condition,  in view of Example 4.1 in \cite{roth2013stochastic}, it can be shown that
	$\langle \mb{U}^{\frac{1}{\N}}_k\rangle_k$ is a martingale difference sequence.
	On the other hand, for all $k$,
	$\vert \mb{U}^{\frac{1}{\N}}_k \vert \!\leq \!\sqrt{\Sigma_{l=1}^{\p + \p'} (1 + \rho_l)^2}$
	and hence $\!\mb{U}^{\frac{1}{\N}}_k\!$ 
	is uniformly bounded.
	Therefore, the conditions asserted in \cite[Proposition 2.3]{roth2013stochastic} are satisfied, and, consequently, the last condition holds.
\end{proof}
\subsection{Proof of \Cref{lemGSAPS} when agents include themselves} \label{previous_case}
As noted in \Cref{setupRemark}, in some set-ups, agents take into account their current strategies when they want to determine their proffered strategies.
In this case, the update rule \eqref{eq: scor} changes to
\begin{align*}
&\mathtt{s}_{i}(k+1)  
= \begin{cases}
\1& \text{if } x^{\N}(k) \geq {\tau(i)},   \\
\2
& \text{otherwise},
\end{cases}
\end{align*}
and the update rule  \eqref{eq: santi} changes to
\begin{align*}
&\mathtt{s}_{i}(k+1)  
= \begin{cases}
\1& \text{if } x^{\N}(k) \leq {\tau(i)},   \\
\2
& \text{otherwise},
\end{cases}
\end{align*}
where $x^\N$ is the population proportion of $\1$-players.
Accordingly, the function $s^*(\cdot, \cdot, \cdot)$, defined in \Cref{def:discreteDynamics} and used in \Cref{eq:markov}, changes to
\begin{alignat*}{2}
&{\hat{s}}(p,\x^{\mathsf{N}}) 
= \\
&\begin{cases}
{1}& \hspace{-5pt}\text{if } (x^{\mathsf{N}} \leq {\tau_p}  \text{ and } p \leq \p)  \text{ or }
(x^{\mathsf{N}} \geq \tau_{\p + \p' +1 -p}'  \text{ and } p > \p),    \\
{2}&\hspace{-5pt}\text{if } (x^{\mathsf{N}} > {\tau_p} \ \text{ and }  p \leq \p)  
\text{ or }
(x^{\mathsf{N}} < \tau_{\p + \p' +1 -p}'   \text{ and }  p > \p).
\end{cases}
\end{alignat*}
If \Cref{lemGSAPS} remains true, then the main results of the papers are also valid.
Condition 1 is already satisfied.
Let $\bm{\nu}^{\frac{1}{\mathsf{N}}}(\bm{x}^{\mathsf{N}})$ denote the expected increment per time unit $t$ of the Markov chain $\langle \mb{X}^{\frac{1}{\mathsf{N}}}_k\rangle_{k}$ when the Markov state is at $\bm{x}^{\mathsf{N}}$, i.e.,
$\bm{\nu}^{\frac{1}{\mathsf{N}}}(\bm{x}^{\mathsf{N}}) = \smash{\N \mathbb{E}[\mathbf{X}^{\frac{1}{\N}}_{k+1} - \mathbf{X}^{\frac{1}{\N}}_{k} \mid \mathbf{X}^{\frac{1}{\N}}_k = \x^{\N}]}$.
The $p$th element of $\bm{\nu}^{\frac{1}{\mathsf{N}}}$, $p \in [\p + \p']$, reads as ${\nu}^{\frac{1}{\mathsf{N}}}_p({\x}^{\mathsf{N}}) =\rho_p(2 - \hat{s}(p,\x^{\N}))-x^{\N}_p$.
Accordingly, Conditions 2 and 4 can be easily verified.
Regarding Condition 3, 
For each given $\delta$, we need to find an $\epsilon_0$ which satisfies \eqref{eq_thirdCondition}.
In what follows, we show that \eqref{eq_thirdCondition} is satisfied for any arbitrarily chosen $\epsilon_0$.
We consider a selection from \eqref{eq: type_mixed} denoted by $\dot{\x} = \bm{\nu}(\bm{x})$ where $\bm{\nu}$ is a $(\p + \p')$-dimensional function and its $p$th element $ {\nu}_{p}$ equals  $\rho_{p} - x_{p}$ if $\1$ is the {preferred} strategy of subpopulation $p$ and equals  $ - x_{p}$ if the preferred strategy of subpopulation $p$ is $\2$.  
This results in $ {\nu}_{p}(\bm{x}) =\rho_{p} (2-\hat{s}(p,\x))  - x_{p}$.
The expected increment per unit of time, $\bm{\nu}^{\frac{1}{\N}}(\x)$, is the same as the function $\bm{\nu}(\x)$.
As a result, equation \eqref{eq_thirdCondition} and, in turn, Condition 3 are satisfied by selecting $\y$ equal to $\x$, regardless of the value of $\epsilon_0$.
This implies that \Cref{lemGSAPS} holds.
\subsection{Proof of \Cref{prop2}}
\begin{proof}
	It holds that $x = \Sigma_{k=1}^{\p + \p'} x_k$ and $0 \leq x \leq 1$.
	Considering
	\Cref{ass:ass1}, the thresholds of the subpopulations are distinct, building $\p + \p'+1$ disjoint open intervals in the unit interval, where their limit points are either $0$, $1$, or the thresholds. 
	Denote by $\mathcal{T}$ the union of the set $\{0,1\}$ with the set of all subpopulations' thresholds.
	When  $x \notin \mathcal{T}$, there exist $i \in \{0, 1, \ldots, \p+1 \}$ and $j \in \{0, 1, \ldots, \p'+1 \}$ where $x \in (\tau'_j, \tau_i)$ such that $\tau_{i+1} <x < \tau'_{j+1}$. 
	In view of \eqref{eq: type_mixed}, in this case, $\bm{\mathcal{V}}(\x)$ is a singleton and for all $0 <l \leq i$ anticoordinating subpopulations, $\dot{x}_l = \rho_l - x_l$  and for the remaining anticoordinating subpopulations, i.e., $ i <l \leq \p$,
	$\dot{x}_l =  -x_l$.
	Similarly, for coordinating subpopulations,  $\dot{x}_l = - x_l$ for all $\p <l \leq \p + \p' -j+1$ and 
	$\dot{x}_l = \rho_l -x_l$ for all $\p + \p' -j+1< l \leq \p + \p'$.
	By summing over the elements of $\dot{\x}$, we have $\dot{x} = \Sigma_{k=1}^{\p + \p'} \dot{x}_k = {\pi}_i + {\pi}'_j - x$.
	Therefore, for $x \in (\tau'_j, \tau_i)$,
	$\X(x) = \{{\pi}_i + {\pi}'_j - x\}$
	which is the same as the last case of $\X$ in \eqref{eq: abstract_het}.
	When  $x \in \mathcal{T}$,  three cases may happen:  $x = \tau'_j$, $x = \tau_i$, or $x \in \{0,1\}$.
	When  $x = \tau_i$, 
	based on \eqref{eq: type_mixed}, except for $\V_i$, the elements of $\bm{\mathcal{V}}(\x)$ are single-valued and contain either $\rho_l -x_l$ or $-x_l$.
	More specifically,  for  anticoordinating subpopulations, we have
	$\dot{x}_l =\rho_l - x_l$  for $l < i$
	and $\dot{x}_l = -x_l$ for  $i<l \leq \p$.
	As for coordinating subpopulations, assume that the set $\{k \in [\p']\vert  \tau'_k < \tau_i\}$ is not empty and let $j =\max\{k \in [\p']\vert  \tau'_k < \tau_i\}$.
	Then, for $l$ satisfying $\p + \p' -j+1 \leq l \leq \p + \p'$, $\dot{x}_l$ will be equal to $\rho_l - x_l$,  and  {for the remaining subpopulations} will be equal to $ - x_l$.
	If  $\{k \in [\p']\vert  \tau'_k < \tau_i\}$ is empty, then $\dot{x}_l$ for $l > \p$ will be equal to $-x_l$.
	Summing over the elements of $\dot{\x}$ except the $i^{\text{th}}$ element, yields ${\pi}_{i-1} + {\pi}'_{j} - (x-x_i) $.
	As for $l = i$, we have $\dot{x}_i\in [-x_i, \rho_i - x_i]$ yielding $\dot{x} \in ({\pi}_{i-1} + {\pi}'_{j} - x,{\pi}_{i} + {\pi}'_{j} - x)$.
	Therefore, $\X(x) = [{\pi}_{i-1} + {\pi}'_{j} - x,{\pi}_{i} + {\pi}'_{j} - x ]$,
	which is the same as the first case of $\X$ in \eqref{eq: abstract_het}.
	If $x = \tau'_j$, a similar reasoning results in $\X(x) = [{\pi}_{i} + {\pi}'_{j-1} - x,{\pi}_{i} + {\pi}'_{j} - x ]$,
	which is the same as the second case of $\X$ in \eqref{eq: abstract_het}.
	Finally, when $x \in \{0,1\}$, a similar reasoning yields $\dot{x} = {\pi}_{\p} - x$ if $x = 0$ and  $\dot{x} = {\pi}'_{\p'} - x$ if $x = 1$.
	These two are equivalent to the first two cases of $\X$ in \eqref{eq: abstract_het} as  $\tau_{\p + 1} = \tau_0' = 0$ and $\tau'_{\p' + 1} = \tau_0 = 1$.
\end{proof}
\subsection{Proof of \Cref{lem:eqMixedAbstract}}
\begin{proof}
	The proofs are first done for the abstract dynamics.
	We first prove parts 2, 3, and 4. 
	\textit{Part 2)}
	(sufficiency) 
	Should \eqref{eq:type0} hold, it would follow that  $c^{ij}\in (\max\{\tau'_j,\tau_{i+1}\}, \min\{\tau'_{j+1},\tau_i\})$ and based on \eqref{eq: abstract_het}, $0\in \X(c^{ij})$, implying that $c^{ij}$ is an equilibrium of \eqref{eq: abstract_het}.
	(Necessity) 
	Should $c^{ij}$ be an equilibrium of the abstract dynamics, then $0\in \X(c^{ij})$. 
	One of the three cases of $\X$ in \eqref{eq: abstract_het} must be active for $x = c^{ij}$.
	The first two cases of $\X$ cannot be active because of \Cref{ass:ass1}.
	So the third one must hold, implying that $c^{ij} \in (\max\{\tau'_j,\tau_{i+1}\}, \min\{\tau'_{j+1},\tau_i\})$, which is equivalent to \eqref{eq:type0}. 
	\textit{Part 3)} and \textit{Part 4)}.
	Following \textit{Part 2}, the sufficiency and necessity can be concluded.
	\textit{Part 1)} 
	In view of \Cref{ass:ass1}, equilibrium points of clean-cut, anticoordinator-driven, and coordinator-driven types do not match for \eqref{eq: abstract_het}. 
	Hence, an abstract state can at most be only one of these three types.
	We prove by contradiction that an equilibrium point of \eqref{eq: abstract_het} should be either of these three types.
	Suppose $x^*$ is an equilibrium point for \eqref{eq: abstract_het} but is neither clean-cut, anticoordinator-driven, nor coordinator-driven.
	At $x = x^*$ one of the cases of $\X$ in \eqref{eq: abstract_het} should be active.
	If the first case is active, we have  $x^* = \tau_i$ and as it is an equilibrium point  $0 \in \X(x^*)$. This results in ${\pi}_{i-1} + {\pi}'_{j} < \tau_i<{\pi}_{i} + {\pi}'_{j}$, which is equivalent to \eqref{eq:type1}, but we assumed that $x^*$ is not an anticoordinator-driven equilibrium point. This is a contradiction and as a result the first case cannot be active.
	Similarly, {for the remaining two cases, we reach a contradiction, and}
	hence $x^*$ should be either  clean-cut, anticoordinator-driven, or coordinator-driven.  
	Now, we prove the lemma for the equilibrium points of the {continuous-time} population dynamics \eqref{eq: type_mixed}.
	{The sufficiency and necessity of \textit{Part 2)} can be easily shown.}
	\textit{Part 3)} (Sufficiency) 
	Assume that \eqref{eq:type1} holds.
	By plugging $\mb{a}^{ij}$ into \eqref{eq: type_mixed}, we see that  $\mb{0} \in \bm{\mathcal{V}}(\mb{a}^{ij})$ and hence $\mb{a}^{ij}$ is an equilibrium point for \eqref{eq: type_mixed}.
	(Necessity) Should $\mb{a}^{ij}$ be an equilibrium point for \eqref{eq: type_mixed},  $\mb{0} \in \bm{\mathcal{V}}(\mb{a}^{ij})$ and consequently ${0} \in \V_i(\mb{a}^{ij})$.
	This results in $0 \in [-{a}^{ij}_i, \rho_i -{a}^{ij}_i]$ and by plugging ${a}^{ij}_i = \tau_i - ({\pi}_{i-1}+{\pi}'_j)$ into $[-{a}^{ij}_i, \rho_i -{a}^{ij}_i]$, we have 
	$0 \in [-\tau_i + {\pi}_{i-1}+{\pi}'_j, \rho_i -\tau_i + {\pi}_{i-1}+{\pi}'_j]$. This results in 
	${\pi}_{i-1}+{\pi}'_j<\tau_i <  + {\pi}_{i}+{\pi}'_j$ which is equivalent to \eqref{eq:type1}.
	\textit{Part 4)} The sufficiency and necessity can be shown similar to \textit{Part 3}.
	\textit{Part 1)} 
	We prove by contradiction that an equilibrium point of  \eqref{eq: type_mixed} should be either of these three types.
	Suppose $\x^*$ is an equilibrium point for \eqref{eq: type_mixed} but is neither clean-cut, anticoordinator-driven, nor coordinator-driven. 
	Now, based on the value of $x^*$, two cases may happen.
	\textit{Case 1)} $x^*$ equals a threshold. 
	Suppose that $x^*$ equals the threshold of one anticoordinating subpopulation, say $i$.
	From $\mb{0} \in \bm{\mathcal{V}}(\x^*)$ we have $0 \in \V_l(\x^*)$ for $l \in [\p + \p']$, where for $l \neq i$, $\V_l(\x^*)$ is a singleton and therefore to have $0 \in \V_l(\x^*)$,  $x^*_l$ must equal $\rho_l$ for $l<i$ or
	$\p + \p' -j+1 \leq l \leq \p + \p'$, and $x^*_l$ must equal $0$, otherwise, where $j = \max \{k \in [\p'] \vert \tau'_k < \tau_i\}$ and $j =0$ if the set is empty.
	Therefore, $x^* -x^*_i = {\pi}_{i-1} + {\pi}'_{j}$.
	On the other hand, $x^* = \tau_i$ resulting in
	$x^*_i = \tau_i - {\pi}_j' - {\pi}_{i-1}$ which is in turn equal to $a^{ij}_i$.
	This results in that $\x^*$ must be equivalent to $\mb{a}^{ij}$ which is a contradiction. 
	Similarly, the case where $x^*$ equals the threshold of one coordinating subpopulation results in a contradiction. 
	Therefore, $x^*$ cannot be equal to a threshold.
	\textit{Case 2)} $x^*$ is not equal to any thresholds.
	In this case,
	$\bm{\mathcal{V}}(\x^*)$ is a singleton and consequently, $\dot{\x} = \mb{0}$ which results in  $\dot{x} = {0}$. 
	We  showed that the equilibrium points of \eqref{eq: abstract_het} should 
	be either  clean-cut, anticoordinator-driven, or coordinator-driven which is a contradiction.
	In view of \textit{Case 1} and \textit{Case 2}, the assumption that an equilibrium point of \eqref{eq: type_mixed} is neither clean-cut, anticoordinator-driven, nor coordinator-driven reaches a contradiction.
\end{proof}
\subsection{\Cref{lem:betweentwoq}}
\begin{lem} \label{lem:betweentwoq}
	Consider the abstract dynamics \eqref{eq: abstract_het}.
	Between every two consecutive anticoordinator-driven and/or clean-cut equilibria, $q^*_{k-1}$ and $q^*_k, k \in \{2,\ldots,\mathtt{Q}\}$, there exists one coordinator-driven  equilibrium $q^*_{k-1,k}$ satisfying  ${q}^*_{k-1}< q^*_{k-1,k}< {q}^*_{k}$.
\end{lem}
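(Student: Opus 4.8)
The plan is to reduce the statement to a one-dimensional sign-change argument for the scalar (set-valued) right-hand side of the abstract dynamics \eqref{eq: abstract_het}. On each open interval between consecutive thresholds, $\mathcal{X}(x)$ is single-valued, equal to $g(x) := \pi_i + \pi'_j - x$, where $\pi_i + \pi'_j$ is the cumulative proportion of subpopulations whose preferred strategy at $x$ is $\mathtt{A}$. Thus $g$ is piecewise affine with slope $-1$, hence strictly decreasing on every continuity interval, and its only discontinuities sit at thresholds: as $x$ increases past an anticoordinating threshold $\tau_i$, subpopulation $i$ switches its preference away from $\mathtt{A}$, so $\pi_i+\pi'_j$ drops by $\rho_i$ and $g$ has a downward jump; past a coordinating threshold $\tau'_j$, subpopulation $j$ switches toward $\mathtt{A}$, so $g$ has an upward jump. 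The set-valued values in the first two cases of \eqref{eq: abstract_het} are exactly the closed intervals $[g(\tau_i^+),g(\tau_i^-)]$ and $[g\big((\tau'_j)^-\big),g\big((\tau'_j)^+\big)]$ spanning these jumps.

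First I would record, via \Cref{lem:eqMixedAbstract}, a dictionary between equilibria and zero-crossings of $g$. A clean-cut equilibrium $c^{ij}$ is an interior zero of $g$ on a continuity interval, so (by monotonicity) $g>0$ just to its left and $g<0$ just to its right. An anticoordinator-driven equilibrium $a^{ij}=\tau_i$ satisfies \eqref{eq:type1}, which is precisely $g(\tau_i^-)>0>g(\tau_i^+)$, a downward crossing, again with $g>0$ on the left and $g<0$ on the right. A coordinator-driven equilibrium $o^{ij}=\tau'_j$ satisfies \eqref{eq:type2}, i.e. $g\big((\tau'_j)^-\big)<0<g\big((\tau'_j)^+\big)$, an upward crossing with $g<0$ on the left and $g>0$ on the right. \Cref{ass:ass1} makes all these inequalities strict, ruling out degenerate tangencies. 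Hence the stable equilibria $q^*_1<\dots<q^*_{\mathtt{Q}}$ (clean-cut or anticoordinator-driven) are exactly the downward crossings of $g$, and the coordinator-driven equilibria are exactly its upward crossings.

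With this dictionary, the core of the argument is a sign count on the open interval $(q^*_{k-1},q^*_k)$. By the previous paragraph, $g<0$ immediately to the right of $q^*_{k-1}$ and $g>0$ immediately to the left of $q^*_k$. As $x$ increases across $(q^*_{k-1},q^*_k)$, the sign of $g$ is preserved on every continuity interval (a zero there would be a clean-cut equilibrium, forbidden since $q^*_{k-1},q^*_k$ are consecutive) and across every downward jump at an anticoordinating threshold (a sign change there would be an anticoordinator-driven equilibrium, again forbidden); the sign can flip only at an upward jump, i.e. a coordinating threshold. Therefore passing from $g<0$ to $g>0$ forces at least one coordinating threshold $\tau'_j\in(q^*_{k-1},q^*_k)$ with $g\big((\tau'_j)^-\big)<0<g\big((\tau'_j)^+\big)$, a coordinator-driven equilibrium $q^*_{k-1,k}$, strictly interior since it is distinct from both endpoints by \Cref{ass:ass1}. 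For uniqueness I would argue by contradiction: two such upward crossings $\tau'_{j_1}<\tau'_{j_2}$ would give $g>0$ just after $\tau'_{j_1}$ and $g<0$ just before $\tau'_{j_2}$, and any downward sign change between them is a clean-cut or anticoordinator-driven equilibrium strictly inside $(q^*_{k-1},q^*_k)$, contradicting consecutiveness.

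The main obstacle is the bookkeeping around the discontinuous, set-valued right-hand side: I must phrase ``crossing'' through the one-sided limits $g(\tau^{\pm})$ and verify that the equilibrium conditions \eqref{eq:type1} and \eqref{eq:type2} translate verbatim into downward and upward crossings, while checking that \Cref{ass:ass1} makes every relevant inequality strict, so that no threshold is a ``one-sided'' near-equilibrium that could slip past the sign count. Once that correspondence is pinned down, existence and uniqueness follow purely from the monotone-with-jumps structure of $g$, with no further computation.
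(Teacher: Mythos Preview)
Your proposal is correct and follows essentially the same approach as the paper: a one-dimensional sign-change argument on the scalar right-hand side of \eqref{eq: abstract_het}, using that clean-cut and anticoordinator-driven equilibria are downward crossings while coordinator-driven equilibria are upward crossings, so between two consecutive downward crossings exactly one upward crossing must occur. The paper phrases this via $\bar{f}(x)=\max\mathcal{X}(x)$ and $\underline{f}(x)=\min\mathcal{X}(x)$ and argues by contradiction, while you work directly with the one-sided limits $g(\tau^{\pm})$; these are equivalent formulations, and your uniqueness argument is in fact spelled out more explicitly than the paper's.
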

\begin{proof}
	We prove by contradiction.
	Consider two consecutive clean-cut or anticoordinator-driven equilibrium points $q^*_{k-1} < q^*_k$.
	Assume on the contrary that there exist no coordinator-driven equilibrium points  between $q^*_{k-1}$ and $q^*_k$.
	In view of the possible cases of $\X$ in \eqref{eq: abstract_het},
	$\Bar{f}(x) = \max \{y \vert y \in \mathcal{X}(x) \}$ will be negative for $x \in (q^*_{k-1}, d^+_{k-1})$, where $d^+_{k-1}$ is lower bounded by $\min \{\bar{\tau}'_{q_{k-1}}, \bar{\tau}_{q_{k-1}} \}$, i.e., $d^+_{k-1} \geq \min \{\bar{\tau}'_{q_{k-1}}, \bar{\tau}_{q_{k-1}} \}$, where $\bar{\tau}'_{q_{k-1}} = \tau'_{j+1} $, $ \bar{\tau}_{q_{k-1}} = \tau_{i}$ when $q^*_{k-1}$ is clean-cut, i.e., $q^*_{k-1} = c^{ij}$  and $\bar{\tau}'_{q_{k-1}} = {\tau}'_{j+1} $, $\bar{\tau}_{q_{k-1}} = \tau_{i-1}$ when $q^*_{k-1}$ is anticoordinator-driven, i.e., $q^*_{k-1} = a^{ij}$.
	Based on the value of $\min \{\bar{\tau}'_{q_{k-1}}, \bar{\tau}_{q_{k-1}} \}$, two cases may happen:
	\emph{Case 1)} $ \min \{\bar{\tau}'_{q_{k-1}}, \bar{\tau}_{q_{k-1}} \} = \bar{\tau}_{q_{k-1}}$ and consequently for a small enough $\varepsilon > 0$,
	$\Bar{f}(\bar{\tau}_{q_{k-1}}-\varepsilon) <0$.
	The sign of $\Bar{f}(\bar{\tau}_{q_{k-1}} +\varepsilon)$ cannot be positive as $\bar{\tau}_{q_{k-1}}$ is the threshold of an anticoordinating subpopulation and considering the cases of $\X$ in \eqref{eq: abstract_het}, the value of $\dot{x}$ at $x = \tau_i + \varepsilon$ is not greater than its value at $x = \tau_i - \varepsilon$ for any $i \in [\p]$.  
	Therefore, the sign of $\Bar{f}$ does not change around $x = \bar{\tau}_{q_{k-1}}$.
	Actually, no matter how many thresholds of anticoordinating subpopulations (non benchmark subpopulations) are in the interval $(q^*_{k-1}, q^*_k)$, the sign of $\Bar{f}$ will not change in the neighborhood of them. 
	So, in this case,  $\Bar{f}(\bar{\tau}_{q_{k-1}} +\varepsilon)$ remains negative.
	\emph{Case 2)}  $ \min \{\bar{\tau}'_{q_{k-1}}, \bar{\tau}_{q_{k-1}} \} = \bar{\tau}'_{q_{k-1}}$ and consequently for a small enough $\varepsilon > 0$,
	$\Bar{f}({\bar{\tau}'_{q_{k-1}}}-\varepsilon) <0$ and considering the second case of $\X$ in \eqref{eq: abstract_het}, the sign of $\Bar{f}(\cdot)$ changes in a neighborhood of a coordinating subpopulation's threshold  only if $\bar{\tau}'_{q_{k-1}}$ satisfies \eqref{eq:type2}, i.e., $\bar{\tau}'_{q_{k-1}}$ is a
	coordinator-driven equilibrium point. 
	However, we assumed that it is not the case.
	Hence, the thresholds of coordinating subpopulations also do not impact the value of $d^+_{k-1}$.
	Putting the results of \textit{Case 1} and \textit{Case 2} together,
	we conclude that $d^+_{k-1}=q^*_{k}$, i.e., the sign of $\Bar{f}(x)$ remains negative for $x \in (q^*_{k-1}, q^*_k)$.
	On the other hand, considering \eqref{eq:type0}, \eqref{eq:type1}, and the cases of $\X$ in \eqref{eq: abstract_het},  it can be concluded that
	$\underline{f}(x) = \min \{y \vert y \in \mathcal{X}(x) \} >0$ for $x \in (d^-_k,q^*_k)$, where  $d^-_k$ is upper bounded by $ \max \{ \bar{\tau}'_{q'_{k}}, \bar{\tau}_{q_{k}} \}$, where  $\bar{\tau}'_{q'_{k}} = {\tau}'_n $, $\bar{\tau}_{q_k} = \tau_{m+1}$ when $q^*_{k}$ is clean-cut, i.e., $q^*_{k} = c^{mn}$ and similarly $\bar{\tau}'_{q'_{k}} = \tau'_n $, $\bar{\tau}_{q_k} = \tau_{m+1}$ when $q^*_{k}$ is anticoordinator-driven, i.e.,  $q^*_{k} = a^{mn}$
	As $x$ decreases, according to the cases of $\X$ in \eqref{eq: abstract_het}, the change of the sign of $\underline{f}(x)$ from positive to negative can only happen if the flow passes through the coordinating subpopulation's threshold which equals a coordinator-driven equilibrium point, i.e., satisfies \eqref{eq:type2}.
	However,  we assumed it would not be the case.
	Hence, we conclude that the sign of $\underline{f}(x)$ is positive for $x \in (q^*_{k-1}, q^*_k)$.
	This is a contradiction as we already reached the conclusion that the sign of $\Bar{f}(x)$ is negative for $x \in (q^*_{k-1}, q^*_k)$.
	Putting  these arguments together, it is concluded that between two consecutive clean-cut or anticoordinator-driven equilibrium points, ${q}^*_{k-1}, {q}^*_k$, there exists 
	{at least one} coordinator-driven equilibrium point.
	{Similarly, it can be shown that the number of coordinator-driven equilibrium points between two consecutive clean-cut/anti-coordinator-driven equilibria does not exceed one. We denote such equilibrium}
	by ${q}^*_{k-1,k}$.
	Following similar arguments, it is straightforward to show that the leftmost and  rightmost equilibrium points are clean-cut or anticoordinator-driven.
	Up to now, it has been shown that between two consecutive clean-cut or anticoordinator-driven equilibrium points of the abstract dynamics, $q^*_{k-1}$ and $q^*_{k}$, there exists a coordinator-driven equilibrium point, $q^*_{k-1,k}$ such that $q^*_{k-1} < q^*_{k-1,k} < q^*_k$.
	On the other hand, based \Cref{lem:eqMixedAbstract}, there is a one-to-one map between equilibrium points of the abstract dynamics and those of the {continuous-time} population dynamics.
	Therefore,  for each two ordered clean-cut or anticoordinator-driven equilibrium points of the {continuous-time} population dynamics $\mb{q}^*_{k-1}$ and $\mb{q}^*_{k}$, there exists a coordinator-driven equilibrium point
	$\mb{q}^*_{k-1,k}$ such that $\smash{q^*_{k-1} < q^*_{k-1,k} < q^*_{k}}$. 
\end{proof}
\subsection{Proof of \Cref{thm:ROA}}
As for \textit{Part 1}, 
the instability of coordinator-driven equilibria is shown by providing an initial condition from which for a small $\epsilon$, there exists no $\delta$ to satisfy the definition of stability.
The stability of clean-cut and anticoordinator-driven equilibria is first shown for the abstract dynamics using Theorem 1 in \cite{cortes2008discontinuous}.
Then, the definition of asymptotic stability is used to show that of clean-cut and anticoordinator-driven equilibria for the {continuous-time} population dynamics.
\textit{Part 2} of the theorem is shown by using the definition of the limit set.
\begin{proof}
	\textit{Part 1)} 
	Regarding the instability analysis of a coordinator-driven equilibrium point, say $\mb{q}^*_{k-1,k} = \mb{o}^{ij}$,  consider the following initial condition $\x(0) = \mb{o}^{ij} + \delta^* \mb{e}_{\p + \p'-j+1}$ for an arbitrarily small $\delta^*  \in (0,\min \{ \tau'_{j+1}, \tau_{i} \}- \tau'_{j})$.
	The corresponding abstract state
	$x(0)$ will then belong to $(\tau'_j , \min \{ \tau'_{j+1}, \tau_{i} \}) $.
	Pick $\epsilon$ to be equal to $0.5 (\min \{ \tau'_{j+1}, \tau_{i}, {\pi}_i+{\pi}_j' \}- \tau'_{j})$.
	In view of \eqref{eq: type_mixed},
	at such $x(0)$, $\bm{\V}(\x)$ is a singleton.
	Moreover, considering the structure of $\mb{o}^{ij}$, 
	at $x(0)$ we have $\dot{x}_l = 0$  for  $l \in \{1,\ldots, \p + \p'-j, \p + \p'-j+2, \ldots,\p + \p' \}$.
	Indeed, the equation $\dot{x}_l = 0$ holds true as long as $x \in (\tau'_{j},\min \{ \tau'_{j+1}, \tau_{i} \})$.
	On the other hand,
	$x(0) > \tau'_j$ and in view of \eqref{eq:type2}, we have  $\dot{x}_{\p + \p'-j+1}(t) = \rho'_j - x_{\p + \p'-j+1}(t) >0$, for all $t \leq t'$, where $t' =  \inf \{ t\geq 0 \vert x_{\p + \p'-j+1}(t) =  \min \{\alpha_{i,j}, {\rho}'_{j} \} \}$, where
	$\alpha_{i,j} =\min \{\tau'_{j+1}, \tau_{i}\}-({\pi}_i+{\pi}_{j-1})$.
	If $\min \{\alpha_{i,j}, {\rho}'_{j} \} = \alpha_{i,j}$, then $x(t') =\min \{\tau'_{j+1}, \tau_{i}\}$.
	Otherwise, $x(t') = {\pi}_j+{\pi}_i$.
	In either case, we observe that $\vert x(t') -\tau_j' \vert$ is greater than the selected value for $\epsilon$, i.e., $0.5 (\min \{ \tau'_{j+1}, \tau_{i}, {\pi}_i+{\pi}_j' \}- \tau'_{j})$, and
	the value of $\delta^*$ does not impact the value of $x(t')$.
	This indicates instability of ${o}^{ij}$ and  in turn $\mb{o}^{ij}$.
	{The} stability propert{ies} of the other two types of equilibrium points, {are} first  show{n} for the abstract dynamics. 
	Let $V_k(x):\mbb{R}_+\rightarrow \mbb{R}_+ = \frac{1}{2}(x-q^*_k)^2$, where $q^*_k = {\pi}_{i} + {\pi}'_{j}$ if $q^*_k$ is clean-cut, i.e., $q^*_k = c^{ij}$ and  $q^*_k = \tau_{i}$ if it is anticoordinator-driven, i.e., $q^*_k = a^{ij}$.
	The set-valued derivative of $V_k(x)$ w.r.t \eqref{eq: abstract_het} will be $\mathcal{D}(x) = \{(x-q^*_k)\nu | \nu \in \mathcal{X}(x) \}$ \cite{cortes2008discontinuous}.
	We also have $\Bar{f}(x) = \max \{y \vert y \in \X(x) \}< 0$ for $x \in (q^*_k,q^*_{k,k+1})$ and 
	$\underline{f}(x) = \min \{y \vert y \in \X(x) \} > 0$ for $x \in (q^*_{k-1,k},q^*_k)$ (refer to  the proof of \Cref{lem:betweentwoq}).
	Hence, $\max \mathcal{D}(x) <0$ for $x \in (q^*_{k-1,k},q^*_{k,k+1})/ \{q^*_k\}$.
	According to  \cite[Theorem 1]{cortes2008discontinuous}, $q^*_k$ is a strongly asymptotically stable equilibrium point for \eqref{eq: abstract_het}.
	Also the  basin of attraction  of $q^*_k$, will be $(q^*_{k-1,k},q^*_{k,k+1})$.
	As for $q^*_1$ (resp. $q^*_{\mathtt{Q}}$), we have $\Bar{f}(0)>0$ (resp. $\Bar{f}(1)<0$) and this case can be handled similarly which results in
	$\mathcal{A}(q^*_1) = [0, q^*_{1,2})$ (resp. $\mathcal{A}(q^*_{\mathtt{Q}}) = (q^*_{\mathtt{Q}-1,\mathtt{Q}},1]$).
	Now, we show the asymptotic stability of $\mb{q}^*_k$ when it is an anticoordinator-driven equilibrium point, i.e., $\mb{q}^*_k = \mb{a}^{ij}$.
	The case where $\mb{q}^*_k$ is clean-cut can be handled similarly.
	We show that for every $\varepsilon>0$, there exists a $\delta >0$ such that if $\vert \x(0) - \mb{a}^{ij}\vert < \delta$, then $\vert \x(t) - \mb{a}^{ij}\vert < \varepsilon$ for all $t\geq0$, and $\vert \x(t) - \mb{a}^{ij}\vert \rightarrow 0$ as $t \rightarrow \infty$.
	Consider a small enough $\epsilon>0$, such  that at an $\epsilon$-neighborhood of $a^{ij}$, for $l \in [\p + \p']-\{i\}$, $\V_l(\x)$ is a singleton and $\dot{x}_l=\rho_l - x_l$ for $l < i$ or for $l$ satisfying both $\p < l$ and $0 < \p + \p' + 1 -l \leq j$. 
	For the remaining $l$'s, $\dot{x}_l=- x_l$.
	From the stability of the abstract dynamics, the existence of $\delta_x>0$ follows such that  $\vert x(0) - a^{ij}\vert < \delta_x$ results in $\vert x(t) - a^{ij}\vert < \epsilon$ for $t \geq 0$.
	As a result, for $l \in [\p + \p']-\{i\}$, the final value of  $x_l$ is $a^{ij}_l$ and consequently $x_l$ will approach  $a^{ij}_l$ exponentially, i.e.,
	{
		\begin{equation} \label{eq:eq1-proof-Th2}
		\vert x_l(t) - {a}^{ij}_l\vert \leq  (x_l(0) - a^{ij}_l)e^{-t},
		\end{equation}
	}
	for $t \geq 0$.
	On the other hand, we know that for any $y, z \in \mathbb{R}$, $\vert y + z\vert \leq \vert y \vert + \vert z \vert $. 
	Hence, in view of 
	$
	\vert x_i - a^{ij}_i \vert =$  $\vert ( x - a^{ij}) -  \Sigma_{l \neq i} (x_l - a^{ij}_l) \vert
	$
	, we have 
	{
		\begin{equation} \label{eq:eq1-proofTh2}
		\vert x_i - a^{ij}_i \vert \leq \vert ( x - a^{ij}) \vert  + \vert \Sigma_{l \neq i} (x_l - a^{ij}_l) \vert.
		\end{equation}
	}
	Consequently, for $t \geq 0$
	{$$\vert \x(t) - \mb{a}^{ij}\vert^2 <  (\vert x(t) - a^{ij}\vert  +\vert \Sigma_{l \neq i} (x_l - a^{ij}_l) \vert)^2 + \Sigma_{l \neq i} \vert  x_l - a^{ij}_l \vert^2.$$}
	On the other hand, starting from $\delta_x$-neighborhood of $a^{ij}$, {in view of \eqref{eq:eq1-proof-Th2}, we have} 
	{$
		\vert \x(t) - \mb{a}^{ij}\vert^2 <  \big((\p + \p') \max_l \vert  x_l(0) - a^{ij}_l \vert  
		+ (\p + \p' -1)\max_l \vert  x_l(0) - a^{ij}_l \vert e^{-t}\big)^2  +  (\p + \p' -1)\max_l \vert  x_l(0) - a^{ij}_l \vert^2 e^{-2t}
		$
	}
	and{, in turn,}
	$\vert \x(t) - \mb{a}^{ij}\vert$ $< \max_l \vert  x_l(0) - a^{ij}_l \vert \big(\p + \p'  + (\p + \p' -1)e^{-t})^2 + (\p + \p' -1) e^{-2t}$$ \big)^{0.5}. $
	{ If $\delta$ is less than $\delta_x/(\p + \p')$, then in view of the relation between Euclidean norm and Norm-1, $\sum_{l=1}^{\p + \p'} \vert x_l(0) - a_l^{ij} \vert$ will be less than $\delta_x/\sqrt{\p + \p'}$.
		and, in turn, $\vert x(0) - a^{ij}\vert < \delta_x$.}
	Hence, $\mb{a}^{ij}$ is 
	stable because by choosing 
	{$ \delta =\min \{ \frac{\delta_x}{\p + \p'}, \frac{\varepsilon}{\sqrt{(\p + \p')(\p + \p'  + (\p + \p' -1))^2 +( \p + \p')(\p + \p' -1)}\}}, $ }
	we have $\vert x(t) - a^{ij} \vert < \epsilon$  for all $t \geq 0$, and consequently 
	$\vert \x(t) - \mb{a}^{ij}\vert < \varepsilon$.
	As for asymptotic stability, we have
	$\vert \x(t) - \mb{a}^{ij}\vert^2 <  (\vert  x(t) - a^{ij} \vert  + (\p + \p' -1)\max_l \vert  x_l(0) - a^{ij}_l \vert e^{-t})^2 + (\p + \p' -1)\max_l \vert  x_l(0) - a^{ij}_l \vert^2 e^{-2t}$, for $t \geq 0$.
	Since each term in the right hand side approaches zero, we have
	$\lim \vert \x(t) - \mb{a}^{ij}\vert \rightarrow 0$ as $t \rightarrow \infty$.
	Now, we obtain the basin of attraction of $\mb{q}^*_k$.
	\emph{(i)} It is straightforward to show that if $x(0)$ falls in the basin of attraction $q^*_k$, the abstract dynamics will converge to an arbitrary small neighborhood of $q^*_k$ in a finite time.
	Now, depending on the types of $q^*_k$, two cases may happen: \textit{Case 1)} $q^*_k = c^{ij}$, i.e., $q^*_k$ is clean-cut, then let $t = t_0$ be the time {instant} at which the abstract state $x$ enters a small enough $\epsilon$-neighborhood of $q^*_k$, such that  for $l \leq i$ or  $l \geq \max \{ \p + \p'-j+1, \p+1\}$, we have $\dot{x}_l = \rho_l -x_l$ and for the remaining $l$'s, we have $\dot{x}_l =  -x_l$.
	Then for $t > t_0$, the asymptotic value of $x_l$ for $l \in [\p + \p']$ is the same as {the $l$-th entry}  of  $\mb{q}^*_k$.
	Moreover, each $x_l$ approaches $c^{ij}_l$ exponentially.
	As a result, for $t > t_0$ we have
	$\vert \x(t) - \mb{q}^*_k\vert < C \exp(-{(}t{-t_0})) $, for some constant $C$.
	\textit{Case 2)} $q^*_k = a^{ij}$, i.e., $q^*_k$ is anticoordinator-driven.
	Then, a time moment similar to $t_0$ defined in \textit{Case 1} exists such that
	for $l \in [\p + \p']-\{i\}$,  $\vert x_l(t) - a^{ij}_l\vert < C \exp(-{(}t{-t_0})) $, for some constant $C$ and $t \geq t_0$.
	Therefore, {in view of \eqref{eq:eq1-proofTh2}}, it is straightforward to show that for any arbitrary small $\epsilon_1$, {there exists}  some finite time $t_1 + t_0$, {such that} 
	$\vert x_i(t) -  {q}^*_{k_i}\vert < \epsilon + (\p + \p' -1)\epsilon_1$ for $t > t_1+t_0$ resulting in $\vert \x(t) - \mb{q}^*_k \vert < (\epsilon + (\p + \p' -1)\epsilon_1)^2 + (\p + \p' -1 )\epsilon_1^2$. 
	Having this and \emph{(i)} we conclude that 
	in both cases 1 and 2, for any  arbitrary $\varepsilon$-neighborhood of $\mb{q}^*_k$, time $T>0$ can be found such that for all $t \geq T$, $\vert \x(t) - \mb{q}^*_k\vert < \varepsilon$.
	As a result, the basin of attraction of point $\bm{q}^*_{k}$ equals $\{ \x \in \bm{\X}_{ss} \vert x \in (q^*_{k-1,k}, q^*_{k,k+1})\}$.
	The basins of attraction of $\mb{q}^*_1$ and $\mb{q}^*_{\mathtt{Q}}$ can be obtained similarly.
	\textit{Part 2)} When $x(0) = q^*_{k-1,k}$, {where} $q^*_{k-1,k} = o^{ij}$,
	two cases may happen.
	\textit{Case 1)} The abstract state $x$ remains at $q^*_{k-1,k}$.
	Then for $l \neq j$, $\V_l(\x)$ is a singleton and $\dot{x}_l$ is  $\rho_l - x_l$  for $l \leq i$, or for $l > \p$ and $\p + \p'+1-j < l$, and $\dot{x}_l$  is  $- x_l$ for the remaining $l$'s.
	As a result, for $l \in [\p + \p']-\{j\}$, the final value of  $x_l$ is $o^{ij}_l$ and consequently $x_l$ will approach  $o^{ij}_l$ exponentially. 
	Then for any arbitrary small $\varepsilon$, in a finite time $T$ we have $|x_l(t) - o^{ij}_l| < \varepsilon$  for $t \geq T$ and subsequently $\vert x_j(t) - o^{ij}_j \vert < \vert x(t) - o^{ij} \vert + \Sigma_{l \neq j}  \vert (x_l(t) - o^{ij}_l)\vert \leq (\p + \p' -1)\varepsilon$ resulting in
	$\vert \x(t) - \mb{o}^{ij} \vert \leq \varepsilon \sqrt{(\p + \p')(\p + \p'-1)}$ for $t \geq T$. 
	Hence,
	$\mb{o}^{ij}$ is the limit point. 
	\textit{Case 2)} The abstract state $x$ leaves  $q^*_{k-1,k}$. Then, the provided reasoning for the case $x \neq q^*_{k-1,k}$  will be applicable resulting in $ \mb{q}^*_{k-1}$ or $\mb{q}^*_{k}$ to be the limit point.
	Overall, the limit set of $\mb{q}^*_{k-1,k}$ is $\{ \mb{q}^*_{k-1}, \mb{q}^*_{k-1,k}, \mb{q}^*_{k}\}$. 
\end{proof}
\subsection{Proof of Proposition \ref{prop1}}
\begin{proof}
	Based on \Cref{thm:ROA}, the set $ \mathbf{Q}^c \cup \mathbf{Q}^a \cup \mathbf{Q}^o$ contains the limit sets of all $\x$ in $\bm{\X}_{ss}$.
	To find the corresponding Birkhoff center, we need to find  the closure of the set of recurrent points.
	Based on \Cref{def_Brikhoff}, the set of recurrent points is also equal $ \mathbf{Q}^c \cup \mathbf{Q}^a \cup \mathbf{Q}^o$.
	As a results, the Birkhoff center of the population dynamics will be equal to  $ \mathbf{Q}^c \cup \mathbf{Q}^a \cup \mathbf{Q}^o$.
\end{proof}
\vspace{-20pt}
\subsection{Proof of \Cref{thm:2}}
\begin{proof}
	\emph{(i)} Based on \Cref{propMarkov}, 
	the sequence  $\langle \x^{\N}(k) \rangle$ which evolves according to discrete population dynamics \eqref{populationDynamicsDiscrete},
	is a realization of population dynamics Markov chain $\langle \mb{X}^{\frac{1}{\N}}_k \rangle_k$ with transition probabilities formulated in \eqref{eq:markov}.
	\emph{(ii)} Based on \Cref{lemGSAPS}, the collection of   
	$\langle \langle \mb{X}^{\frac{1}{\N}}_k\rangle_{k} \rangle_{\N \in \mathcal{N}}$
	is a GSAP for {the good upper semicontinuous differential inclusion} \eqref{eq: type_mixed}.
	\emph{({iii})}
	The transition probabilities of the population dynamics Markov chain \eqref{eq:markov} are homogeneous and the state space over which the Markov chain is defined is finite {for each $\N$}.
	This results in the existence of invariant probability measures ${\mu}^{\frac{1}{\N}}$ for Markov chain $\langle\mathbf{X}^{\frac{1}{\N}}_k\rangle_k$ \cite{lalley}.
	\emph{({iv})}  \Cref{prop1} specifies the Birkhoff centers of \eqref{eq: type_mixed}.
	{Consider the sequence $\langle \frac{1}{\N}\rangle_{\N \in \mathcal{N}}$ as a vanishing sequence.}
	\Cref{thm:implicationOFSandholm} and \emph{(i)}-\emph{({iv})} together complete the proof.
\end{proof}
\vspace{-20pt}
\subsection{Proof of \Cref{cor_fluctuationsDoNotScaleWithN_discretePopulationDynamics_2}}
\begin{proof}
	Assume on the contrary that there exists some $N>0$, such that for every population size $\N\in\mathcal{N}_{\geq N}$, where $\mathcal{N}_{\geq N} \subseteq \{\N|\N\in\mathcal{N},\N\geq N\}$ and 
	unbounded,
	the evolution of the proportion of $\1$-players admits some perpetual fluctuation set ${\mathcal{Y}}^\N\subset (0,1) \cap \frac{1}{\N}\mathbb{Z}$ of amplitude $y^\N$, 
	where $y^\N\geq L $ for some positive value $L$, 
	\begin{equation} \label{eq_cor_12}
	\exists L>0 \forall \N\in\mathcal{N}_{\geq N} \qquad y^\N\geq L
	\end{equation}
	and that for every $K\in(0,L]$ there exists some small enough $\delta>0$ such that in the long run the proportion of $\1$-players visits every connected subset of amplitude $K$ of ${\mathcal{Y}}^\N$ (for some $K>0$) with probability greater than $\delta$:
	\begin{align} 
	\forall K>0\exists \delta>0 \forall & \mathcal{Z}^\N
	\subset[\min\mathcal{Y}^\N,\max\mathcal{Y}^\N] \text{ with } |\mathcal{Z}^\N| = K \nonumber\\
	&  \text{ and $\mathcal{Z}^\N$ connected,}  \qquad
	\mu_x^{\frac{1}{\N}}(\mathcal{Z}^\N)>\delta, \label{eq_cor_11}
	\end{align}
	where $\mu_x^{\frac{1}{\N}}$ is 
	defined as $\mu^{\frac{1}{\N}} \circ  f^{-1}$, where $ f: \mathbb{R}^{\p + \p'} \to \mathbb{R},$ and $ f(\x) = \sum_{p=1}^\p x_p.$
	Let $\mathtt{n}$ denote the total number of abstract equilibrium points, and let $q_p$, for $p \in [\mathtt{n}]$, denote the $p^\text{th}$ equilibrium point of the abstract dynamics.
	Given \Cref{thm:2}, it follows that for every vanishing sequence $\langle \frac{1}{n} \rangle = \frac{1}{n_0}, \frac{1}{n_1}, \ldots$ where $n_i \in \mathcal{N}$ for $i \in \mathbb{Z}_{\geq 0}$ 
	$ \lim_{i\to\infty}   \mu_x^{\frac{1}{n_i}} (\bigcup_{p=1}^{\mathtt{n}}{\mathcal{B}}({q_p},\epsilon/2)) = 1,
	$
	for a small enough positive $\epsilon < L/\mathtt{n}$.
	Equivalently, 
	for every $\delta>0$,
	there exists $i^*>0$ such that for all $i > i^*$, 
	$  \smash{\mu_x^{\frac{1}{n_i}} (\bigcup_{p=1}^{\mathtt{n}}{\mathcal{B}}({q_p},\epsilon/2)) > 1- \delta}$
	and, in turn, $\smash{\mu_x^{\frac{1}{n_i}}([\min\mathcal{Y}^\N,\max\mathcal{Y}^\N]-\bigcup_{p=1}^{\mathtt{n}}{\mathcal{B}}({q_p},\epsilon/2)) < \delta}$.
	On the other hand, in view of \eqref{eq_cor_12}, $[\min\mathcal{Y}^\N,\max\mathcal{Y}^\N]-\bigcup_{p=1}^{\mathtt{n}}{\mathcal{B}}({q_p},\epsilon/2)$ contains a connected interval of amplitude at least $ (L-\mathtt{n}\epsilon)/(\mathtt{n}+1)$, 
	This contradicts \eqref{eq_cor_11} for $K=(L-\mathtt{n}\epsilon)/(\mathtt{n}+1)$.
\end{proof}
\subsection{\Cref{lem:discretedynamics}}
\begin{lem} \label{lem:discretedynamics}
	{
		The population dynamics governed by update rules \eqref{eq: scor} and \eqref{eq: santi} and the activation sequence are equivalent to the dynamics defined in  \eqref{populationDynamicsDiscrete}.
	}
\end{lem}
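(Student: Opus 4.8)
The plan is to show that the two descriptions generate the same sequence of population states under the same realization of the activation sequence, by matching both the stochastic structure and the deterministic state update. First I would observe that the population state $\x^{\N}$ records only the number of $\1$-players in each subpopulation, and that activating agent $i$ affects $\x^{\N}$ through $i$ only via the pair $(p,\mathtt{s}_i)$, where $p$ is $i$'s subpopulation and $\mathtt{s}_i$ her current strategy: since every agent in subpopulation $p$ shares the same threshold, and update rules \eqref{eq: scor} and \eqref{eq: santi} depend on $i$ only through her threshold and through $x^{\N}_{-i}$, the identity of $i$ within the subpopulation is irrelevant. Hence it suffices to track the subpopulation $P_k$ and the current strategy $S_k$ of the active agent. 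Because each of the $\N$ agents is activated with probability $\frac{1}{\N}$ and subpopulation $p$ contains $\N\rho_p$ agents, of which $\N x^{\N}_p$ currently play $\1$, we get $\mathbb{P}[P_k=p]=\rho_p$ and $\mathbb{P}[S_k=1\mid P_k=p]=x^{\N}_p/\rho_p$, which matches \Cref{def:discreteDynamics}.

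The crucial step is to verify that $s^*(p,\x^{\N},S_k)$ reproduces exactly the individual threshold rules, including the ``excluding herself'' adjustment. For an active anticoordinator $i$ in subpopulation $p\leq\p$ who currently plays $\2$, her own strategy contributes nothing to the $\1$-count, so $x^{\N}_{-i}=x^{\N}$, and \eqref{eq: santi} prescribes $\1$ precisely when $x^{\N}\leq\tau_p$; this is the $S_k=2$ branch of $s^*$. If she currently plays $\1$, then removing her contribution gives $x^{\N}_{-i}=x^{\N}-\frac{1}{\N}$, so she prefers $\1$ exactly when $x^{\N}-\frac{1}{\N}\leq\tau_p$, i.e. $x^{\N}\leq\tau_p+\frac{1}{\N}$; this is the $S_k=1$ branch, and it accounts for the $+\frac{1}{\N}$ correction in the definition of $s^*$. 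The coordinator case $p>\p$ is identical after replacing ``$\leq\tau_p$'' by ``$\geq\tau'_{\p+\p'+1-p}$'', and the remaining branches follow by negation.

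Finally I would confirm the state update. With $S_k\in\{1,2\}$ encoding the active agent's current strategy and $s^*\in\{1,2\}$ her preferred strategy, the scalar $S_k-s^*$ equals $0$ when she keeps her strategy, $-1$ when she switches $\1\to\2$, and $+1$ when she switches $\2\to\1$; multiplying by $\frac{1}{\N}\mb{e}_{P_k}$ records exactly the change in the $\1$-proportion of subpopulation $P_k$ while leaving every other coordinate fixed, which is precisely \eqref{populationDynamicsDiscrete}. Assembling these three matches yields the claimed equivalence. I expect the main obstacle to be the careful bookkeeping of the ``excluding herself'' adjustment: one must check, case by case and for both agent types and both current strategies, that the $+\frac{1}{\N}$ shift in $s^*$ correctly converts the individual condition on $x^{\N}_{-i}$ into a condition on the aggregate state $x^{\N}$, since a single sign or off-by-$\frac{1}{\N}$ error there would break the equivalence.
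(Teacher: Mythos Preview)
Your proposal is correct and follows essentially the same approach as the paper's proof: both reduce the agent-level update to the pair $(P_k,S_k)$, verify the distributions $\mathbb{P}[P_k=p]=\rho_p$ and $\mathbb{P}[S_k=1\mid P_k=p]=x^{\N}_p/\rho_p$, and check that $S_k-s^*$ encodes the correct $\pm\frac{1}{\N}$ increment. Your treatment is in fact more explicit than the paper's, since you spell out the ``excluding herself'' bookkeeping that produces the $+\frac{1}{\N}$ shift in $s^*$, whereas the paper simply asserts that $s^*$ is obtained from \eqref{eq: scor} and \eqref{eq: santi}.
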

\begin{proof}
	{
		As the update mechanism is asynchronous, the population states at two consecutive time indices $\x^\N(k)$ and $\x^\N(k+1)$ can differ by at most $\pm\frac{1}{\N}$.
		Which element of the population state $\x^\N(k+1)$ may differ from that of $\x^\N(k)$ is determined by the random variable ${P}_k$ which is the active subpopulation at time index $k$, and can take values in $[\p + \p']$.
		The probability of ${P}_k = p$ equals $\rho_p$.
		The amount of change in the ${P}_k$th element of $\x^\N(k)$ depends on the current strategy of the active agent and {her} preferred strategy.
		The former is captured by the random variable ${S}_k$ taking values $1$ and $2$, and the latter is determined by the function $s^*({P}_k, \x^\N(k), {S_k})$ which is obtained from \eqref{eq: scor} and \eqref{eq: santi}.
		The probability of ${S}_k = 1$ (resp. ${S}_k = 2$) given the active subpopulation is $p$ is  $x^\N_p/\rho_p$ (resp. $1-x^\N_p/\rho_p$).
		If the current strategy and the preferred strategy are equal, the population states at time indices $k$ and $k+1$ are the same, and if the current strategy is $\2$ and the preferred one is $\1$, the ${P}_k$th element of $\x^\N(k)$ increases by $\frac{1}{\N}$, and the ${P}_k$th element of $\x^\N(k)$ decreases by $\frac{1}{\N}$ otherwise.
		This update rule is equivalent to \eqref{populationDynamicsDiscrete}.
	}
\end{proof}
\begin{appendixCorollary} \label{lem_other_tie_breaking}
	Assume that an active agent $i$ 
	chooses her strategy uniformly at random when
	$x_{-i}^\N(k) = \tau(i)$. 
	Then, 
	\Cref{lemGSAPS} remains true.
\end{appendixCorollary}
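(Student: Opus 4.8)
The plan is to reuse, almost verbatim, the four-step verification from the proof of \Cref{lemGSAPS} and to isolate the single place where the new tie-breaking rule enters. Switching the tie-breaker from ``prefer $\1$'' to ``flip a fair coin'' changes the preferred strategy, and hence the transition probabilities \eqref{eq:markov}, only at states whose abstract value $\bm 1^\top\x$ equals the threshold of the active subpopulation; everywhere else the dynamics are untouched. Consequently Conditions 1, 2, and 4 of \Cref{defGSAP} go through with no change. Condition 1 holds because the increments are still $\pm\tfrac1\N\mb{e}_{P_k}$ and the state space $\bm{\X}_{ss}\cap\tfrac1\N\mathbb{Z}^{\p+\p'}$ is the same. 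Condition 2 holds with the same centered definition $\mb{U}^{\frac1\N}_{k+1}=\N\big(\mb{X}^{\frac1\N}_{k+1}-\mb{X}^{\frac1\N}_k-\mathbb{E}[\mb{X}^{\frac1\N}_{k+1}-\mb{X}^{\frac1\N}_k\mid\mb{X}^{\frac1\N}_k]\big)$ and $\bm{\mathcal{V}}^{\frac1\N}(\x)=\{\bm{\nu}^{\frac1\N}(\x)\}$, where $\bm{\nu}^{\frac1\N}$ is the expected per-unit-time increment. Condition 4 holds because the fresh coin flip at each step is independent of the past, so $\langle\mb{U}^{\frac1\N}_k\rangle_k$ is still a uniformly bounded martingale difference sequence and \cite[Proposition 2.3]{roth2013stochastic} applies unchanged.

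The only real work is Condition 3, and there the whole point is what $\bm{\nu}^{\frac1\N}$ does at a threshold state. I would first invoke \Cref{lem_minimumDistanceBtwThresholds} (a purely geometric statement about $\bm{\X}_{ss}$, hence insensitive to the tie-breaker) to reduce, for $\tfrac1\N\le\epsilon_0<\min\{d,\delta\}$, to the case $\bm 1^\top\x\in[\tau_l,\tau_l+\tfrac1\N)$ for a unique index $l$. The decisive observation is that the benchmark increment $\nu^{\frac1\N}_l(\x)$ always lies in the interval that \eqref{eq: type_mixed} assigns at the threshold: any expected switching flow in coordinate $l$ is bounded below by the all-agents-to-$\2$ flow $-x_l$ and above by the all-agents-to-$\1$ flow $\rho_l-x_l$, so $\nu^{\frac1\N}_l(\x)\in[-x_l,\rho_l-x_l]=\V_l(\y)$ for any $\y$ with $\bm 1^\top\y=\tau_l$. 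Explicitly, conditioning on the active agent's current strategy and using the fair coin gives the midpoint value
\[
\nu^{\frac1\N}_l(\x)=\tfrac{\rho_l}{2}-x_l,
\]
which is indeed an interior point of $\V_l(\y)$; but the robust bound above is what makes the argument work regardless of the exact bookkeeping.

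With this in hand, Condition 3 follows by the same construction as in the proof of \Cref{lemGSAPS}: set $\varepsilon_x=\bm 1^\top\x-\tau_l<\delta$ and form $\y$ by lowering one non-benchmark coordinate $j\neq l$ by $\varepsilon_x$, so that $\bm 1^\top\y=\tau_l$ and $|\x-\y|=\varepsilon_x<\delta$. For $p\neq l$ the increment $\nu^{\frac1\N}_p(\x)$ still equals the unique element of $\V_p(\x)$ (the tie-breaker is irrelevant off the threshold), while for $p=l$ I now select the member $v_l(\y):=\nu^{\frac1\N}_l(\x)\in\V_l(\y)$ instead of the endpoint used in the original proof. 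Then $z_l-v_l=0$, the only nonzero discrepancy comes from coordinate $j$, and $|\bm z-\bm v|=|\bm{\nu}^{\frac1\N}(\x)-\bm v(\y)|=\varepsilon_x<\delta$, giving $\inf_{\bm v\in\bm{\mathcal{V}}(\y)}|\bm z-\bm v|<\delta$, i.e.\ \eqref{eq_thirdCondition}. I expect the only delicate point to be the ``excluding herself'' $\tfrac1\N$ offset, which determines precisely at which states the coin flip is triggered for $\1$- versus $\2$-players; but because $\V_l$ supplies the entire interval $[-x_l,\rho_l-x_l]$ at the threshold, every such expected increment is absorbed exactly, so the offset never threatens the estimate. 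This confirms \Cref{lemGSAPS} under the random tie-breaking rule, and therefore the paper's main results carry over.
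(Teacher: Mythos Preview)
Your argument is correct, and the overall structure---Conditions 1, 2, 4 unchanged, Condition 3 via the inclusion $\nu^{\frac1\N}_l(\x)\in[-x_l,\rho_l-x_l]$---matches the paper. Two points are worth flagging. First, your explicit midpoint formula $\nu^{\frac1\N}_l(\x)=\tfrac{\rho_l}{2}-x_l$ is not what one actually gets: because of the ``excluding herself'' offset, at $\bm 1^\top\x=\tau_l$ an $\1$-player and a $\2$-player in subpopulation $l$ see different proportions, and the paper computes $\nu^{\frac1\N}_l(\x)=\tfrac{\rho_l}{2}-\tfrac{x_l}{2}$ for an anticoordinating benchmark and $\tfrac{\rho_l}{2}-\tfrac{3x_l}{2}$ for a coordinating one. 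You were right to anticipate this with the robust bound, which is what actually carries the step. Second, the paper takes a shorter route for Condition 3: since at the exact threshold state the expected increment already lies in $\bm{\V}(\x)$ (the interval $[-x_l,\rho_l-x_l]$ is available there), one may simply take $\y=\x$ in \eqref{eq_thirdCondition} and avoid the shift construction you imported from the proof of \Cref{lemGSAPS}. Your construction is still valid and has the mild advantage of treating all $\x$ with $\bm 1^\top\x\in[\tau_l,\tau_l+\tfrac1\N)$ uniformly, whereas the paper handles $\bm 1^\top\x=\tau_l$ explicitly and defers the rest to ``similarly.''
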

\begin{proof}
	It is easy to show that Conditions 1, 2, and 4 in Definition 4 are satisfied.
	Now, we investigate Condition 3.
	Function $s^*(\cdot, \cdot, \cdot)$ changes to the random variable $S^*$ with the following distribution
	\begin{alignat*}{2}
	\mathbb{P}[&S^*(p,\x^{\N},s) = 1 \mid \x^{\N}] =\\
	&\begin{cases}
	\frac{1}{2} & \text{if } p \leq \p, x^{\N} = \tau_p, \text{ and } s = 2, \\
	& \text{or } p \leq \p, x^{\N} = \tau_p + \frac{1}{\N}, \text{ and } s = 1, \\
	& \text{if } p > \p,  {x}^{\N} = \tau'_{\p+\p'+1-p},\text{ and } s = 2, \\
	& \text{if } p > \p,  {x}^{\N} = \tau'_{\p+\p'+1-p} + \frac{1}{\N} ,\text{ and } s = 1, \\
	1 & \text{if } p \leq \p \text{ and } x^{\N} < \tau_p, \\
	& \text{if } p > \p, x^{\N} > \tau'_{\p+\p'+1-p},\text{ and } s = 2, \\
	& \text{if } p > \p, x^{\N} > \tau'_{\p+\p'+1-p} + \frac{1}{\N},\text{ and } s = 1,\\ 
	0 & \text{if } p \leq \p,  x^{\N} > \tau_p,\text{ and } s = 2, \\
	& \text{if } p > \p \text{ and }  x^{\N} < \tau'_{\p+\p'+1-p}, \\
	& \text{if } p \leq \p,  x^{\N} > \tau_p + \frac{1}{\N},\text{ and } s = 1, 
	\end{cases}
	\end{alignat*}
	\begin{alignat*}{2}
	\mathbb{P}[&S^*(p,\x^{\N},s) = 2 \mid \x^{\N}]\\
	&\begin{cases}
	\frac{1}{2} & \text{if } p \leq \p, x^{\N} = \tau_p,\text{ and } s = 2, \\
	& \text{if } p > \p,  x^{\N} = \tau'_{\p+\p'+1-p},\text{ and } s = 2, \\
	& \text{if } p \leq \p, x^{\N} = \tau_p + \frac{1}{\N},\text{ and } s = 1, \\
	& \text{if } p > \p,  x^{\N} = \tau'_{\p+\p'+1-p} + \frac{1}{\N},\text{ and } s = 1,\\
	0 & \text{if } p \leq \p \text{ and } x^{\N} < \tau_p, \\
	& \text{if } p > \p,  x^{\N} > \tau'_{\p+\p'+1-p} + \frac{1}{\N},\text{ and } s = 1,\\
	& \text{if } p > \p,  x^{\N} > \tau'_{\p+\p'+1-p},\text{ and } s = 2, \\
	1 & \text{if } p \leq \p, x^{\N} > \tau_p,\text{ and } s = 2, \\
	& \text{if } p > \p \text{ and }  x^{\N} < \tau'_{\p+\p'+1-p},\\
	& \text{if } p \leq \p, x^{\N} > \tau_p + \frac{1}{\N},\text{ and } s = 1.\\
	\end{cases}
	\end{alignat*}
	Following the steps taken in the proof of \Cref{lemGSAPS}, the expected change in the population state
	$\bm{\nu}^{\frac{1}{\N}}(\x)$ can be obtained, where for all $\x \in \bm{\mathcal{X}}_{ss}$ satisfying $\nexists \tau_p (\bm{1}^\top\x \in [\tau_p,\tau_p + \frac{1}{\N}])$ is  independent of the tie-breaking rule. 
	Out of those states satisfying $\exists \tau_p (\bm{1}^\top\x \in [\tau_p,\tau_p + \frac{1}{\N}])$, we investigate the state $\x$ with $\bm 1^\top \x = \tau_p$ for some $p$; the other values of $\x$ can be handled similarly.
	The $p^{\text{th}}$ element of $\bm{\nu}^{\frac{1}{\N}}(\x)$, which corresponds to subpopulation $p$, at $\x$ with $\bm 1^\top \x = \tau_p$ equals 
	$
	\N \left( \frac{1}{\N} (\rho_p - x_p)\frac{1}{2} -\frac{1}{\N}  x_p  \times \big(0 \big)\right) = \frac{\rho_p}{2} - \frac{x_p}{2}, 
	$
	if $p$ corresponds to an anticoordinating subpopulation and 
	$
	\N \left( \frac{1}{\N} (\rho_p - x_p)\frac{1}{2} -\frac{1}{\N}  x_p  \big(1\big)\right) = \frac{\rho_p}{2} - \frac{3}{2}x_p, 
	$
	otherwise.
	In both cases the expected change in subpopulation $p$ belongs to $\mathcal{V}_p(\x) = [-x_p, \rho_p - x_p]$ (equation \eqref{eq: type_mixed}).
	For elements $l\neq p$ of $\bm{\nu}^{\frac{1}{\N}}(\x)$, the expected change is equal to the unique member of the $\V_l(\x)$.
	This implies that equation \eqref{eq_thirdCondition} is satisfied with $\y = \x$.
	Hence, Condition 3 holds, and this completes the proof.
\end{proof}
\end{document}